\providecommand{\U}[1]{\protect\rule{.1in}{.1in}}
\newtheorem{theorem}{Theorem}
\newtheorem{corollary}[theorem]{Corollary}
\newtheorem{lemma}[theorem]{Lemma}
\newtheorem{proposition}[theorem]{Proposition}
\newtheorem{remark}[theorem]{Remark}
\newenvironment{proof}[1][Proof]{\noindent\textbf{#1.} }{\ \rule{0.5em}{0.5em}}
\definecolor{nblue}{rgb}{0.2,0.2,0.7}
\definecolor{ngreen}{rgb}{0.2,0.6,0.2}
\definecolor{nred}{rgb}{0.7,0.2,0.2}
\definecolor{nblack}{rgb}{0,0,0}
\begin{document}

\title{A generalization of Schur-Weyl duality with applications in quantum estimation}
\author{Iman Marvian}
\affiliation{Perimeter Institute for Theoretical Physics,  Waterloo,
Ontario, Canada N2L 2Y5}
\affiliation{Institute for Quantum Computing, University of Waterloo, Waterloo, Ontario, Canada N2L 3G1}
\affiliation{Department of Physics and Astronomy, Center for Quantum Information Science and Technology, University of Southern California, Los Angeles, CA 90089}

\author{Robert W. Spekkens}
\affiliation{Perimeter Institute for Theoretical Physics,  Waterloo,
Ontario, Canada N2L 2Y5}

\date{\today}
\begin{abstract}
Schur-Weyl duality is a  powerful tool in representation theory which has many applications to quantum information theory.  We provide a generalization of this duality and demonstrate some of its applications.  In particular, we use it to develop a general framework for the study of a family of quantum estimation problems wherein one is given $n$ copies of an unknown quantum state according to some prior and the goal is to estimate certain parameters of the given state. In particular, we are interested to know whether collective measurements are useful and if so to find an upper bound on the amount of entanglement which is required to achieve the optimal estimation.  In the case of pure states, we show that commutativity of the set of observables that define the estimation problem implies the sufficiency of unentangled measurements.
\end{abstract}
\maketitle
\setstretch{1.2}

\tableofcontents

\section{Introduction}

Schur-Weyl duality is a duality between two subgroups of the general linear group on $(\mathbb{C}^d)^{\otimes n}$: the \emph{collective action} of the unitary group $\text{U}(d)$, and the \emph{canonical representation} of the group $\mathcal{S}_n$ of permutations of the $n$ systems (See section \ref{Chapt-per-II} for precise definitions and the statement of the duality).  It asserts that there is a one-to-one map between the irreducible representations of the two groups, and that their product is multiplicity-free.  Alternatively, one can characterize the duality as the fact that the complex algebra spanned by one of these groups is the commutant of the one spanned by the other. The generalization we derive here is also between two subgroups of the general linear group on $(\mathbb{C}^d)^{\otimes n}$.  One is the collective action of a subgroup $G$ of $\text{U}(d)$, where $G$ has a particular property, namely, that it is equal to the centralizer of its centralizer in $\text{U}(d)$. We call such a group a \emph{gauge} group (for reasons that will be explained shortly).  The other is the group closure of the local action of $G'$ (the centralizer of $G$) and the canonical representation of the permutation group $\mathcal{S}_n$.  Schur-Weyl duality is included as the special case where $G=\text{U}(d)$.

Just as Schur-Weyl duality has many applications to quantum information theory and quantum algorithms (see \cite{Goodman-Wallach} and \cite{Aram} for a review), so too does this generalization.  This article will explore some of these applications.  

One such application is to finding noiseless subsystems (this is considered in Sec.~\ref{sec:noiseless}). However, most of the applications will rely on a particular consequence which connects global symmetries with local symmetries, considered in Sec.~\ref{sec:Promote}.

For $M$ an arbitrary operator on $(\mathbb{C}^d)^{\otimes n}$, we say that $M$ has \emph{global symmetry} with respect to the subgroup $H$ of $\text{U}(d)$ if it is invariant under  the collective action of $H$, i.e.,
\begin{equation}\label{global}
\forall V\in H:\  V^{\otimes n} M {V^{\dag}}^{\otimes n} =  M,
\end{equation}
and we say that $M$ has \emph{local symmetry} with respect to $H$ if  it is invariant under the local action of $H$, i.e.,
\begin{align}\label{local}
 \forall V\in H\ \ \text{and}\ \  \forall k: 0\le k\le n-1,  
\ \ \ (I^{\otimes k} \otimes V \otimes I^{\otimes (n-k-1)} ) M  (I^{\otimes k} \otimes V^{\dag} \otimes I^{\otimes (n-k-1)} )=  M
\end{align}
Note that any operator which has local symmetry with respect to $H$ automatically also has global symmetry with respect to $H$ but the converse implication does not necessarily hold. Indeed, generally the condition of local symmetry is much stronger than that of global symmetry.

The duality implies that within the totally symmetric and the totally antisymmetric subspaces of $(\mathbb{C}^d)^{\otimes n}$, the collective action of a gauge group $G$ is dual to the collective action of $G'$, its centralizer in $U(d)$.  This in turn implies that if an operator is confined to the totally symmetric or totally antisymmetric subspace and has \emph{global} symmetry with respect to the gauge group $G$, then it must also have \emph{local} symmetry  with respect to $G$. In other words, our generalization of Schur-Weyl duality allows in some circumstances for a global symmetry to be promoted to a local symmetry. 

The main application we consider is the problem of how to best estimate parameters describing a quantum state given multiple copies of the state (this is considered in Sec.~\ref{sec:estimation}). 
 The parameters might consist of expectation values of some observables, or they might encode a decision about the state, such as whether a given expectation value is positive or not.  In particular, we seek to determine under what circumstances it is sufficient to do independent measurements on each copy and in what circumstances more complicated measurements, for instance, using entanglement, are required.
(As it turns out, there are many circumstances wherein entangled measurements do help.)


A very simple example of such a multi-copy estimation problem is the one considered by Hayashi \emph{et al.} \cite{HHH}. A pure state is chosen uniformly according to the Haar measure, and $n$ copies of the state are prepared.  The goal is to estimate the expectation value of an observable $A$ for the state.  Hayashi \emph{et al.} have shown that for a squared-error figure of merit, the optimal estimation scheme is to simply measure the observable $A$ separately on each system.  Our generalization of Schur-Weyl duality can be used to provide a very elementary proof of this result.  It can also be used to simplify the solution of estimation problems that are much more complicated, as we shall show.

The reason we can make use of our generalization of Schur-Weyl duality is that a multi-copy estimation problem can be shown to naturally have a global symmetry for a gauge group.
Measurements with global symmetry relative to a gauge group are described by POVMs all the elements of which have this symmetry.  
In this case, the duality tells us that the global symmetry can be promoted to a local symmetry, so it suffices to consider measurements on the $n$ copies that have \emph{local symmetry} relative to the gauge group.  

We now explain how this promotion of global symmetries to local symmetries immediately leads to the solution of the multi-copy estimation problem considered by Hayashi \emph{et al.}.  The prior is uniform over pure states and the squared error figure of merit only relies on the observable ${A}$ that one is trying to estimate. Consequently, the description of this problem has symmetry with respect to the group of all unitaries which commute with ${A}$ and  it follows that, on the multi-copy system, it suffices to perform measurements that have global symmetry with respect to this group.  But this group is a gauge group, i.e. it is equal to the centralizer of its centralizer in the unitary group, and so by our result, one can promote this global symmetry to a local symmetry.   Finally, noting that all measurement operators that are invariant under the local action of the gauge group must be in the algebra generated by the set
$$\{I^{\otimes k}\otimes A\otimes I^{\otimes (n-k-1)}:\   0\le k\le n-1 \} \cup \{I^{\otimes n}\},$$
it follows that one can simply measure ${A}$ on each copy individually and do classical processing on the outcomes to achieve the optimal estimation.  We also immediately see that even if the figure of merit is not the squared error and the prior over pure states is not uniform, as long as these depend only on ${A}$, then an individual measurement on each copy continues to be optimal.

In more complicated examples, wherein there may be many observables to be estimated, a non-uniform prior over pure states and an arbitrary figure of merit, as long as the problem still has some gauge symmetry we can exploit our result to infer that the optimal measurement on the $n$ copies should have local symmetry with respect to the gauge group.

In particular, if the commutant of the gauge group is a commutative algebra, then it suffices to implement independent measurements of the generator of this algebra on each system.  This occurs if the problem is to estimate a set of commuting observables, and the figure of merit and the prior over pure states can be entirely described in terms of this set of observables (hence, no entanglement is needed). Furthermore, even if the commutant of the problem's gauge group is not a commutative algebra, so that independent measurements are \emph{not} sufficient, nonetheless local symmetry is still a stronger constraint than global symmetry and consequently our result can lead to a bound on how much interaction between the systems is required to achieve the optimal measurement.

We also demonstrate several other generalizations of the basic multi-copy estimation problem -- to cases which include mixed states and to cases where the channel between the source and the estimator can be noisy --
 such that our results still have nontrivial consequences for the optimal measurement.
 
Finally, we demonstrate that our result has applications for estimation problems where the estimator gets only a single copy of the system of interest, and the distinction between global and local symmetries is relative to the partitioning of the system of interest into its components. In particular, we obtain strong constraints on the optimal measurement in the case of a system with \emph{two} components because the permutation group on two systems has only irreducible representations over the symmetric and antisymmetric subspaces and our duality permits an inference from global symmetry to local symmetry within the symmetric and antisymmetric subspaces.  This is considered in Sec.~\ref{sec:bipartite}.

Given that the class of estimation problems for which our results apply is very large, they represent a dramatic expansion, relative to previously known results, in the scope of problems for which we can easily determine the optimal measurement.
Furthermore, in previous results where independent measurements on each copy were shown to be optimal, such as Ref.~\cite{HHH}, the reasoning was rather \emph{ad hoc}.  It was not clear what feature of the estimation problem implied the sufficiency of such measurements.  By contrast, our approach follows a clear methodology -- we are determining the consequences of the gauge symmetries of the estimation problem.  Our results establish a sufficient condition for the optimality of independent measurements, i.e., the lack of any need for adaptive or entangled measurements. It is that the set of single-copy observables that are needed to define the estimation problem form a \emph{commutative} set.  In a slogan, \emph{the commutativity of the observables defining the estimation problem imply the adequacy of independent measurements.}


\section{Preliminaries}\label{Chapt-per-II}

\subsection{Commutant and Centralizer}
For a complex vector space $\mathcal{V}$, define $\text{End}(\mathcal{V})$  to be the set of linear maps from $\mathcal{V}$ to itself (endomorphism). This set has a natural algebraic structure and is called the \emph{full matrix algebra} over $\mathcal{V}$. Any matrix algebra defined on $\mathcal{V}$  is a subalgebra of $\text{End}(\mathcal{V})$. Here, we only consider finite dimensional  vector spaces.

For any vector space $\mathcal{V}$, and any set $\{A_{i}\in \text{End}(\mathcal{V})\}$ we call the set of all operators in $\text{End}(\mathcal{V})$ which commute with $\{A_{i}\}$ the \emph{commutant} of $\{A_{i}\}$ and denote it by $\text{Comm}\{A_{i}\}$. Note that for any arbitrary set $\{A_{i}\in \text{End}(\mathcal{V})\}$, its commutant, i.e. $\text{Comm}\{A_{i}\}$, is an algebra.

Let $\{A_{i}\in \text{End}(\mathcal{V})\}$ be a set of Hermitian operators, i.e. $A_{i}=A^{\dag}_{i}$. Then  it holds that

\begin{equation}\label{com-com}
\text{Comm}\{\text{Comm}\{A_{i}\}\}=\text{Alg}\{A_{i},\mathbb{I}\}
\end{equation}
where by $\text{Alg}\{A_{i},\mathbb{I}\}$ we mean the complex matrix algebra generated by the set $\{A_{i}\}$ and $\mathbb{I}$ (the identity operator on $\mathcal{V}$).  Any such complex matrix algebra which includes the identity operator and is closed under  adjoint ($\dag$) is called a \emph{finite dimensional von Neumann algebra}.  Note that Eq.(\ref{com-com}) means that for any finite dimensional von Neumann algebra $\mathcal{A}$,
\begin{equation} \label{com-com2}
\text{Comm}\{\text{Comm}\{\mathcal{A}\}\}=\mathcal{A}
\end{equation}
which is the defining property of these algebras.  In this paper we only use this type of algebra and whenever we refer to an object as an algebra we mean a  {finite dimensional von Neumann algebra}. Note that for any subgroup  $H$ of the unitary group, the algebra spanned by $H$, $\text{Alg}\{H\}$, is a von Neumann algebra.

A finite dimensional von Neumann algebra, as a finite dimensional matrix $\text{C}^{\ast}$-algebra, has a unique decomposition up to unitary equivalence of the form,
\begin{equation}\label{decom-algeb}
\mathcal{A}\cong \bigoplus_{J} \left(\mathcal{M}_{m_{J}}\otimes \mathbb{I}_{n_{J}} \right),
\end{equation}
where $\mathcal{M}_{m_{J}}$ is the full matrix algebra $\text{End}(\mathbb{C}^{m_{J}})$ and $\mathbb{I}_{n_{J}}$ is the identity on $\mathbb{C}^{n_{J}}$.   A von Neumann algebra by definition includes identity. Therefore for these algebras $\sum_{J} m_{J}n_{J}$ is equal to the dimension of the vector space.

   For two algebras $\mathcal{A}_{1}\subseteq \text{End}(\mathcal{V}_{1})$ and $\mathcal{A}_{2}\subseteq \text{End}(\mathcal{V}_{2})$ it holds that
\begin{equation}\label{com-tensor-product}
\text{Comm}\{\mathcal{A}_{1}\otimes \mathcal{A}_{2} \}=\text{Comm}\{\mathcal{A}_{1}\}\otimes \text{Comm}\{\mathcal{A}_{2} \}
\end{equation}
this is called \emph{the commutation theorem for tensor products}.

In this paper we will use the notion of \emph{centralizer} in a different way than \emph{commutant}. By the \emph{centralizer} of a subgroup $H_{0}$ in group $H$ we mean the set of all elements of group $H$ which commute with all elements of the subgroup $H_{0}$. We denote the centralizer of $H_{0}$  by $H'_{0}$. Note that the centralizer of any subgroup of a group is also a subgroup of that group.

Let $H$ be a subgroup of $\text{U}(d)$ and $H'$ be its centralizer in this group. Then it holds that
 \begin{equation}\label{comm-alg}
 \text{Comm}\{H\}=\text{Alg}\{H' \}.
\end{equation}

\subsection{Dual reductive pairs and Schur-Weyl duality}

Let $H_{1}$ and $H_{2}$ be two groups of unitaries acting on the complex vector space $\mathcal{V}$ and assume that they commute with each other, that is, $H_{1}$ and $H_{2}$ are each within one another's centralizer in the group of all unitaries on $\mathcal{V}$. Then,  under the action of $H_{1}$ and $H_{2}$, the space $\mathcal{V}$ decomposes as follows
\begin{equation} \label{decom}
\mathcal{V}\cong\sum_{\mu,\nu} \mathcal{M}_{\mu}\otimes \mathcal{N}_{\nu}\otimes  \mathbb{C}^{m_{\mu,\nu}}
\end{equation}
where $H_{1}$ and $H_{2}$  act irreducibly on $ \mathcal{M}_{\mu}$ and $ \mathcal{N}_{\mu}$ respectively, where $\mu$ and $\nu$ label distinct irreducible representations (irreps) of $H_{1}$ and $H_{2}$ respectively and where $m_{\mu,\nu}$ is the multiplicity of the pair of irreps $\mu,\nu$.    Then for some specific commuting groups the following equivalent properties hold \cite{Goodman-Wallach, Aram}.

\begin{proposition} \label{prop-doub-comm}
Let $H_{1}$,$H_{2}$ be two  groups acting on $\mathcal{V}$. Then the following are equivalent
\begin{enumerate}
\item The complex algebra spanned by $H_{1}$ is the commutant of the complex algebra spanned by $H_{2}$ in $\text{End}(\mathcal{V})$ and vice versa.
\item In the decomposition  \ref{decom} each $m_{\mu,\nu}$   is either 0 or 1 and at most one $m_{\mu,\nu}$ is nonzero for each $\mu$ and each $\nu$. \label{cond2}
\end{enumerate}
Any two groups with these properties are called a \emph{dual reductive pair} of subgroups of $\text{GL}(\mathcal{V})$, the general linear group on $\mathcal{V}$.
\end{proposition}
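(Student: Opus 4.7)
The plan is to identify $\mathcal{A}_i := \text{Alg}\{H_i\}$ ($i=1,2$) as explicit block matrix algebras inside $\text{End}(\mathcal{V})$ with respect to the decomposition (\ref{decom}), compute $\text{Comm}(\mathcal{A}_1)$ by Schur's lemma, and then translate the algebraic equality in (1) into the combinatorial statement (2) about the multiplicities $m_{\mu,\nu}$. Since $H_1$ acts irreducibly on each $\mathcal{M}_\mu$, Burnside's theorem (Jacobson density in this finite-dimensional setting) forces $\mathcal{A}_1$ to contain all of $\text{End}(\mathcal{M}_\mu)$ on this factor. Regrouping (\ref{decom}) by $\mu$ as $\mathcal{V}\cong\bigoplus_\mu \mathcal{M}_\mu\otimes W_\mu$ with multiplicity space $W_\mu := \bigoplus_\nu \mathcal{N}_\nu\otimes\mathbb{C}^{m_{\mu,\nu}}$, one obtains $\mathcal{A}_1 = \bigoplus_\mu \text{End}(\mathcal{M}_\mu)\otimes I_{W_\mu}$ and, by the commutation theorem for tensor products (\ref{com-tensor-product}), $\text{Comm}(\mathcal{A}_1) = \bigoplus_\mu I_{\mathcal{M}_\mu}\otimes \text{End}(W_\mu)$. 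Symmetrically, $\mathcal{A}_2$ has the form $\bigoplus_\nu I_{U_\nu}\otimes \text{End}(\mathcal{N}_\nu)$ with $U_\nu := \bigoplus_\mu \mathcal{M}_\mu\otimes\mathbb{C}^{m_{\mu,\nu}}$, and $\mathcal{A}_2\subseteq \text{Comm}(\mathcal{A}_1)$ holds automatically since $H_1$ and $H_2$ commute.

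For (1)$\Rightarrow$(2) I would examine the inclusion $\mathcal{A}_2\hookrightarrow\text{Comm}(\mathcal{A}_1)$ block by block. An element $(b_\nu)_\nu\in\mathcal{A}_2$ projects onto the $\mu$-block $\text{End}(W_\mu)$ as $\bigoplus_\nu I_{\mathbb{C}^{m_{\mu,\nu}}}\otimes b_\nu$, which is block-diagonal in the $\nu$-decomposition of $W_\mu$ and acts trivially on each multiplicity factor $\mathbb{C}^{m_{\mu,\nu}}$. For this image to exhaust $\text{End}(W_\mu)$, at most one $\nu$ can satisfy $m_{\mu,\nu}\neq 0$, and for that $\nu$ one must have $m_{\mu,\nu}=1$; otherwise either off-diagonal $\nu\neq\nu'$ Hom-blocks or nontrivial operators on $\mathbb{C}^{m_{\mu,\nu}}$ would be missing. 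Writing $\nu(\mu)$ for the unique index when it exists, surjectivity of the global map $\mathcal{A}_2\to\bigoplus_\mu \text{End}(\mathcal{N}_{\nu(\mu)})$ further forces $\mu\mapsto\nu(\mu)$ to be injective, since two distinct $\mu$'s sharing the same $\nu$ would be constrained to have identical block images. These requirements together are precisely condition~(\ref{cond2}).

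The converse (2)$\Rightarrow$(1) is then a reversal of this bookkeeping: under (\ref{cond2}), the appearing $\mu$'s and $\nu$'s are in bijection via $\mu\leftrightarrow\nu(\mu)$ with $W_\mu=\mathcal{N}_{\nu(\mu)}$, so the embedding $\mathcal{A}_2\to\text{Comm}(\mathcal{A}_1)$ is a bijection, and running the symmetric argument with $H_1,H_2$ interchanged gives $\mathcal{A}_1 = \text{Comm}(\mathcal{A}_2)$ at the same time. The main obstacle is the block-structural bookkeeping of the forward direction: one must independently rule out off-diagonal $\nu$-Hom blocks inside a single $W_\mu$, multiplicities $m_{\mu,\nu}>1$, and correlated images across different $\mu$'s sharing a common $\nu$, each of which corresponds to a distinct clause of (\ref{cond2}). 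A more compact but equivalent route is to verify the dimension equality $\sum_\mu\bigl(\sum_\nu m_{\mu,\nu}\dim\mathcal{N}_\nu\bigr)^2 = \sum_\nu(\dim\mathcal{N}_\nu)^2$ and then invoke $\mathcal{A}_2\subseteq \text{Comm}(\mathcal{A}_1)$ to upgrade equality of dimensions to equality of algebras.
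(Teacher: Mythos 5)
Your proof is correct, but note that the paper itself does not prove Proposition \ref{prop-doub-comm}: it is stated as a known fact with a pointer to \cite{Goodman-Wallach} and \cite{Aram}, so there is no in-paper argument to compare against. What you give is the standard double-commutant/isotypic-decomposition argument, and it is sound: regrouping Eq.~(\ref{decom}) by $\mu$, identifying $\text{Alg}\{H_{1}\}=\bigoplus_{\mu}\text{End}(\mathcal{M}_{\mu})\otimes I_{W_{\mu}}$, computing its commutant by Schur's lemma, and reading condition \ref{cond2} off the requirement that the $\nu$-block-diagonal, multiplicity-trivial image of $\text{Alg}\{H_{2}\}$ both exhaust each $\text{End}(W_{\mu})$ (ruling out two $\nu$'s per $\mu$ and multiplicities $m_{\mu,\nu}>1$) and not be forced to act identically on distinct $\mu$-blocks sharing a $\nu$ (ruling out two $\mu$'s per $\nu$); the converse is indeed just the observation that under condition \ref{cond2} the embedding becomes an equality, symmetrically in $H_{1},H_{2}$. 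Two minor points are worth making explicit. First, Burnside by itself only gives surjectivity onto each $\text{End}(\mathcal{M}_{\mu})$ separately; to obtain the direct-sum form of $\text{Alg}\{H_{1}\}$ you also need the pairwise inequivalence of the distinct $\mu$'s (equivalently, apply Eq.~(\ref{com-com2}) to the von Neumann algebra $\text{Alg}\{H_{1}\}$ after computing its commutant $\bigoplus_{\mu}I_{\mathcal{M}_{\mu}}\otimes\text{End}(W_{\mu})$ via Schur); your parenthetical appeal to Jacobson density does cover this, but it should be said. Second, in (1)$\Rightarrow$(2) you only use $\text{Alg}\{H_{2}\}=\text{Comm}\{\text{Alg}\{H_{1}\}\}$, which is fine since, both algebras being von Neumann algebras, the two halves of condition (1) are equivalent by Eq.~(\ref{com-com2}), exactly as the paper remarks immediately after the proposition; the dimension-counting shortcut you sketch at the end works for the same reason, because the inclusion $\text{Alg}\{H_{2}\}\subseteq\text{Comm}\{\text{Alg}\{H_{1}\}\}$ is automatic.
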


Note that, using the notation we have introduced before, the first statement can be written as $\text{Alg}\{H_{1}\}=\text{Comm}\{H_{2}\}$ and by virtue of Eq.(\ref{com-com2}) this equation is  equivalent to $\text{Alg}\{H_{2}\}=\text{Comm}\{H_{1}\}$.

Consider the following representation of the unitary group $\text{U}(d)$  on $(\mathbb{C}^{d})^{\otimes n}$:
\begin{equation}
\forall V\in \text{U}(d) :\ \ \textbf{Q}(V) |i_{1}\rangle\otimes\cdots\otimes |i_{n}\rangle =V|i_{1}\rangle\otimes\cdots  \otimes V|i_{n}\rangle
\end{equation}
For a subgroup $H$ of $\text{U}(d)$ we denote the group $\{ \textbf{Q}(V) : V\in H\}$ by $\textbf{Q}(H)$ and we call it the \emph{collective action} of $H$ on $(\mathbb{C}^{d})^{\otimes n}$.  Consider also the \emph{canonical representation} of the symmetric group of degree $n$, $\mathcal{S}_{n}$, on $(\mathbb{C}^{d})^{\otimes n}$
\begin{equation}
\forall s\in \mathcal{S}_{n}:\ \ \textbf{P}(s) |i_{1}\rangle\otimes\cdots\otimes |i_{n}\rangle =|i_{s^{-1}(1)}\rangle\otimes\cdots  \otimes |i_{s^{-1}(n)}\rangle
\end{equation}
We denote the group $\{\textbf{P}( s):s\in \mathcal{S}_{n}\}$  by $\textbf{P}( \mathcal{S}_{n})$.
Then Schur-Weyl duality states that 

\begin{theorem} \label{Thm-Schur-Weyl duality}  (\textbf{Schur-Weyl duality}) The  following two algebras are commutants of one another in $\text{End}((\mathbb{C}^d)^{\otimes n})$
\begin{enumerate}
\item $\text{Alg}\{\textbf{Q}(\text{U}(d))\}$, the complex algebra spanned  by $\textbf{Q}(\text{U}(d))$.
\item $\text{Alg}\left\{\textbf{P}(\mathcal{S}_{n})\right\}$, the complex  algebra spanned by   $ \textbf{P}(\mathcal{S}_{n})$.
\end{enumerate}
In other words, the subgroups $\textbf{Q}(\text{U}(d))$ and  $\textbf{P}(\mathcal{S}_{n})$ are dual reductive pairs in $\text{GL}((\mathbb{C}^{d})^{\otimes n})$.
\end{theorem}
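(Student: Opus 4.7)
The plan is to verify the criterion of Proposition~\ref{prop-doub-comm}. The two groups already commute element-wise, since $V^{\otimes n}$ acts identically on every factor and therefore commutes with any permutation operator $\textbf{P}(s)$. It then suffices to establish the single equality $\text{Alg}\{\textbf{Q}(\text{U}(d))\}=\text{Comm}\{\textbf{P}(\mathcal{S}_{n})\}$; the dual equality follows automatically from Eq.~(\ref{com-com2}).

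I would compute the right-hand side first. Under the canonical isomorphism $\text{End}((\mathbb{C}^{d})^{\otimes n})\cong \text{End}(\mathbb{C}^{d})^{\otimes n}$, conjugation by $\textbf{P}(s)$ becomes the permutation action of $\mathcal{S}_{n}$ on the $n$ tensor factors, so the commutant is exactly the $\mathcal{S}_{n}$-symmetric subspace of $\text{End}(\mathbb{C}^{d})^{\otimes n}$. The standard polarization identity for symmetric tensors then identifies this subspace with the complex linear span of $\{A^{\otimes n}:A\in \text{End}(\mathbb{C}^{d})\}$.

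For the left-hand side, the identity $(VW)^{\otimes n}=V^{\otimes n}W^{\otimes n}$ shows that $\{V^{\otimes n}:V\in \text{U}(d)\}$ is closed under products, so its complex linear span is already an algebra and coincides with $\text{Alg}\{\textbf{Q}(\text{U}(d))\}$. What remains is the span equality
\begin{equation*}
\text{span}_{\mathbb{C}}\{A^{\otimes n}:A\in \text{End}(\mathbb{C}^{d})\}=\text{span}_{\mathbb{C}}\{V^{\otimes n}:V\in \text{U}(d)\}.
\end{equation*}
The inclusion $\supseteq$ is trivial, and for the reverse I would argue by duality. A linear functional $f$ annihilating every $V^{\otimes n}$ with $V\in \text{U}(d)$ yields a holomorphic polynomial $A\mapsto f(A^{\otimes n})$ on $\text{End}(\mathbb{C}^{d})\cong \mathbb{C}^{d^{2}}$ that vanishes on $\text{U}(d)$. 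Because $\text{U}(d)$ is a totally real form of $\text{GL}(d,\mathbb{C})$ of maximal real dimension---equivalently, $\text{U}(d)$ is Zariski dense in $\text{End}(\mathbb{C}^{d})$---any holomorphic polynomial vanishing on $\text{U}(d)$ vanishes on all of $\text{End}(\mathbb{C}^{d})$. A second application of polarization then forces $f$ to annihilate every $A^{\otimes n}$, giving the required inclusion.

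The main obstacle is precisely this last step, bridging the real-parameter family $\{V^{\otimes n}\}_{V\in \text{U}(d)}$ and the complex-polynomial family $\{A^{\otimes n}\}_{A\in \text{End}(\mathbb{C}^{d})}$. Everything else---the commutation check, the identification of permutation-invariants with $\text{Sym}^{n}(\text{End}(\mathbb{C}^{d}))$, and the polarization of symmetric multilinear forms---is routine; the substantive input is the totally-real density of the unitary group in the full matrix algebra, which transports polynomial identities from the compact real Lie group up to its complexification.
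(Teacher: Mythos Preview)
Your argument is correct. The paper itself does not give a direct proof of Schur--Weyl duality; it is stated as a classical result with references to \cite{Goodman-Wallach,Aram}. However, the paper's Theorem~\ref{cor-pairs} specializes to Schur--Weyl duality when $G=\text{U}(d)$ (so that $G'$ is scalars and $(G')^{\times n}$ is trivial), and the paper's proof of that theorem rests on Lemma~\ref{Lem-Per}. Comparing your route to that one is instructive.

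Both arguments reduce to the same core claim: the complex span of $\{V^{\otimes n}:V\in\text{U}(d)\}$ equals the full permutation-invariant subspace $\text{Sym}^{n}(\text{End}(\mathbb{C}^{d}))$. You reach it via polarization plus Zariski density of $\text{U}(d)$ in $\text{End}(\mathbb{C}^{d})$: a holomorphic polynomial vanishing on the compact real form vanishes on its complexification $\text{GL}(d,\mathbb{C})$, hence everywhere. The paper's Lemma~\ref{Lem-Per} instead differentiates two-parameter families $V_{0}(\theta,\phi)^{\otimes n}$ (with $V_{0}(\theta,\phi)=\exp[i\theta(V_{0}+V_{0}^{\dagger})+\phi(V_{0}-V_{0}^{\dagger})]$) to extract the symmetrized elementary tensor $\mathcal{T}_{\mathcal{S}_{n}}(V_{0}\otimes I^{\otimes(n-1)})$, and then inducts on the number of nontrivial factors to obtain all of $\text{Sym}^{n}(\text{Alg}\{G\})$. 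Your approach is the cleaner ``algebraic'' one and treats the step in one stroke; the paper's differentiation-and-induction approach is more hands-on but has the advantage of working uniformly for every gauge group $G$, where the relevant density statement is that $G$ contains $\exp(iX)$ for every Hermitian $X\in\text{Alg}\{G\}$---precisely the gauge property $G=G''$. In the special case $G=\text{U}(d)$ the two arguments are equivalent in strength.
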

Using our notation, Schur-Weyl duality can be expressed as
$\text{Comm}\{\textbf{Q}(\text{U}(d))\} = \textrm{Alg} \{ \textbf{P}(\mathcal{S}_{n})\}$, 
or equivalently as
$\text{Alg}\{\textbf{Q}(\text{U}(d))\} = \text{Comm}\{ \textbf{P}(\mathcal{S}_{n})\}.$

This theorem together with the proposition \ref{prop-doub-comm} implies that there is a one-to-one correspondence between the irreps of the group $\text{U}(d)$ which show up in representation $\textbf{Q}(\text{U}(d))$ and the irreps of the group $\mathcal{S}_{n}$ which show up in  representation $\textbf{P}(\mathcal{S}_{n})$. Furthermore, the theorem implies that the action of $\textbf{Q}(\text{U}(d))\times \textbf{P}(\mathcal{S}_{n})$ is multiplicity-free on $(\mathbb{C}^d)^{\otimes n}$.

In the following section,  we present a generalization of Schur-Weyl duality for  the case of \emph{gauge} subgroups of $\text{U}(\text{d})$.


\section{A Generalization of Schur-Weyl duality} \label{chap-gen-Schur-Weyl}

\subsection{Gauge groups and their characterizations}\label{sec:charac}

For any subgroup $G$ of  $\text{U}(d)$ let $G'$ denote the centralizer of $G$ in $\text{U}(d)$, i.e. the set of all elements of $\text{U}(d)$ which commute with all elements of $G$. Also denote the centralizer of the centralizer of $G$ by $G''\equiv (G')'$. Then in general $G\subseteq G''$. We call a unitary group $G$  a \emph{gauge group} if $G=G''$. The fact that in any arbitrary group and for any arbitrary subgroup $H$, $H\subseteq H''$ implies that  $((H')')'=H'$. So  for arbitrary subgroup $H$ of  $\text{U}(d)$,  its centralizer  $H'$ is a gauge group. 

Equivalently, one can think of a gauge group as the set of all unitaries in $\text{End}(\mathbb{C}^{d})$ which commute with a von Neumann algebra $\mathcal{A}\subseteq \text{End}(\mathbb{C}^{d})$. This is true because for any subgroup $G$ of $\text{U}(d)$, $G''$ is equal to all the unitaries which commute with $G'$ or equivalently all the unitaries which commute with $\text{Alg}\{G'\}$ (which is a von Neumann algebra). So if $G=G''$ then $G$ is equal to the set of all unitaries which commute with an algebra, namely $\text{Alg}\{G'\}$. On the other hand, if $G$ is equal to the set of all unitaries in $\text{End}(\mathbb{C}^{d})$ which commute with an algebra $\mathcal{A}\subseteq \text{End}(\mathbb{C}^{d})$ then $G'$ is equal to the set of all the unitaries in the algebra $\mathcal{A}$ and so is a basis for this algebra. Since $G''$ is equal to the set of all the unitaries which commute with $G'$, and $G'$ is a basis for $\mathcal{A}$, then $G''$ is equal to the set of all unitaries which commute with the algebra $\mathcal{A}$ and so is equal to $G$.  Therefore, these two definitions of gauge group are equivalent.

This discussion implies that one way to specify a gauge group is to specify the von Neumann algebra of operators which commute with the gauge group, for instance by specifying the generators of that algebra.  We call the gauge group formed by all unitaries which commute with a von Neumann algebra  $\mathcal{A}$ \emph{the gauge group of}  $\mathcal{A}$  and denote it by $G_{\mathcal{A}}$.    Note that if $G_{\mathcal{A}}$ is the gauge group  of  $\mathcal{A}$  then it holds that

\begin{equation}\label{com-gauge}
\text{Comm}\{G_{\mathcal{A}}\}=\text{Alg}\{G'_{\mathcal{A}}\}=\mathcal{A}.
\end{equation}

Using this together with the commutation theorem for tensor product, Eq.(\ref{com-tensor-product}) and Eq.(\ref{comm-alg}), we find
\begin{equation} \label{comm-tens}
\text{Comm}\{G^{\times n}_{\mathcal{A}}\}=\text{Alg}\{(G'_{\mathcal{A}})^{\times n}\}=\mathcal{A}^{\otimes n}
\end{equation}

Also note that Eq.(\ref{com-gauge})  implies that any von Neumann algebra can be uniquely specified by  its gauge group.

Now, based on  this observation that any gauge group can be thought as the set of unitaries commuting with a von Neumann  algebra, characterizing  the set of all gauge groups  is equivalent to characterizing all von Neumann algebras, which is done by Eq.(\ref{decom-algeb}). This decomposition implies that $G_{\mathcal{A}}$, the gauge group of $\mathcal{A}$, has a unique  decomposition up to unitary equivalence of the form
\begin{equation}
G_\mathcal{A}\cong \bigoplus_{J} \left(\mathbb{I}_{m_{J}}\otimes \text{U}(n_{J}) \right)
\end{equation}
where $\mathbb{I}_{m_{J}}$ is the identity on $\mathbb{C}^{m_{J}}$ and $\sum_{J} n_{J}m_{J}=d$. In other words, for any set of integers $0 \le n_{1}\le \cdots \le n_{d}\le d$ there is a  gauge group acting on $\mathbb{C}^{d}$ which is isomorphic to   $\text{U}(n_1)\times \cdots \times\text{U}(n_d)$   iff there is a set of positive integers $ 1\le m_{1},\cdots  ,m_{d}\le d$ such that $\sum_{i=1}^{d} n_{i}m_{i}=d$ (Here, we use the convention that $U(0)$ is the trivial group which includes only one element.). 
In particular, for  any vector space $\mathbb{C}^{d}$, there are gauge groups isomorphic to  ${\text{U}(1)}^{\times d}$ and $\text{U}(d)$. These gauge groups can be   respectively thought of as the gauge group of the algebra of all diagonal matrices in some orthonormal basis and the algebra generated by the identity matrix.

For instance, in the case of $d=2$, the set of all gauge groups can be classified into the following three types:    i) $n_{1}=0,n_{2}=1$ which corresponds to the group $\{e^{i\theta} \mathbb{I}:\theta\in(0,2\pi]\}$ where $\mathbb{I}$ is the identity operator , ii) $n_{1}=0,n_{2}=2$ which corresponds to the group $\text{U}(2)$ iii)  $n_{1}=1,n_{2}=1$ which corresponds to the group
$$\{e^{i\theta_{0}}|0\rangle\langle 0|+e^{i\theta_{1}}|1\rangle\langle 1|:\theta_{0},\theta_{1}\in (0,2\pi]\}$$
for any arbitrary orthonormal basis $\{|0\rangle,|1\rangle\}$.

Note that this characterization implies that any non-trivial gauge group is a unimodular Lie group, i.e. its left invariant measure is equal to the right invariant measure (up to a constant) and so it has a unique invariant measure.

Throughout the rest of this paper  we will extensively use the uniform twirling over subgroups of the unitary group with respect to their unique (normalized) Haar measure. For subgroup $H$ of $\text{U}(d)$ we denote this uniform twirling by  
\begin{equation}\label{Def:twirling}
\mathcal{T}_{H}(\cdot)\equiv\int_{H} d\mu(V)\ V(\cdot)V^{\dag}
\end{equation}
where $d\mu$ is the normalized Haar measure of $H$. Since $d\mu$ is the uniform measure any operator in the image of $\mathcal{T}_{H}$ commutes with $H$. Therefore if $G_{\mathcal{A}}$ is the gauge group of a von Neumann  algebra $\mathcal{A}$ then $\mathcal{T}_{G_{\mathcal{A}}}$ is a projector to the algebra $\mathcal{A}$.

%

%

Finally, it is worth noting that if $G$ is a gauge group then the two groups $G$ and $G'$ are dual reductive pairs. However, the inverse is not true, i.e. if two groups are dual reductive pairs, they are not necessarily each other's centralizers in the group of all unitaries. For example, according to the Schur-Weyl duality, the canonical representation of the permutation group on $(\mathbb{C}^d)^{\otimes n}$, i.e. $\textbf{P}( \mathcal{S}_{n})$, and the collective action of $\text{U}(d)$, i.e.  $\textbf{Q}(\text{U}(d))$, are dual reductive pairs but they are surely not equal to one another's centralizer in the group of all unitares acting on $(\mathbb{C}^d)^{\otimes n}$.



\subsection{From gauge groups to dual reductive pairs on product spaces}\label{sec-gauge-dual}

For a subgroup $H$ of $\text{U}(d)$ we denote $H^{\times n}$ to be the group $H^{\times n}\equiv\{U_{1}\otimes \cdots \otimes U_{n}:U_{i}\in H\} $. Also, let $\langle H^{\times n},  \textbf{P}(\mathcal{S}_{n})\rangle$ denote the  group acting on $(\mathbb{C}^{d})^{\otimes n}$ which is generated by the two groups  $ H^{\times n}$ and $\textbf{P}(\mathcal{S}_{n})=\{\textbf{P}(s):s\in \mathcal{S}_{n}\}$. Note that every  element of $\langle H^{\times n},  \textbf{P}(\mathcal{S}_{n})\rangle$ can be written in the canonical form of $W\textbf{P}(s)$ for a unique $W\in H^{\times n}$ and a unique $s\in \mathcal{S}_{n}$. This implies a homomorphism from $\langle H^{\times n},  \textbf{P}(\mathcal{S}_{n})\rangle$  to $\textbf{P}(\mathcal{S}_{n})$ with the kernel $ H^{\times n}$, and therefore  $\langle H^{\times n},  \textbf{P}(\mathcal{S}_{n}) \rangle=H^{\times n} \rtimes   \textbf{P}(\mathcal{S}_{n})$.

Then one can prove the following generalization of Schur-Weyl duality

\begin{theorem} \label{cor-pairs} (\textbf{Generalization of Schur-Weyl duality})   Suppose $G$ and $G'$ are one another's centralizers in the group of  unitaries $\text{U(d)}$. Then  the  following two algebras are commutants of one another in $\text{End}((\mathbb{C}^d)^{\otimes n})$
\begin{enumerate}
\item $\text{Alg}\{\textbf{Q}(G)\}$, the complex algebra spanned  by $\textbf{Q}(G)$.
\item $\text{Alg}\left\{(G')^{\times n},\textbf{P}(\mathcal{S}_{n})\right\}$, the complex  algebra spanned by   $\langle (G')^{\times n},  \textbf{P}(\mathcal{S}_{n})\rangle$.
\end{enumerate}
In other words, the subgroups $\textbf{Q}(G)$ and  $\langle (G')^{\times n},  \textbf{P}(\mathcal{S}_{n})\rangle$ are dual reductive pairs in $\text{GL}((\mathbb{C}^{d})^{\otimes n})$.
\end{theorem}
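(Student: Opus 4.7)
The plan is to verify condition (1) of Proposition~\ref{prop-doub-comm}: the two algebras are mutual commutants in $\text{End}((\mathbb{C}^d)^{\otimes n})$. Both are finite-dimensional von Neumann algebras (they are spanned by groups of unitaries), so by the double commutation property (\ref{com-com2}) it suffices to prove the single equation $\text{Comm}\{\textbf{Q}(G)\}=\text{Alg}\{(G')^{\times n},\textbf{P}(\mathcal{S}_n)\}$.

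The containment $\supseteq$ is immediate: every $U_1\otimes\cdots\otimes U_n\in(G')^{\times n}$ commutes with $V^{\otimes n}$ because each $U_i\in G'$ centralizes $V\in G$, and every $\textbf{P}(s)$ commutes with the permutation-invariant $V^{\otimes n}$; hence all generators of the right-hand algebra lie in $\text{Comm}\{\textbf{Q}(G)\}$, and so does the algebra they span.

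For the reverse containment I would exploit the normal form of Sec.~\ref{sec:charac}: choose bases so that $\mathbb{C}^d\cong\bigoplus_J(\mathbb{C}^{m_J}\otimes\mathbb{C}^{n_J})$, $G\cong\bigoplus_J(\mathbb{I}_{m_J}\otimes\text{U}(n_J))$, and $G'\cong\bigoplus_J(\text{U}(m_J)\otimes\mathbb{I}_{n_J})$, so that $\mathcal{A}=\text{Alg}\{G'\}\cong\bigoplus_J\text{End}(\mathbb{C}^{m_J})\otimes\mathbb{I}_{n_J}$. Then $(\mathbb{C}^d)^{\otimes n}=\bigoplus_{\vec J}E_{\vec J}$ with $E_{\vec J}=\bigotimes_i(\mathbb{C}^{m_{J_i}}\otimes\mathbb{C}^{n_{J_i}})$, and $\mathcal{S}_n$ groups these sectors into orbits labelled by the type $\vec k=(k_J)$ with $k_J=\#\{i:J_i=J\}$. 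Within a representative sector of the orbit $H_{\vec k}$, we have $E_{\vec J^{\ast}}\cong\bigotimes_J((\mathbb{C}^{m_J})^{\otimes k_J}\otimes(\mathbb{C}^{n_J})^{\otimes k_J})$, on which $\textbf{Q}(G)$ acts as $\bigotimes_J\mathbb{I}\otimes u_J^{\otimes k_J}$. Applying ordinary Schur--Weyl to each $(\mathbb{C}^{n_J})^{\otimes k_J}$ and the tensor-product commutation theorem (\ref{com-tensor-product}) then yields
\begin{equation*}
\text{Comm}\{\textbf{Q}(G)|_{E_{\vec J^{\ast}}}\}=\bigotimes_J\Bigl[\text{End}\bigl((\mathbb{C}^{m_J})^{\otimes k_J}\bigr)\otimes\text{Alg}\{\textbf{P}_{n_J}(\mathcal{S}_{k_J})\}\Bigr].
\end{equation*}
The first factor of each $J$-block already coincides with $\mathcal{A}^{\otimes n}|_{E_{\vec J^{\ast}}}\subseteq\text{Alg}\{(G')^{\times n}\}$; the second is recovered from the stabilizer subgroup $\mathcal{S}_{\vec k}\cong\prod_J\mathcal{S}_{k_J}\le\mathcal{S}_n$ of $\vec J^{\ast}$, since its elements act on $E_{\vec J^{\ast}}$ as $\textbf{P}_{m_J}(s_J)\otimes\textbf{P}_{n_J}(s_J)$ and left-multiplication by $\textbf{P}_{m_J}(s_J)^{-1}\otimes\mathbb{I}\in\mathcal{A}^{\otimes n}$ extracts the desired $\mathbb{I}\otimes\textbf{P}_{n_J}(s_J)$.

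It then remains to promote this sector-wise statement to the orbit $H_{\vec k}$ and to the whole space. Because the $G$-isotypic decomposition of $(\mathbb{C}^d)^{\otimes n}$ refines the orbit decomposition---different $\vec k$ give Young diagrams of different sizes $k_J$ for each $\text{U}(n_J)$---any $M\in\text{Comm}\{\textbf{Q}(G)\}$ preserves each $H_{\vec k}$ separately. The sector projectors $P_{\vec J}$ lie in $\mathcal{A}^{\otimes n}$ because $\mathcal{A}$ is block-diagonal, and any coset representative $\textbf{P}(s_{\vec J})\in\textbf{P}(\mathcal{S}_n)$ realises a $\textbf{Q}(G)$-equivariant isomorphism $E_{\vec J^{\ast}}\to E_{\vec J}$. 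Each block $M_{\vec J,\vec J'}=P_{\vec J}MP_{\vec J'}$ thus pulls back to an intertwiner $\tilde M_{\vec J,\vec J'}$ on $E_{\vec J^{\ast}}$ that already lies in the single-sector algebra computed above, and reassembling $M=\sum_{\vec J,\vec J'}\textbf{P}(s_{\vec J})P_{\vec J^{\ast}}\tilde M_{\vec J,\vec J'}P_{\vec J^{\ast}}\textbf{P}(s_{\vec J'})^{-1}$ exhibits $M$ as an element of $\text{Alg}\{(G')^{\times n},\textbf{P}(\mathcal{S}_n)\}$; summing over orbits completes the proof. The step I expect to be the main obstacle is precisely this final reassembly, where one must verify that combining sector-diagonal $\mathcal{A}^{\otimes n}$-operators with \emph{all} of $\textbf{P}(\mathcal{S}_n)$ (not only the stabilizer $\mathcal{S}_{\vec k}$) captures every off-diagonal block dictated by the commutant, and that the crossed-product structure of $\mathcal{A}^{\otimes n}\rtimes\mathcal{S}_n$ is represented faithfully on each $H_{\vec k}$.
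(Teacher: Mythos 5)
Your proposal is correct, but it follows a genuinely different route from the paper's. The paper reduces the theorem to the single equation $\text{Comm}\{(G')^{\times n},\textbf{P}(\mathcal{S}_n)\}=\text{Alg}\{\textbf{Q}(G)\}$, writes the left side as $\text{Alg}\{G^{\times n}\}\cap\text{Comm}\{\textbf{P}(\mathcal{S}_n)\}$, and then invokes Lemma~\ref{Lem-Per}, whose basis-free inductive proof uses the gauge property only through closure of $G$ under the unitaries $\exp[i\theta(V_0+V_0^{\dag})+\phi(V_0-V_0^{\dag})]$, which allows differentiation to extract the symmetrized products $\mathcal{T}_{\mathcal{S}_n}(V_0\otimes\cdots\otimes V_k\otimes I^{\otimes(n-k-1)})$. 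You instead prove the complementary equation $\text{Comm}\{\textbf{Q}(G)\}=\text{Alg}\{(G')^{\times n},\textbf{P}(\mathcal{S}_n)\}$ directly, by invoking the structure theorem $G\cong\bigoplus_J(\mathbb{I}_{m_J}\otimes\text{U}(n_J))$, reducing each type-$\vec k$ sector to classical Schur--Weyl via the tensor commutation theorem, and then doing the sector/orbit bookkeeping for $\mathcal{S}_n$. Both reductions are legitimate by Eq.~(\ref{com-com2}), and your key steps check out: the inter-orbit blocks of $M$ vanish because the degrees $k_J$ are read off from the central characters of the $\text{U}(n_J)$ factors, the restriction of $\mathcal{A}^{\otimes n}$ to a sector is all of $\text{End}(\bigotimes_i\mathbb{C}^{m_{J_i}})\otimes\mathbb{I}$ (so your "$\textbf{P}_{m_J}(s_J)^{-1}\otimes\mathbb{I}$" should strictly be a compression $P_{\vec J^{\ast}}\tilde X P_{\vec J^{\ast}}$ of an element of $\mathcal{A}^{\otimes n}$, which still lies in $\mathcal{A}^{\otimes n}$ since $P_{\vec J^{\ast}}\in\mathcal{A}^{\otimes n}$), and the pulled-back blocks $\tilde M_{\vec J,\vec J'}$ commute with the restricted $\textbf{Q}(G)$ because $P_{\vec J}$, $\textbf{P}(s)$ and $M$ all commute with $\textbf{Q}(G)$. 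The obstacle you flag at the end is not actually one: you need no faithfulness of the crossed product, only membership of $M$ in the algebra, and your reassembly formula together with the vanishing of inter-orbit blocks delivers exactly that. What the two approaches buy is different: yours makes the appearance of the permutation operators completely explicit and rests only on ordinary Schur--Weyl plus the classification of gauge groups, while the paper's argument avoids both the classification and the combinatorial bookkeeping, and its Lemma~\ref{Lem-Per} is reused later (in Theorems~\ref{Thm-sym} and \ref{pro-map-s} and Corollary~\ref{lem-twr}), which your sectorwise computation would not directly supply.
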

Using Eq.(\ref{comm-tens}) we  can rephrase the theorem as
\begin{corollary}\label{cor-gauge-com}
Let $G_{\mathcal{A}}$ be the gauge group of    the von Neumann algebra      $\mathcal{A}\subseteq \text{End}(\mathbb{C}^{d})$. Then
\begin{equation}
\text{Comm}\{\textbf{Q}(G_{\mathcal{A}})\} = \textrm{Alg}\{ \mathcal{A}^{\otimes n},  \textbf{P}(\mathcal{S}_{n})\}\,.
\end{equation}
\end{corollary}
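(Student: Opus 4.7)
The plan is to derive the corollary as an immediate consequence of Theorem~\ref{cor-pairs} by specializing $G = G_{\mathcal{A}}$ and then using Eq.~(\ref{comm-tens}) to rewrite the resulting generating set.

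First I would verify that the hypothesis of Theorem~\ref{cor-pairs} applies. By definition a gauge group satisfies $G_{\mathcal{A}} = G_{\mathcal{A}}''$, so $G_{\mathcal{A}}$ and $G'_{\mathcal{A}}$ are one another's centralizers in $\text{U}(d)$: indeed $G'_{\mathcal{A}}$ is the centralizer of $G_{\mathcal{A}}$ by definition, and applying centralizers again gives $(G'_{\mathcal{A}})' = G''_{\mathcal{A}} = G_{\mathcal{A}}$, so $G_{\mathcal{A}}$ is the centralizer of $G'_{\mathcal{A}}$. Thus the theorem applies with $G = G_{\mathcal{A}}$ and we obtain
\begin{equation*}
\text{Comm}\{\textbf{Q}(G_{\mathcal{A}})\} = \text{Alg}\{(G'_{\mathcal{A}})^{\times n}, \textbf{P}(\mathcal{S}_{n})\}.
\end{equation*}

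Next I would translate the right-hand side using Eq.~(\ref{comm-tens}), which states that $\text{Alg}\{(G'_{\mathcal{A}})^{\times n}\} = \mathcal{A}^{\otimes n}$. Since the algebra generated by a union of sets depends only on the algebras they generate,
\begin{equation*}
\text{Alg}\{(G'_{\mathcal{A}})^{\times n}, \textbf{P}(\mathcal{S}_{n})\} = \text{Alg}\{\text{Alg}\{(G'_{\mathcal{A}})^{\times n}\}, \textbf{P}(\mathcal{S}_{n})\} = \text{Alg}\{\mathcal{A}^{\otimes n}, \textbf{P}(\mathcal{S}_{n})\},
\end{equation*}
which gives the corollary.

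Since Theorem~\ref{cor-pairs} is assumed, there is essentially no obstacle here: the proof is a direct unpacking of definitions plus one application of Eq.~(\ref{comm-tens}). The only point worth checking carefully is the equivalence between the two forms of the generating set, i.e.\ that passing from the unitary group $(G'_{\mathcal{A}})^{\times n}$ to the whole algebra $\mathcal{A}^{\otimes n}$ inside the generating data does not change the generated algebra; this is immediate from the associativity of algebra generation and the fact, used in the preliminaries, that the algebra spanned by the unitaries of a von Neumann algebra is the algebra itself.
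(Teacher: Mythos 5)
Your proposal is correct and follows exactly the route the paper takes: specializing Theorem~\ref{cor-pairs} to the gauge group $G_{\mathcal{A}}$ and rewriting the generating set via Eq.~(\ref{comm-tens}), with the useful extra care of explicitly checking that $G_{\mathcal{A}}$ and $G'_{\mathcal{A}}$ are mutual centralizers and that replacing $(G'_{\mathcal{A}})^{\times n}$ by $\mathcal{A}^{\otimes n}$ in the generating data does not change the generated algebra. No gaps.
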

This form of the theorem is  particularly useful and has a straightforward  physical interpretation  which will be studied in section \ref{sec:intuitive}.

Theorem \ref{cor-pairs} together with the proposition \ref{prop-doub-comm} implies that there is a one-to-one correspondence between the irreps of the group $G$  which show up in representation $\textbf{Q}(G)$ on $(\mathbb{C}^d)^{\otimes n}$ and the irreps of the group $\langle (G')^{\times n},  \textbf{P}(\mathcal{S}_{n})\rangle$ which show up in  this space. Furthermore, the theorem implies that the representation of $\textbf{Q}(G)\times \langle (G')^{\times n},  \textbf{P}(\mathcal{S}_{n})\rangle$ is multiplicity-free on $(\mathbb{C}^d)^{\otimes n}$.
Note that in the specific case of $G=\text{U}(d)$ (where $G'$ is the trivial group) this dual reductive pair reduces to the Schur-Weyl duality (see theorem \ref{Thm-Schur-Weyl duality}).

Also note that the fact that each of the algebras in this theorem is in the commutant of the other algebra is trivial. 
In other words, for any subgroup $H\subseteq \text{U}(d)$ it holds that
$$\text{Alg}\{\textbf{Q}(H)\}\subseteq \text{Comm}\{(H')^{\times n},\textbf{P}(\mathcal{S}_{n})\}$$
The non-trivial content of the theorem is that for gauge groups these two algebras are equal. For $H$ a subgroup of $\text{U}(d)$ that is not equal to the centralizer of its centralizer in $\text{U}(d)$, i.e. $H\neq H''$, and so is not a gauge group, the above two algebras are not necessarily equal.  We provide a simple example illustrating this fact in Appendix \ref{app-count-ex0}. 

To prove theorem \ref{cor-pairs} we use the following property of gauge groups which is proven in section~\ref{Sec-Proof}.
\begin{lemma} \label{Lem-Per}
For a gauge group $G$, the complex algebra spanned by $\textbf{Q}(G)$ is equal to the  permutationally invariant  subalgebra of the complex algebra spanned by $G^{\times n}$.
\end{lemma}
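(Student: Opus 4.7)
I would first establish the easy inclusion $\text{Alg}\{\textbf{Q}(G)\}\subseteq(\text{Alg}\{G^{\times n}\})^{\mathcal{S}_{n}}$: every generator $V^{\otimes n}$ with $V\in G$ belongs to $G^{\times n}$ and is manifestly permutation-invariant. For the reverse inclusion, write $\mathcal{B}:=\text{Alg}\{G\}$ so that $\text{Alg}\{G^{\times n}\}=\mathcal{B}^{\otimes n}$, and reduce the problem to the following \emph{key statement}: $W^{\otimes n}\in\text{Alg}\{\textbf{Q}(G)\}$ for every $W\in\mathcal{B}$. Granting the key statement, the polarization identity
\[ \text{Sym}(B_{1}\otimes\cdots\otimes B_{n})=\frac{1}{n!}\sum_{S\subseteq\{1,\dots,n\}}(-1)^{n-|S|}\Bigl(\sum_{i\in S}B_{i}\Bigr)^{\otimes n}\]
immediately shows that the symmetrization of any simple tensor $B_{1}\otimes\cdots\otimes B_{n}$ with $B_{i}\in\mathcal{B}$ lies in $\text{Alg}\{\textbf{Q}(G)\}$, since each partial sum $\sum_{i\in S}B_{i}$ is again in $\mathcal{B}$. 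Because every permutation-invariant element of $\mathcal{B}^{\otimes n}$ equals its own $\mathcal{S}_{n}$-average, it can be expanded as a sum of such symmetrized simple tensors, and will therefore lie in $\text{Alg}\{\textbf{Q}(G)\}$.

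\textbf{Establishing the key statement.} I would exploit the fact that, because $G$ is a gauge group, it coincides with the full unitary group of $\mathcal{B}$. In particular, for any Hermitian $X\in\mathcal{B}$ the curve $e^{itX}$ lies in $G$ for all $t\in\mathbb{R}$, so
\[ \bigl(e^{itX}\bigr)^{\otimes n}=e^{it\,\Delta_{n}(X)}\in\text{Alg}\{\textbf{Q}(G)\},\qquad t\in\mathbb{R},\]
where $\Delta_{n}(X):=\sum_{k=0}^{n-1}I^{\otimes k}\otimes X\otimes I^{\otimes(n-k-1)}$. The entire function $z\mapsto e^{z\Delta_{n}(X)}$ then takes values in the finite-dimensional (hence closed) complex subspace $\text{Alg}\{\textbf{Q}(G)\}$ on the imaginary axis; composing with the quotient map onto $\text{End}((\mathbb{C}^{d})^{\otimes n})/\text{Alg}\{\textbf{Q}(G)\}$ and invoking the identity theorem forces this inclusion for every $z\in\mathbb{C}$. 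Setting $z=1$ gives $(e^{X})^{\otimes n}\in\text{Alg}\{\textbf{Q}(G)\}$ for every Hermitian $X\in\mathcal{B}$, which covers all positive-definite $W\in\mathcal{B}$. For invertible $W\in\mathcal{B}$, the polar decomposition carried out inside $\mathcal{B}$ via continuous functional calculus yields $W=UP$ with $U\in G$ and $P$ positive definite, so $W^{\otimes n}=U^{\otimes n}P^{\otimes n}$ lies in the algebra. An arbitrary $W\in\mathcal{B}$ arises as a limit of the invertible elements $W+\epsilon I\in\mathcal{B}$ (for all but finitely many $\epsilon>0$), and closedness of $\text{Alg}\{\textbf{Q}(G)\}$ concludes the argument.

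\textbf{Main obstacle.} The hard part will be the key statement. Simply decomposing $W\in\mathcal{B}$ as $\sum_{j}c_{j}V_{j}$ with $V_{j}\in G$ only rewrites $W^{\otimes n}$ as a sum of off-diagonal tensor products $V_{j_{1}}\otimes\cdots\otimes V_{j_{n}}$, which need not individually sit in $\text{Alg}\{\textbf{Q}(G)\}$; there is no purely algebraic route from $W\in\mathrm{span}(G)$ to $W^{\otimes n}\in\mathrm{span}\{V^{\otimes n}:V\in G\}$. Bridging this gap is precisely what forces one to invoke the Lie-group structure of $G$ (through Hermitian exponentials and holomorphic continuation), in order to transfer the diagonal tensor-power property from elements of $G$ itself to all of its complex linear span $\mathcal{B}$.
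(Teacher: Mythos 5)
Your proof is correct, but it follows a genuinely different route from the paper's. The paper establishes the nontrivial inclusion by induction on the number of non-identity tensor factors: it first shows that the symmetrized single-site insertion $\mathcal{T}_{\mathcal{S}_n}(V_0\otimes I^{\otimes(n-1)})$ lies in $\text{Alg}\{\textbf{Q}(G)\}$ by differentiating the collective action of the two-parameter unitary family $\exp[i\theta(V_0+V_0^{\dag})+\phi(V_0-V_0^{\dag})]$, which lies in $G$ precisely because these unitaries commute with $G'$ and $G=G''$; it then multiplies symmetrized elements and peels off the lower-order ``collision'' terms to obtain $\mathcal{T}_{\mathcal{S}_n}(V_1\otimes\cdots\otimes V_n)$ for arbitrary $V_i\in G$, which span the permutation-invariant part of $\text{Alg}\{G^{\times n}\}$. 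You instead reduce, via the polarization identity, to the single statement that $W^{\otimes n}\in\text{Alg}\{\textbf{Q}(G)\}$ for every $W\in\mathcal{B}=\text{Alg}\{G\}$, and prove it by exponentiating Hermitian elements of $\mathcal{B}$ (the gauge property enters in exactly the same place as in the paper, namely to put $e^{itX}$, and later the unitary polar factor, inside $G$), then analytic continuation off the imaginary axis, polar decomposition within $\mathcal{B}$, and a closure argument for non-invertible $W$. The trade-off: the paper's induction uses only first derivatives of curves in a closed algebra and elementary bookkeeping, so it is more self-contained, though the induction step is somewhat delicate; your argument invokes slightly more machinery (identity theorem, functional calculus), but its structure is cleaner, it avoids the inductive bookkeeping, and it isolates the useful byproduct that the permutation-invariant subalgebra of $\mathcal{B}^{\otimes n}$ is spanned by the diagonal powers $\{W^{\otimes n}:W\in\mathcal{B}\}$. (Minor polish: when asserting that $G$ is the full unitary group of $\mathcal{B}$, it is worth stating the one-line reason — any unitary in $\mathcal{B}\subseteq\text{Comm}\{G'\}$ commutes with $G'$ and hence lies in $G''=G$ — since this is the only point where the gauge hypothesis is used.)
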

The result can be summarized as
\begin{align*}
G''= G \;\;\Rightarrow\;\; \textrm{Alg}\{\textbf{Q}(G)\} = \textrm{Alg}\{G^{\times n}\}\ \cap\ \text{Comm}\{ \textbf{P}(\mathcal{S}_n)\}\ \\ = \textrm{Alg}\{G\}^{\otimes n}\ \cap\ \text{Comm}\{ \textbf{P}(\mathcal{S}_n)\}.
\end{align*}
Using this lemma the proof of theorem \ref{cor-pairs} is then straightforward and proceeds as follows.

\begin{proof}(\textbf{Theorem} \ref{cor-pairs}) Since both algebras are von Neumann algebras, we only need to show that one is the commutant of the other, the other direction follows from Eq.(\ref{com-com2}).  So to prove the theorem it is sufficient to show that
$\text{Comm}\{ G'^{\times n},\textbf{P}(\mathcal{S}_{n})\}=\text{Alg}(\textbf{Q}(G))$.  To show this, we note that
$$\text{Comm}\{ G'^{\times n},\textbf{P}(\mathcal{S}_{n})\}=\text{Comm}\{ G'^{\times n}\}\ \cap\ \text{Comm}\{\textbf{P}(\mathcal{S}_{n})\}.$$
Then  since $\text{Comm}\{ G'^{\times n}\}=\text{Alg}\{G^{\times n}\}$
we conclude that
$$\text{Comm}\{ G'^{\times n},\textbf{P}(\mathcal{S}_{n})\}=\text{Alg}\{G^{\times n}\}\ \cap \text{Comm}\{\textbf{P}(\mathcal{S}_{n})\}.$$

This together with lemma \ref{Lem-Per} completes the proof of theorem.
\end{proof}

Finally, it is worth mentioning the following corollary of lemma \ref{Lem-Per} which applies to arbitrary subgroup of $\text{U}(d)$
\begin{corollary} \label{lem-twr}
For any unitary subgroup $H\subseteq \text{U}(d)$,  the permutationally invariant subalgebra of $\text{Comm}\{H^{\times n}\}$ is equal to $\text{Alg}\{\textbf{Q}(H')\}$.
%
\end{corollary}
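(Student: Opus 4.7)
The plan is to reduce the corollary directly to Lemma~\ref{Lem-Per}, applied not to $H$ itself (which need not be a gauge group) but to its centralizer $H'$, which always is. The crucial observation, already recorded in Section~\ref{sec:charac}, is that for any subgroup $H\subseteq \text{U}(d)$ one has $((H')')'=H'$, so $H'$ equals its own bicommutant in $\text{U}(d)$ and hence is a gauge group. This is precisely what is needed to invoke Lemma~\ref{Lem-Per} with $G:=H'$, giving
\begin{equation*}
\text{Alg}\{\textbf{Q}(H')\}=\text{Alg}\{(H')^{\times n}\}\ \cap\ \text{Comm}\{\textbf{P}(\mathcal{S}_n)\}.
\end{equation*}

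The remaining task is to identify the first factor on the right-hand side as $\text{Comm}\{H^{\times n}\}$. For this I would invoke the commutation theorem for tensor products (Eq.~\ref{com-tensor-product}) together with $\text{Comm}\{H\}=\text{Alg}\{H'\}$ (Eq.~\ref{comm-alg}), using also the elementary fact that $\text{Alg}\{H^{\times n}\}=\text{Alg}\{H\}^{\otimes n}$ (which holds because $H^{\times n}$ is itself a group, so its complex linear span is an algebra and coincides with $\text{span}(H)^{\otimes n}$). Stringing these together yields
\begin{equation*}
\text{Comm}\{H^{\times n}\}=\text{Comm}\{H\}^{\otimes n}=\text{Alg}\{H'\}^{\otimes n}=\text{Alg}\{(H')^{\times n}\},
\end{equation*}
which is the same chain of identifications that underlies Eq.~\ref{comm-tens}, except that here the gauge-group hypothesis on $H$ is not needed.

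Substituting this into the conclusion of Lemma~\ref{Lem-Per} gives
\begin{equation*}
\text{Alg}\{\textbf{Q}(H')\}=\text{Comm}\{H^{\times n}\}\ \cap\ \text{Comm}\{\textbf{P}(\mathcal{S}_n)\},
\end{equation*}
and the right-hand side is by definition the permutationally invariant subalgebra of $\text{Comm}\{H^{\times n}\}$, which completes the argument.

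I do not expect any genuine obstacle once Lemma~\ref{Lem-Per} is in hand; the corollary is essentially a relabeling in which the role of the gauge group in the lemma is played by $H'$ and the commutant of the gauge group is re-expressed, via the commutation theorem for tensor products, as the commutant of $H^{\times n}$. The only conceptual point worth emphasizing, and the one bit of structure that makes the statement work for arbitrary $H$, is that passing from $H$ to $H'$ automatically promotes a subgroup to a gauge group, which is exactly what unlocks Lemma~\ref{Lem-Per} in this more general setting.
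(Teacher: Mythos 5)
Your proposal is correct and follows essentially the same route as the paper: the paper likewise combines Eq.~(\ref{comm-alg}) with the commutation theorem for tensor products to identify $\text{Comm}\{H^{\times n}\}=\text{Alg}\{(H')^{\times n}\}$, and then applies Lemma~\ref{Lem-Per} to the gauge group $H'$ (using that any centralizer satisfies $(H')''=H'$). The only difference is cosmetic ordering of the two steps, so there is nothing to add.
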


\begin{proof}
First note that Eq.(\ref{comm-alg})  together with  the commutation theorem for tensor products, i.e. Eq.(\ref{com-tensor-product}), implies
$$\text{Comm}\{H^{\times n}\}=\text{Alg}\{(H')^{\times n}\}$$
Then, from section \ref{sec:charac} we know that the centralizer of  $H$ an arbitrary subgroup of $\text{U}(d)$ is a gauge group and so one can apply lemma \ref{Lem-Per} for gauge group $H'$ which implies that the permutationally invariant subalgebra of $\text{Alg}\{(H')^{\times n}\}$ is equal to $\text{Alg}\{\textbf{Q}(H')\}$.
 \end{proof}

\subsection{An intuitive account}\label{sec:intuitive}

Our generalization of Schur-Weyl duality appears very intuitive if one considers a particular problem concerning two independent observers using different conventions to describe quantum systems.

Suppose that Alice and Bob each use their own personal convention to associate observables with operators in the Hilbert space of a system, and assume that each observer  is not aware of the other's convention.  All they know is that for a particular set of operators $\{A_{i}\}$, the observable which is described by operator $A_{i}$ relative to Alice's convention is also described by operator $A_{i}$ relative to Bob's convention.  Clearly, Alice and Bob will also agree on any observable which is an algebraic function of the $\{A_{i}\}$ and the identity operator $I$, so the full set of observables on which they agree are those in the algebra $\mathcal{A} \equiv \text{Alg}\{A_{i}, I\}$. \footnote{Here we assume the two observers have agreement on the notion of time direction such that the relation between their reference frames can be described by a unitary rather than an anti-unitary.} The question is: what sorts of states and observables can they agree upon for the composite of $n$ systems, assuming Alice and Bob use the same convention for each system and agree on how to label the systems?

It is obvious that Alice and Bob agree on the description of all observables which are in  the algebra generated by (i) the $n$-fold tensor product of the algebra $\mathcal{A}$ and (ii) the canonical representation of the permutation group.  Furthermore, it is intuitively clear that there are no other observables in addition to these that they can agree upon.

Now note that the group $G_{\mathcal{A}}$ of unitaries that commute with $\mathcal{A}$ can be interpreted as the possible ways in which Alice and Bob's conventions may be related to one another.  Because Alice and Bob use the \emph{same} convention for each of the $n$ systems, the operators that they can agree on for the composite are those that are invariant under the \emph{collective} action of $G_{\mathcal{A}}$, i.e. $\mathbf{Q}(G_{\mathcal{A}})$. What is intuitively clear, therefore, is that the operators that are in the commutant of the collective action of $G_{\mathcal{A}}$ are those in the algebra spanned by the $n$-fold product of $\mathcal{A}$, $\mathcal{A}^{\otimes n}$, and the canonical representation of the permutation group, $\mathbf{P}(\mathcal{S}_n)$.  But this is precisely the content of our generalization of Schur-Weyl duality, in the form presented in corollary \ref{cor-gauge-com}.\footnote{Indeed, it was in attempting to make this intuition rigorous that we were led to prove the duality.} We discuss more on this physical interpretation of our generalization of Schur-Weyl duality in \cite{Agr-Marv-Spek-2011}.

This discussion also reveals the motivation for calling the group $G_{\mathcal{A}}$ a \emph{gauge group}.  It is because such a group describes the possible transformations that leave the physically relevant set of observables invariant (in this case, the single-system observables that Alice and Bob agree upon), and such transformations are typically called gauge transformations by physicists.

\subsection{Duality within the symmetric and antisymmetric subspaces}\label{sec:restric}

In the special case where the support of operators are restricted to the symmetric or anti-symmetric subspace, theorem \ref{cor-pairs} has an interesting corollary.  Let $\Pi_{\pm}$ be the projector to $\left[(\mathbb{C}^{d})^{\otimes n}\right]_{\pm}$, the symmetric (respectively antisymmetric) subspace of $(\mathbb{C}^{d})^{\otimes n}$. Then we can prove that

\begin{theorem} \label{Thm-sym}
Suppose $G$ and $G'$ are one another's centralizers in the group of  unitaries $\text{U(d)}$. Then the following two algebras are the commutants of one another in $\text{End}( \left[(\mathbb{C}^d)^{\otimes n}\right]_{\pm})$
\begin{enumerate}
\item $\text{Alg}\{\Pi_{\pm}\textbf{Q}(G)\Pi_{\pm}\}$, the complex  algebra spanned  by $\Pi_{\pm}\textbf{Q}(G)\Pi_{\pm}$.
\item $\text{Alg}\{\Pi_{\pm}\textbf{Q}(G')\Pi_{\pm}\}$, the complex  algebra spanned by   $\Pi_{\pm}\textbf{Q}(G')\Pi_{\pm} $.
\end{enumerate}
In other words, $\Pi_{\pm}\textbf{Q}(G)\Pi_{\pm}$ and  $\Pi_{\pm}\textbf{Q}(G')\Pi_{\pm}$ are dual reductive pairs in $\text{GL}(\left[(\mathbb{C}^d)^{\otimes n}\right]_{\pm})$.
\end{theorem}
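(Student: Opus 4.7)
The plan is to derive Theorem~\ref{Thm-sym} from Theorem~\ref{cor-pairs} by compressing to $[(\mathbb{C}^d)^{\otimes n}]_{\pm}$, where the permutation action collapses to a scalar and the local tensor action of $G'$ can be twirled to a collective one.

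First I would record a standard compression fact: for any finite dimensional von Neumann algebra $\mathcal{M}\subseteq\text{End}(\mathcal{V})$ and any projector $P\in\mathcal{M}'$, the commutant of $P\mathcal{M}P$ inside $\text{End}(P\mathcal{V})$ equals $P\mathcal{M}'P$. The verification is routine: any $X$ on $P\mathcal{V}$ commuting with every $PmP$ extends by zero to $\tilde{X}=P\tilde{X}P$ on $\mathcal{V}$, and the identity $Pm=mP$ (which holds since $P\in\mathcal{M}'$) shows $[\tilde{X},m]=P[X,m]P=0$; conversely, for any $y\in\mathcal{M}'$ the compression $PyP$ manifestly commutes with $PmP$.

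I apply this with $\mathcal{M}=\text{Alg}\{\mathbf{Q}(G)\}$ and $P=\Pi_{\pm}$, noting that by Theorem~\ref{cor-pairs} we have $\mathcal{M}'=\text{Alg}\{(G')^{\times n},\mathbf{P}(\mathcal{S}_n)\}$, and $\Pi_{\pm}\in\text{Alg}\{\mathbf{P}(\mathcal{S}_n)\}\subseteq\mathcal{M}'$, so the hypothesis is satisfied. Because $\mathbf{P}(\sigma)\Pi_{\pm}=\chi_{\pm}(\sigma)\Pi_{\pm}$ with $\chi_+\equiv 1$ and $\chi_-=\mathrm{sgn}$, for every $W\in(G')^{\times n}$ and every $\sigma\in\mathcal{S}_n$ we have $\Pi_{\pm}W\mathbf{P}(\sigma)\Pi_{\pm}=\chi_{\pm}(\sigma)\Pi_{\pm}W\Pi_{\pm}$, so the permutation factors contribute only scalar multiples and the compression simplifies to $\Pi_{\pm}\text{Alg}\{(G')^{\times n}\}\Pi_{\pm}$.

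The main step is to identify this last algebra with $\text{Alg}\{\Pi_{\pm}\mathbf{Q}(G')\Pi_{\pm}\}$. Writing $\Pi_{\pm}=\frac{1}{n!}\sum_{\sigma}\chi_{\pm}(\sigma)\mathbf{P}(\sigma)$ and expanding, a direct calculation gives $\Pi_{\pm}W\Pi_{\pm}=\mathcal{T}_{\mathcal{S}_n}(W)\Pi_{\pm}$ for every $W\in(G')^{\times n}$. The twirled operator $\mathcal{T}_{\mathcal{S}_n}(W)$ is permutationally invariant and belongs to $\text{Alg}\{(G')^{\times n}\}=\text{Comm}\{G^{\times n}\}$, so by Corollary~\ref{lem-twr} (applied to $H=G$) it lies in $\text{Alg}\{\mathbf{Q}(G')\}$. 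Using that $\mathbf{Q}(G')$ commutes with $\Pi_{\pm}$, this yields $\Pi_{\pm}\text{Alg}\{(G')^{\times n}\}\Pi_{\pm}\subseteq\Pi_{\pm}\text{Alg}\{\mathbf{Q}(G')\}\Pi_{\pm}=\text{Alg}\{\Pi_{\pm}\mathbf{Q}(G')\Pi_{\pm}\}$, while the reverse inclusion is immediate because $\mathbf{Q}(G')\subseteq(G')^{\times n}$. The dual statement, that $\text{Alg}\{\Pi_{\pm}\mathbf{Q}(G)\Pi_{\pm}\}$ is the commutant of $\text{Alg}\{\Pi_{\pm}\mathbf{Q}(G')\Pi_{\pm}\}$, then follows from the double-commutant identity~(\ref{com-com2}). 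I expect the main obstacle to be precisely this twirling step: converting local tensor operators into collective ones on the symmetric or antisymmetric subspace is exactly what Corollary~\ref{lem-twr} is engineered to accomplish, whereas the compression lemma and the scalar reduction of $\mathbf{P}(\mathcal{S}_n)$ on $[(\mathbb{C}^d)^{\otimes n}]_{\pm}$ are routine.
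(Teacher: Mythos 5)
Your proposal is correct and follows essentially the same route as the paper's proof: it rests on Theorem~\ref{cor-pairs}, the canonical form $W\textbf{P}(s)$ together with the sign absorption $\textbf{P}(s)\Pi_{\pm}=(-1)^{p_{\pm}(s)}\Pi_{\pm}$, a permutation twirl, and Lemma~\ref{Lem-Per} (invoked through Corollary~\ref{lem-twr}). The only differences are organizational: you package the restriction to $\left[(\mathbb{C}^d)^{\otimes n}\right]_{\pm}$ via the standard compression identity $\left(P\mathcal{M}P\right)'=P\mathcal{M}'P$ and compute $\text{Comm}\{\Pi_{\pm}\textbf{Q}(G)\Pi_{\pm}\}$ rather than $\text{Comm}\{\Pi_{\pm}\textbf{Q}(G')\Pi_{\pm}\}$, which is equivalent by the symmetry of the hypothesis and the double-commutant relation (\ref{com-com2}).
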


Again, the fact that each of these algebras is in the commutant of the other is trivial. The non-trivial fact is that each is \emph{equal} to the commutant of the other.
 We can summarize the theorem by
\begin{equation}
G''=G \;\;\implies\;\; \text{Comm}\{ \Pi_{\pm} \textbf{Q}(G)\Pi_{\pm}\}  = \textrm{Alg}\{\Pi_{\pm} \textbf{Q}(G') \Pi_{\pm}\} \,.
\end{equation}
where here by $\text{Comm}\{ \Pi_{\pm} \textbf{Q}(G)\Pi_{\pm}\}$ we mean the set of all  operators in $\text{End}( \left[(\mathbb{C}^d)^{\otimes n}\right]_{\pm})$ which commute with $\Pi_{\pm} \textbf{Q}(G)\Pi_{\pm}$.

\begin{proof}(\textbf{Theorem} \ref{Thm-sym})
Again since both algebras are von Neumann algebra, we only need to show 
that $\text{Comm}\{\Pi_{\pm} \textbf{Q}(G')\Pi_{\pm}\}=\text{Alg}\{\Pi_{\pm}\textbf{Q}(G)\Pi_{\pm}\}$.
Let $M$ be an arbitrary operator  in $\text{End}( (\mathbb{C}^d)^{\otimes n})$ such that $\Pi_{\pm}M\Pi_{\pm}$ commutes with  $\Pi_{\pm}\textbf{Q}(G')\Pi_{\pm}$. 
Then $\Pi_{\pm} M\Pi_{\pm}$ clearly commutes with $\textbf{Q}(G')$ and therefore theorem \ref{cor-pairs} implies that it is in the span of $\langle G^{\times n},  \textbf{P}(\mathcal{S}_{n})\rangle$. Now recall that,  every arbitrary element of $\langle G^{\times n},  \textbf{P}(\mathcal{S}_{n})\rangle$ can be written in the canonical form of $W\textbf{P}(s)$ for a unique $W\in G^{\times n}$ and a unique $s\in \mathcal{S}_{n}$. So
\begin{equation}
\Pi_{\pm}M\Pi_{\pm}=\sum_{W\in G^{\times n},\  s\in\mathcal{S}_{n} } c_{W, s}\ \  W\textbf{P}(s)
\end{equation}
for some complex coefficients  $c_{W, s}$. Then 
\begin{equation}
\Pi_{\pm}M\Pi_{\pm}=\Pi_{\pm}\left[ \sum_{W\in G^{\times n},\  s\in\mathcal{S}_{n} } (-1)^{p_{\pm}(s)} \ c_{W, s}\ \  W\right]\Pi_{\pm}
\end{equation}
where $\textbf{P}(s)\Pi_{\pm}=(-1)^{p_{\pm}(s)} \Pi_{\pm}$ for arbitrary $s\in\mathcal{S}_{n}$, $(-1)^{p_{+}(s)}=1$ for all $s\in \mathcal{S}_{n}$ and $(-1)^{p_{-}(s)}=\pm1$ dependent on whether $s$ is an odd or even permutation. Therefore, there exists an operator $\bar{M}$ in the span of $G^{\times n}$ such that $\Pi_{\pm}\bar{M}\Pi_{\pm}=\Pi_{\pm}M\Pi_{\pm}$. Then
\begin{equation}
\Pi_{\pm} \left[\sum_{s\in\mathcal{S}_{n} }  \textbf{P}(s) \bar{M}  \textbf{P}^{\dag}(s) \right]\Pi_{\pm}=\Pi_{\pm} \bar{M}\Pi_{\pm}=\Pi_{\pm}M\Pi_{\pm}
\end{equation}
where we have used the fact that $\Pi_{\pm}\textbf{P}(s)=\textbf{P}^{\dag}(s)\Pi_{\pm}=(-1)^{p_{\pm}(s)}\Pi_{\pm}$ and the two negative signs cancel each other. Since $\bar{M}$ is in the span of  $G^{\times n}$ then $\tilde{M}\equiv\sum_{s\in\mathcal{S}_{n} }  \textbf{P}(s) \bar{M}  \textbf{P}^{\dag}(s)$ is in the permutationally invariant subalgebra of the span  $G^{\times n}$. Now since $G$ is a gauge group, using lemma \ref{Lem-Per} we can conclude that $\tilde{M}\in\text{Alg}\{\textbf{Q}(G)\}$. So for any arbitrary $M\in\text{End}( (\mathbb{C}^d)^{\otimes n})$  if $\Pi_{\pm} M\Pi_{\pm}$ commutes with $\Pi_{\pm}\textbf{Q}(G')\Pi_{\pm}$ then there exists an operator $\tilde{M}$  in $\text{Alg}\{\textbf{Q}(G)\}$ such that $\Pi_{\pm}\tilde{M}\Pi_{\pm}=\Pi_{\pm}{M}\Pi_{\pm}$. This completes the proof of the theorem.
\end{proof}


Again, using the proposition \ref{prop-doub-comm} one can see that theorem \ref{Thm-sym} implies: (i) a one-to-one correspondence  between the irreps of $G$ which show up in the representation $\textbf{Q}(G)$ in the symmetric (antisymmetric) subspace and the irreps of $G'$ which show up in the representation $\textbf{Q}(G')$ in the symmetric (antisymmetric) subspace, and (ii) that in these subspaces $\textbf{Q}(G)\times \textbf{Q}(G')$ is multiplicity-free. The special case of this result  is known in the representation theory for the case of the symmetric subspace of $(\mathbb{C}^{d_{1}d_{2}})^{\otimes n}$ and  the collective representations of $G=\text{U}(d_{1})\times e$ and $G'=e \times \text{U}(d_{2})$  as two subgroups of $\text{U}(d_{1}d_{2})$

Applying theorem \ref{Thm-sym} for $G_{\mathcal{A}}$ the gauge group of a von Neumann algebra $\mathcal{A}$  one can show that for any given operator $\Pi_{\pm} M\Pi_{\pm}$ which commutes with $\textbf{Q}(G_{\mathcal{A}})$,  there is an operator $\tilde{M}_{\pm}$ in the permutationally invariant subalgebra of $\mathcal{A}^{\otimes n}$ such that
$$\Pi_{\pm}\tilde{M}_{\pm}\Pi_{\pm}=\Pi_{\pm}{M}\Pi_{\pm}.$$
However, this argument is not constructive and for a given $M$ it is not clear how we can find such an  operator $\tilde{M}_{\pm}$ with this property. In the following theorem, we introduce a completely positive unital quantum operation which does this transformation.




\begin{theorem} \label{pro-map-s}
Let $G_{\mathcal{A}}\subseteq \text{U}(d)$ be the gauge group of a von Neumann algebra $\mathcal{A}\subseteq \text{End}(\mathbb{C}^{d})$. Then there exists a superoperator $\mathcal{L}_{\pm}$ from $\text{End}\left((\mathbb{C}^d)^{\otimes n}\right)$ to itself such that
 \begin{enumerate}
 \item  $\mathcal{L}_{\pm}$ is unital and completely positive,
 \item  The image of $\mathcal{L}_{\pm}$ is in the permutationally invariant subalgebra of $\mathcal{A}^{\otimes n}$
and
\item if $\Pi_{\pm}M \Pi_{\pm}$ commutes with $\textbf{Q}(G_{\mathcal{A}})$
then
$$\Pi_{\pm}\mathcal{L}_{\pm}{(M)} \Pi_{\pm}=\Pi_{\pm}M \Pi_{\pm} \label{eq:actionofL}$$
\end{enumerate}
An instance of such a superoperator is given by
\begin{equation}\label{M-tilde1}
\mathcal{L}_{\pm}{(\cdot)}\equiv \Phi_{\pm}(\cdot) + \mathrm{tr}(\cdot) \frac{\mathbb{I}^{\otimes n} - \Phi_{\pm}(\mathbb{I}^{\otimes n}) }{d^n} 
\end{equation}
with
\begin{equation}\label{M-tilde1prime}
\Phi_{\pm}{(\cdot)}\equiv  \bigoplus_{\mu}  p^{-1}_{\mu,\pm} \ P_{\mu}[\mathcal{T}_{G_{\mathcal{A}}}^{\otimes n}  (\Pi_{\pm} (\cdot)\Pi_{\pm} )]P_{\mu}
\end{equation}
where $\mu$ labels all the irreps of $G'_{\mathcal{A}}$ which show up in the representation $\textbf{Q}(G'_{\mathcal{A}})$, $P_{\mu}$ is the projector to the subspace of $(\mathbb{C}^d)^{\otimes n}$ associated to irrep $\mu$,  $p_{\mu,\pm} \equiv {\text{tr}}\left(P_{\mu} \mathcal{T}_{G_{\mathcal{A}}}^{\otimes n}(\Pi_{\pm}) \right)$
 and the summation in Eq.~(\ref{M-tilde1prime}) is over all the irreps $\mu$ for which $p_{\mu}$ is nonzero.
\end{theorem}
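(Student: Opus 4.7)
My plan is to verify the three claimed properties of $\mathcal{L}_\pm$ in sequence, leaning on the structural results already established in the paper.

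First, complete positivity and unitality of $\mathcal{L}_\pm$. The map $\Phi_\pm$ is a nonnegative linear combination of CP maps: the twirl $\mathcal{T}_{G_{\mathcal{A}}}^{\otimes n}$ is an average of unitary conjugations and hence CP, while $X \mapsto \Pi_\pm X \Pi_\pm$ and $X \mapsto P_\mu X P_\mu$ are CP with single Kraus operators. The correction term is CP once one knows $I^{\otimes n} - \Phi_\pm(I^{\otimes n}) \ge 0$. To establish this, I would first show that $\mathcal{T}_{G_{\mathcal{A}}}^{\otimes n}(\Pi_\pm)$ commutes with both $\mathbf{Q}(G_{\mathcal{A}})$ (automatic, since the twirl outputs into $\mathcal{A}^{\otimes n} \subseteq \text{Comm}\{\mathbf{Q}(G_{\mathcal{A}})\}$ by Eq.~(\ref{comm-tens})) and $\mathbf{Q}(G'_{\mathcal{A}})$ (since $\Pi_\pm$ commutes with every collective action and $\mathcal{T}_{G_{\mathcal{A}}}^{\otimes n}$ intertwines this by a change of variables in the Haar measure, using that $G_{\mathcal{A}}$ and $G'_{\mathcal{A}}$ centralize one another). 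Hence $\mathcal{T}_{G_{\mathcal{A}}}^{\otimes n}(\Pi_\pm)$ lies in the center of $\text{Alg}\{\mathbf{Q}(G'_{\mathcal{A}})\}$, which is spanned by the $P_\mu$, so it takes the form $\sum_\mu c_\mu P_\mu$ with $0 \le c_\mu \le 1$; substituting, $\Phi_\pm(I^{\otimes n})$ is a positive operator bounded by $I^{\otimes n}$. Unitality is then the trivial substitution $\mathcal{L}_\pm(I^{\otimes n}) = \Phi_\pm(I^{\otimes n}) + [I^{\otimes n} - \Phi_\pm(I^{\otimes n})] = I^{\otimes n}$.

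Second, the image lies in the permutationally invariant subalgebra of $\mathcal{A}^{\otimes n}$. Here I would combine four observations. (i) $\mathcal{T}_{G_{\mathcal{A}}}^{\otimes n}$ has its image in $\mathcal{A}^{\otimes n}$ by Eq.~(\ref{comm-tens}). (ii) $\Pi_\pm M \Pi_\pm$ is permutationally invariant, because $\mathbf{P}(s)\Pi_\pm = \Pi_\pm$ (the permutation group acts as a character on the symmetric/antisymmetric subspace, and the two signs in a conjugation cancel). (iii) $\mathcal{T}_{G_{\mathcal{A}}}^{\otimes n}$ is $\mathbf{P}(\mathcal{S}_n)$-equivariant under conjugation, so it sends permutation-invariant operators to permutation-invariant operators. (iv) $P_\mu \in \text{Alg}\{\mathbf{Q}(G'_{\mathcal{A}})\}$, which by Theorem~\ref{cor-pairs} applied to the gauge group $G'_{\mathcal{A}}$ equals the permutationally invariant subalgebra of $\mathcal{A}^{\otimes n}$; moreover $P_\mu$ is a central projection there. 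Consequently each term $p_{\mu,\pm}^{-1} P_\mu \mathcal{T}_{G_{\mathcal{A}}}^{\otimes n}(\Pi_\pm M \Pi_\pm) P_\mu$ stays in the permutationally invariant part of $\mathcal{A}^{\otimes n}$, and $I^{\otimes n}-\Phi_\pm(I^{\otimes n})$ does too.

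Third, the action preserves $\Pi_\pm M \Pi_\pm$ under the hypothesis. This is the heart of the argument, and the step I expect to be the main obstacle. Assuming $\Pi_\pm M \Pi_\pm$ commutes with $\mathbf{Q}(G_{\mathcal{A}})$, Theorem~\ref{Thm-sym} furnishes a $\tilde M \in \text{Alg}\{\mathbf{Q}(G'_{\mathcal{A}})\}$ with $\Pi_\pm \tilde M \Pi_\pm = \Pi_\pm M \Pi_\pm$. Because $\tilde M$ lies in a von Neumann algebra whose minimal central projections are the $P_\mu$, it decomposes block-diagonally as $\tilde M = \sum_\mu P_\mu \tilde M P_\mu$. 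The plan is to run the formula on this $\tilde M$ and track the compression by $\Pi_\pm$: within each $\mu$-block the duality between $\mathbf{Q}(G_{\mathcal{A}})$ and $\mathbf{Q}(G'_{\mathcal{A}})$ provided by Theorem~\ref{Thm-sym} makes $\mathcal{T}_{G_{\mathcal{A}}}^{\otimes n}(\Pi_\pm \tilde M \Pi_\pm)$ proportional to $\tilde M$ with proportionality constant precisely $p_{\mu,\pm}$, so that the $p_{\mu,\pm}^{-1}$ prefactor reconstitutes $\sum_\mu P_\mu \tilde M P_\mu$ modulo an operator annihilated by $\Pi_\pm(\cdot)\Pi_\pm$. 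The trace-type correction $(\mathrm{tr}(\cdot)/d^n)[I^{\otimes n} - \Phi_\pm(I^{\otimes n})]$ is calibrated so that its $\Pi_\pm$-compression vanishes on the relevant sector. The main technical difficulty is this per-block calculation — showing that the twirl sandwiched between two copies of $\Pi_\pm$ acts as a scalar-valued multiplication on each central component of $\text{Alg}\{\mathbf{Q}(G'_{\mathcal{A}})\}$ with the scalar $p_{\mu,\pm}$ — which I would attack by diagonalizing in the Schur-like decomposition of the $\mu$-block guaranteed by the generalized duality of Theorem~\ref{Thm-sym}.
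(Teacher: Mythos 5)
Your route is correct in outline, but it is genuinely different from the paper's. The paper never invokes Theorem \ref{Thm-sym}: it works directly with the hypothesis that $\Pi_{\pm}M\Pi_{\pm}$ commutes with $\textbf{Q}(G_{\mathcal{A}})$, uses Lemma \ref{Lem-Per} to upgrade this to commutation (inside $\Pi_{\pm}$-sandwiches) with arbitrary product unitaries $V_{1}\otimes\cdots\otimes V_{n}\in G_{\mathcal{A}}^{\times n}$ via $\textbf{P}(s)\Pi_{\pm}=\pm\Pi_{\pm}$, then multiplies by $(V_{1}^{\dag}\otimes\cdots\otimes V_{n}^{\dag})\Pi_{\pm}$ and integrates over the Haar measure to obtain the identity $\Pi_{\pm}M\Pi_{\pm}\,\mathcal{T}_{G_{\mathcal{A}}}^{\otimes n}(\Pi_{\pm})\,\Pi_{\pm}=\Pi_{\pm}\,\mathcal{T}_{G_{\mathcal{A}}}^{\otimes n}(\Pi_{\pm}M\Pi_{\pm})\,\Pi_{\pm}$, which is then inverted on the support of $\mathcal{T}_{G_{\mathcal{A}}}^{\otimes n}(\Pi_{\pm})=\bigoplus_{\mu}p_{\mu}P_{\mu}$. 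You instead import the non-constructive existence of $\tilde{M}\in\text{Alg}\{\textbf{Q}(G'_{\mathcal{A}})\}$ with $\Pi_{\pm}\tilde{M}\Pi_{\pm}=\Pi_{\pm}M\Pi_{\pm}$ from Theorem \ref{Thm-sym} and check that the explicit formula reproduces it; since $\Phi_{\pm}$ depends on its argument only through $\Pi_{\pm}(\cdot)\Pi_{\pm}$, this is legitimate, and it is not circular because Theorem \ref{Thm-sym} is proved from Theorem \ref{cor-pairs} and Lemma \ref{Lem-Per} without using Theorem \ref{pro-map-s}. What your route buys is brevity: the step you flag as the main obstacle is in fact immediate, because $\tilde{M}$ commutes with every element of $G_{\mathcal{A}}^{\times n}$ and with $\Pi_{\pm}$, so $\mathcal{T}_{G_{\mathcal{A}}}^{\otimes n}(\Pi_{\pm}\tilde{M}\Pi_{\pm})=\tilde{M}\,\mathcal{T}_{G_{\mathcal{A}}}^{\otimes n}(\Pi_{\pm})$ and each block is simply $p_{\mu}\,\tilde{M}P_{\mu}$; no diagonalization inside the $\mu$-blocks is needed. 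What the paper's route buys is self-containedness: Theorem \ref{pro-map-s} is meant as the constructive counterpart of Theorem \ref{Thm-sym}, and its proof rests only on Lemma \ref{Lem-Per} and Corollary \ref{lem-twr}.

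Two points in your write-up need tightening, both fixable with tools you already cite. First, centrality of $\mathcal{T}_{G_{\mathcal{A}}}^{\otimes n}(\Pi_{\pm})$ in $\text{Alg}\{\textbf{Q}(G'_{\mathcal{A}})\}$ does not follow merely from commuting with $\textbf{Q}(G_{\mathcal{A}})$ and $\textbf{Q}(G'_{\mathcal{A}})$: membership in $\text{Alg}\{\textbf{Q}(G'_{\mathcal{A}})\}$ requires commutation with all of $G_{\mathcal{A}}^{\times n}$ (which you do have, since the image of the local twirl is $\mathcal{A}^{\otimes n}=\text{Comm}\{G_{\mathcal{A}}^{\times n}\}$, Eq.~(\ref{comm-tens})) together with permutation invariance, combined via Corollary \ref{lem-twr} (the identification of $\text{Alg}\{\textbf{Q}(G'_{\mathcal{A}})\}$ with the permutationally invariant part of $\mathcal{A}^{\otimes n}$ is Lemma \ref{Lem-Per} applied to $G'_{\mathcal{A}}$, not Theorem \ref{cor-pairs} itself). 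Second, both your argument and the paper's need the support inclusion $\Pi_{\pm}\left(\bigoplus_{\mu\in\Gamma}P_{\mu}\right)=\Pi_{\pm}$, i.e. $P_{\mu}\Pi_{\pm}=0$ whenever $p_{\mu}=0$, both to discard the blocks outside $\Gamma$ and to get $\Pi_{\pm}\left[I^{\otimes n}-\Phi_{\pm}(I^{\otimes n})\right]\Pi_{\pm}=0$; you gesture at this but should prove it, e.g. since $P_{\mu}$ is a fixed point of the self-adjoint twirl, $\text{tr}\left(P_{\mu}\,\mathcal{T}_{G_{\mathcal{A}}}^{\otimes n}(\Pi_{\pm})\right)=\text{tr}\left(P_{\mu}\Pi_{\pm}P_{\mu}\right)\ge 0$ with equality iff $P_{\mu}\Pi_{\pm}=0$ (the paper argues the same fact from positivity of the Haar integrand).
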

This is proven in Appendix ~\ref{Sec-Proof}. This theorem will be particularly useful in the rest of this paper.

\section{General applications in Quantum Information}\label{Gen-appl}

Schur-Weyl duality has many applications in quantum information theory and so we expect that this generalization will as well.  Here we present two specific important examples of these applications. The first example is about finding noiseless subystems for collective noise associated with a gauge group, and the second is about how, for $n$ copies of a system in a pure state confined to the symmetric or antisymmetric subspace, a measurement with \emph{global symmetry} relative to a gauge group can be simulated by one that has \emph{local symmetry} for that group.  This second result is the seed of the next section, where we will consider the consequences for multi-copy estimation problems in more depth.


\subsection{Characterizing the multi-partite operators that are globally symmetric}\label{sec-charac-glob}

Many applications of Schur-Weyl duality in quantum information theory are based on the fact that it provides a simple characterization of all operators in $\text{End}\left((\mathbb{C}^{d})^{\otimes n}\right)$  which commute with $\textbf{Q}(\text{U}(d))$ or conversely  all operators in $\text{End}\left((\mathbb{C}^{d})^{\otimes n}\right)$  which commute with $\textbf{P}(\mathcal{S}_{n})$.

Theorem \ref{cor-pairs}  and its corollary \ref{cor-gauge-com} immediately yield a characterization of operators with global symmetry under a gauge group   $G$, i.e. the operators in $\text{End}\left((\mathbb{C}^{d})^{\otimes n}\right)$  which commutes with $\textbf{Q}(G)$ -- they lie in the span of the local action of $G'$, i.e. $G'^{\times n}$, and the action of the permutation group, i.e. $\textbf{P}(\mathcal{S}_{n})$.
Similarly, theorem \ref{Thm-sym} 
yields a characterization of operators confined to the symmetric and antisymmetric subspaces that have global symmetry under $G$.  These are simply the operators in the span of the collective action of $G'$.

A straightforward application of this characterization is to find noiseless subsystems. In the following we present a simple example of this.





\subsubsection{Example: Finding noiseless subsystems}\label{sec:noiseless}

We begin by reviewing the standard story about noiseless subsystems. Suppose one is going to send quantum information through a noisy qubit channel, where the noise is described by a unitary that is sampled at random, but wherein the same unitary acts on each qubit.  For example, the qubits could be spin-half particles with a nonzero magnetic moment and the noise could be due to a random magnetic field. As another example, the qubits could be realized as the polarization of photons sent through a fiber-optic cable and the noise could be due to random strains in the cable that induce changes in the polarization. In many cases, it is a good approximation to assume that the noise varies slowly compared to the interval between the qubits as they pass down the channel (or that it varies little on the distance scale between the qubits in the case of a quantum memory), in which case one can assume that the same random unitary is applied to all $n$ qubits. Then it turns out that, due to the symmetry of the noise, it is possible to encode classical and quantum information in the $n$ qubit system in such a way that it remains unaffected by the noise \cite{Noiseless1,Noiseless1',Noiseless2, Noiseless2',BRS03}. To see this, note that under these assumptions, the noise is described by the group $\textbf{Q}(\text{U}(2))$. Any state in the commutant of  $\textbf{Q}(\text{U}(2))$ is invariant under the noise. Furthermore, any state in the span of  $\textbf{P}(\mathcal{S}_{n})$ has this property as well.  Now using Schur-Weyl duality one can conclude that the span of $\textbf{P}(\mathcal{S}_{n})$ is equal to the commutant of $\textbf{Q}(\text{U}(2))$ and therefore \emph{every} state which is unaffected by this type of noise is in the span of $\textbf{P}(\mathcal{S}_{n})$.


In a more general model,  the system sent through the channel may have other degrees of freedom which can potentially be used to send quantum information. In other words, the Hilbert space describing each particle sent through the channel is not $\mathbb{C}^{2}$ but it is $\mathbb{C}^{2} \otimes \mathcal{H}$ where the finite dimensional Hilbert space $\mathcal{H}$ describes another degree of freedom which is invariant under the noise in the channel.  For example, in the case of photons  one can use time-bin encoding in addition to the polarization encoding to encode an extra qubit in each photon. But the time-bin qubit does not suffer from depolarization or polarization mode-dispersion. In other words, this degree of freedom is invariant under polarization noise.

So we assume the noise in the channel is described by a random unitary in the form of $V\otimes \mathbb{I}_{\mathcal{H}}$ where $V\in\text{U}(2)$ and it acts on $\mathbb{C}^{2}$ and $\mathbb{I}_{\mathcal{H}}$ is the identity operator acting on $\mathcal{H}$. In the case of a single system sent through the channel  ($n=1$), it is clear that any information encoded in the subsystem $\mathcal{H}$ is preserved under this type of noise.
Consider the case of many systems sent through the channel ($n>1$).  The question is what are the set of all states of the $n$ systems which are invariant under this type of noise. In other words, what is the set of all states which commute with $\textbf{Q}(\text{U}(2)\otimes  \mathbb{I}_{\mathcal{H}})$? Clearly, in this case, the usual form of Schr-Weyl duality does not apply. But one can use the generalization of Schur-duality we presented in the previous section to find these density operators.

To see this, first note that the group of unitaries   $G\equiv\{V\otimes \mathbb{I}_{\mathcal{H}}:V\in\text{U}(2)\}$ is the gauge group of the algebra $ \mathbb{I}_{2}\otimes \text{End}(\mathcal{H})$ where $\mathbb{I}_{2}$ is the identity operator on $\mathbb{C}^{2}$. Then corollary \ref{cor-gauge-com} (which is indeed another version of theorem \ref{cor-pairs}) gives the characterization of all operators which commute with $\textbf{Q}(G)$: These are the set of all operators in $\text{Alg}\{\textbf{P}(\mathcal{S}_{n}), \left(\mathbb{I}_{2}\otimes \text{End}(\mathcal{H})\right)^{\otimes n}\}$. So the set of all density operators in this algebra is exactly the set of all states which remain unaffected under this type of noise.  This means that to protect information one needs to encode it in either the invariant degree of freedom of each subsystem ($\mathcal{H}$) or in the permutational degree of freedom.  Again note that even without using our results,  it is straightforward to see that all of these states remain unchanged under this noise. The non-trivial consequence of the corollary is that this algebra includes all such density operators.

Note that if the group $H\subseteq \text{U}(d)$ describing the noise is not a gauge group then the $\text{Comm}\{\textbf{Q}(H)\}$ can be larger than $\text{Alg}\{(H')^{\otimes n},\textbf{P}(\mathcal{S}_{n})\}$ as it is shown by a simple example in section \ref{app-count-ex0} (where the group $H$ is the $j=1$ representation of $\text{SU}(2)$ in $\mathbb{C}^{3}$). This means that, unlike the case of noise described by a gauge group,  one can encode quantum information in a space which is larger than the permutational degree of freedom of the systems together with the invariant degrees of freedom of each system.

\subsection{Promoting global symmetries to local symmetries}\label{sec:Promote}

Another important application of this new duality is that in particular cases one can \emph{promote  a global symmetry to local symmetry} as we will describe in this section.

Recall the definition of local and global symmetries for an arbitrary operator $M \in \text{End}{(\mathbb{C}^d)^{\otimes n}}$.  $M$ has a global symmetry with respect to the symmetry group $H\subseteq \text{U}(d)$ if $M$ is invariant under the collective action of $H$, as specified in Eq.~\eqref{global}, i.e. if $M\in \text{Comm}\{\textbf{Q}(H)\}$.  Similarly we say that $M$ has local symmetry with respect to the symmetry group $H$ if it is invariant under the local action of $H$, as specified in Eq.~\eqref{local}, i.e. if $M\in \text{Comm}(H^{\times n})$.

As noted in the introduction, 
the condition of local symmetry is generally much stronger than global symmetry.  For example, if $H$ is the group of rotations then  global symmetry of a Hamiltonian with respect to $H$ implies that the vector of the total angular momentum of $n$ systems is a constant of the motion in the dynamics generated by this Hamiltonian.  However, in this case the angular momenta of the subsystems are not necessarily conserved and  the $n$ subsystems can exchange angular momentum with one another. On the other hand, having a Hamiltonian with local symmetry with respect to the group of rotation implies the existence of non-trivial constants of motion defined on each of the $n$ subsystems. So in this case we will have $n$ conserved vectors of angular momentum and under this type of Hamiltonian, subsystems cannot exchange angular momentum.

Now consider the case where the symmetry under consideration is described by a gauge group $G_{\mathcal{A}}$ of a von Neumann algebra $\mathcal{A}\subseteq \text{End}(\mathbb{C}^{d})$.
Note that if $M\in \text{End}((\mathbb{C}^d)^{\otimes n})$ has global symmetry with respect to $G_{\mathcal{A}}$ then $\Pi_{\pm}M\Pi_{\pm}$ will also have global symmetry with respect to $G_{\mathcal{A}}$. Then according to theorem \ref{pro-map-s} for any operator $M$ with global symmetry with respect to $G_{\mathcal{A}}$ there is an operator $\tilde{M}_{\pm}$ which has local symmetry with respect to $G_{\mathcal{A}}$ and is equal to $M$ within the symmetric (anti-symmetric) subspace,
$$\Pi_{\pm} \tilde{M}_{\pm} \Pi_{\pm}=\Pi_{\pm} M \Pi_{\pm} $$
 One can choose $\tilde{M}_{\pm}=\mathcal{L}_{\pm}(M)$  where $\mathcal{L}_{\pm}$ is the completely positive, unital superoperator defined in theorem \ref{pro-map-s}.  So using the terminology of local and global symmetry we can interpret theorem \ref{pro-map-s} as \emph{promoting global symmetry to local symmetry}.


In the following we explore the important consequence of
promoting global symmetry to local symmetry  for the case of measurements.

\subsubsection{Measurements with Global and Local symmetry}

The most general type of measurements that can be performed on a quantum system  can be described by a POVM (positive operator-valued measure) (See e.g. \cite{Holevo, Chiribella}).  Consider a POVM $M:\sigma(\Omega)\rightarrow \text{End}\left((\mathbb{C}^d)^{\otimes n}\right)$. Here, $\Omega$ denotes the space of outcomes of the measurement. This is a measure space equipped with a $\sigma$-algebra of subsets, denoted by $\sigma(\Omega)$. The elements of the $\sigma$-algebra are subsets of $\Omega$, where  $B\subseteq \Omega$ corresponds to the event that the outcome of measurement is an element of $B$.

We say a POVM $M:\sigma(\Omega)\rightarrow \text{End}\left((\mathbb{C}^d)^{\otimes n}\right)$ has global/local symmetry with respect to the  group $H\subseteq \text{U}(d)$  if for any $B\in \sigma(\Omega)$, the operator $M(B)$ has  global/local symmetry with respect to $H$, i.e. it satisfies Eq.(\ref{global}) or  Eq.(\ref{local}) respectively. Again, typically the local symmetry condition on a measurement is a much more restrictive condition.

In the following we first explore the consequences of a measurement having local symmetry and then we see how in the case of gauge symmetries  using the generalization of Schur-Weyl duality and in particular theorem \ref{pro-map-s}, one can  promote global symmetry of a measurement to a local symmetry (for states whose support is restricted to the symmetric or anti-symmetric subspace). Since the locally symmetric measurements typically are a much smaller class of measurements, this technique will be particularly useful  in quantum estimation problems where one seeks to find the measurement that optimizes some figure of merit.  Also, this trick is useful for determining whether a given estimation problem requires a nonlocal measurement on the $n$ subsystems (i.e. one that requires a quantum channel or entanglement) or whether a local measurement suffices. More generally, it can set an upper bound on the amount of entanglement required to achieve a particular degree of success in estimation.  In the following we explain more about this.

One way to understand the restriction of local symmetry of measurements is via the following observation: Let the subgroup $H$ of $\text{U}(d)$ be a subgroup with unique Haar measure $d\mu$ and consider the twirling  superoperator defined in Eq.(\ref{Def:twirling}). Then  local symmetry  of POVM $M:\sigma(\Omega)\rightarrow \text{End}\left((\mathbb{C}^d)^{\otimes n}\right)$ with respect to $H$ implies that $\mathcal{T}^{\otimes n}_{H}(M)=M$. This in turn implies that for any event $B\in \sigma(\Omega)$ and for any arbitrary density operator $\rho\in\text{End}\left((\mathbb{C}^{d})^{\otimes n}\right)$ it holds that
\begin{align*}
\text{Pr}(B)&=\text{tr}\left(M(B)\rho \right)\\ &=\text{tr}\left(\mathcal{T}^{\otimes n}_{H}\left(M (B)\right)\rho \right)=\text{tr}\left(M(B)\mathcal{T}^{\otimes n}_{H}(\rho) \right)
\end{align*}
In other words, for any arbitrary state $\rho$ if before measurement $M$, we apply the local twirling operation  $\mathcal{T}_{H}$, then we do not disturb the statistics of the measurement $M$. Note that by applying the  twirling operation before the measurement, we are mapping the state to  $\text{Alg}\{H'\}^{\otimes n}$ which typically can be much smaller than the space of all density operators in $\text{End}\left((\mathbb{C}^{d})^{\otimes n}\right)$. Applying this twirling operation decreases the size of the subsystems of the Hilbert space on which the state could be non-trivial and, as we will see later, this fact can set an upper bound on the amount of entanglement required to achieve a particular inference.


This is more clear in the case of gauge  groups. Let $G_{\mathcal{A}}$ be the gauge group of a von Neumann algebra $\mathcal{A}\subseteq \text{End}(\mathbb{C}^{d})$. Then  for any state $\rho$, the state $\mathcal{T}^{\otimes n}_{G_{\mathcal{A}}}(\rho)$ is in  $\mathcal{A}^{\otimes n}$. Using the decomposition of the matrix algebra $\mathcal{A}$ given by Eq.(\ref{decom-algeb}), one can find a simple characterization of the form of state $\mathcal{T}^{\otimes n}_{G_{\mathcal{A}}}(\rho)$ for arbitrary $\rho$.

 For instance, consider the Hilbert space $\mathcal{H}=\mathcal{H}_{L}\otimes \mathcal{H}_{R}$ where  $\mathcal{H}_{L}$ and $\mathcal{H}_{R}$ are two finite-dimensional Hilbert spaces.
 The system of interest decomposes into two subsystems: the left subsystem, described by $\mathcal{H}_{L}$, and the right subsystem, described by $\mathcal{H}_{R}$. Let the von Neumann Algebra $\mathcal{A}$ be $\text{End}(\mathcal{H}_{L})\otimes \mathbb{I}_{\mathcal{H}_R}$ where $\mathbb{I}_{\mathcal{H}_R}\in \text{End}(\mathcal{H}_{R})$ is the identity operator on $\mathcal{H}_{R}$.  As we have seen in the above, for any measurement with local symmetry with respect to the group $G_{\mathcal{A}}$ the statistics of outcomes of the measurement on state $\rho$ is exactly the same as the statistics of the outcomes of that measurement on state $\mathcal{T}^{\otimes n}_{G_{\mathcal{A}}}(\rho)$.  But for any state $\rho\in\text{End}\left(\mathcal{H}^{\otimes n}\right)$, it holds that $\mathcal{T}^{\otimes n}_{G_{\mathcal{A}}}(\rho)\in\mathcal{A}^{\otimes n}=\text{End}(\mathcal{H}_{L}^{\otimes n})\otimes \mathbb{I}^{\otimes n}_{\mathcal{H}_R}$. 
In other words, this means that if before a measurement $M$ with local symmetry with respect to $G_{\mathcal{A}}$, we  discard all the $n$ right subsystems, we still can simulate the measurement $M$ by performing a measurement on the left subsystems.
So, effectively the Hilbert space which is relevant in this problem is $\mathcal{H}^{\otimes n}_{L}$ which is of a smaller size than the Hilbert space $\mathcal{H}^{\otimes n}$. This clearly puts an upper bound on the amount of entanglement required to implement measurement $M$.  We can extend this argument to the case of an arbitrary von Neumann algebra $\mathcal{A}$.

A particularly important case is where $\mathcal{A}$ is a commutative algebra. In this case, for any arbitrary state $\rho\in \text{End}\left((\mathbb{C}^{d})^{\otimes n}\right)$, the state $\mathcal{T}^{\otimes n}_{G_{\mathcal{A}}}(\rho)$, as an element of $\mathcal{A}^{\otimes n}$, commutes with all generators of $\mathcal{A}^{\otimes n}$. So, if on each individual qudit we measure an observable (projective von-Neumann measurement) inside the algebra $\mathcal{A}$ we will not change the state $\mathcal{T}^{\otimes n}_{G_{\mathcal{A}}}(\rho)$.  But since $\mathcal{T}^{\otimes n}_{G_{\mathcal{A}}}(\rho)\in \mathcal{A}^{\otimes n}$, we can uniquely specify  $\mathcal{T}^{\otimes n}_{G_{\mathcal{A}}}(\rho)$ by measuring a set of  observables in $\mathcal{A}$ which generates the algebra $\mathcal{A}$ on each individual qudit (note that  generators of $\mathcal{A}$ all commute with each other and so can be measured simultaneously). However,  after these measurements we know the exact description of the state $\mathcal{T}^{\otimes n}_{G_{\mathcal{A}}}(\rho)$   and so we can then simulate any other measurement by a post-processing of the data we have gathered in these measurements. Finally, we notice that measuring generators of $\mathcal{A}$ on each individual qudit for state $\mathcal{T}^{\otimes n}_{G_{\mathcal{A}}}(\rho)$ gives exactly the same statistics as measuring these generators on the original state $\rho$. So we can summarize this discussion as follows.

\begin{proposition} (\textbf{Commutative Algebras})\label{lem:com}
Let $G_{\mathcal{A}}$ be the gauge group of the commutative von Neumann algebra $\mathcal{A}\subseteq \text{End}(\mathbb{C}^{d})$. Then any measurement on $(\mathbb{C}^{d})^{\otimes n}$ which has local symmetry with respect to $G_{\mathcal{A}}$ can be realized by measuring  a set of observables which generate $\mathcal{A}$ on each qudit  individually followed by a classical processing of the outcomes.
\end{proposition}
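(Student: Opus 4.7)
The plan is to translate the local symmetry condition on the POVM into the statement that every POVM element lies in $\mathcal{A}^{\otimes n}$, then exploit commutativity of $\mathcal{A}$ to collapse this algebra to operators diagonal in a single product basis, and finally realize the measurement as a product projective measurement on each system followed by classical post-processing.

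First I would observe that local symmetry of $M$ with respect to $G_{\mathcal{A}}$ says precisely that $M(B)\in\text{Comm}\{G_{\mathcal{A}}^{\times n}\}$ for every event $B\in\sigma(\Omega)$, and by Eq.~(\ref{comm-tens}) this commutant equals $\mathcal{A}^{\otimes n}$. Because $\mathcal{A}$ is a commutative finite-dimensional von Neumann algebra, the decomposition Eq.~(\ref{decom-algeb}) specializes (with every $m_{J}=1$) to yield an orthonormal basis $\{\ket{e_{i}}\}_{i=1}^{d}$ of $\mathbb{C}^{d}$ such that $\mathcal{A}$ is exactly the set of operators diagonal in this basis. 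Hence a finite set of mutually commuting Hermitian generators of $\mathcal{A}$ exists (one can even take the single observable $A=\sum_{i}\lambda_{i}\ket{e_{i}}\bra{e_{i}}$ with distinct $\lambda_{i}$), and $\mathcal{A}^{\otimes n}$ consists of operators diagonal in the product basis $\{\ket{e_{i_{1}}}\otimes\cdots\otimes\ket{e_{i_{n}}}\}$. In particular each $M(B)$ is diagonal in this product basis.

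Next I would describe the explicit simulation: on each of the $n$ subsystems, simultaneously measure the chosen generators of $\mathcal{A}$ (they commute), obtaining an outcome tuple $\vec{i}=(i_{1},\ldots,i_{n})$ identifying the product basis vector $\ket{e_{\vec{i}}}\equiv\ket{e_{i_{1}}}\otimes\cdots\otimes\ket{e_{i_{n}}}$. Then apply the classical post-processing rule that, conditional on $\vec{i}$, outputs the reported event $B$ according to the probabilities $\text{Pr}(B\,|\,\vec{i})=\bra{e_{\vec{i}}}M(B)\ket{e_{\vec{i}}}$; these form a legitimate family of conditional distributions because $M(\Omega)=I^{\otimes n}$ has unit diagonal entries. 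To verify that this reproduces the statistics of $M$, I would compute the overall probability of reporting $B$ on state $\rho$ as $\sum_{\vec{i}}\bra{e_{\vec{i}}}\rho\ket{e_{\vec{i}}}\,\bra{e_{\vec{i}}}M(B)\ket{e_{\vec{i}}}$, which equals $\mathrm{tr}(M(B)\rho)$ because $M(B)$ is diagonal in the product basis.

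There is no serious obstacle -- the argument is essentially bookkeeping once $M(B)\in\mathcal{A}^{\otimes n}$ is in hand -- but the key conceptual point to flag is where commutativity of $\mathcal{A}$ is used: it is precisely what collapses $\mathcal{A}^{\otimes n}$ to operators diagonal in a single product basis, thereby making both the local joint measurement of a generating set of $\mathcal{A}$ on each system \emph{and} the subsequent classical post-processing well defined. Without commutativity $\mathcal{A}^{\otimes n}$ would contain operators not simultaneously diagonalizable by any product basis, and a product-local strategy followed by classical post-processing would in general be insufficient to reproduce $M$.
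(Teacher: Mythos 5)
Your reduction of local symmetry to $M(B)\in\mathcal{A}^{\otimes n}$ via Eq.~(\ref{comm-tens}) is exactly right, and your choice to work directly with the POVM elements — rather than, as the paper does, arguing that the locally twirled state $\mathcal{T}^{\otimes n}_{G_{\mathcal{A}}}(\rho)$ lies in $\mathcal{A}^{\otimes n}$, is undisturbed by local measurements of the generators, and is uniquely determined by their statistics — is a legitimate and arguably more explicit route. There is, however, one incorrect step. A commutative von Neumann algebra $\mathcal{A}\subseteq\text{End}(\mathbb{C}^{d})$ is \emph{not} in general the full diagonal algebra in some basis: the decomposition (\ref{decom-algeb}) with all $m_{J}=1$ gives $\mathcal{A}\cong\bigoplus_{J}\mathbb{C}\,\mathbb{I}_{n_{J}}$, i.e. operators that are diagonal in a suitable basis \emph{and constant on each degeneracy block}. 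When some $n_{J}>1$, your proposed generator $A=\sum_{i}\lambda_{i}|e_{i}\rangle\langle e_{i}|$ with distinct $\lambda_{i}$ does not lie in $\mathcal{A}$ (it fails to commute with the $\text{U}(n_{J})$ factors of $G_{\mathcal{A}}$) and generates a strictly larger, maximal abelian algebra. This matters because the proposition asserts realizability by measuring generators \emph{of $\mathcal{A}$} — e.g. the possibly degenerate observable $A$ itself in the applications — not some nondegenerate refinement of it; as written, your simulation only shows that \emph{some} unentangled product measurement plus post-processing suffices.

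The repair is small. Since $M(B)\in\mathcal{A}^{\otimes n}$ and $\mathcal{A}^{\otimes n}$ is commutative with minimal projectors $P_{J_{1}}\otimes\cdots\otimes P_{J_{n}}$ (the $P_{J}$ being the spectral projectors of $\mathcal{A}$), one can write $M(B)=\sum_{\vec{J}}c_{\vec{J}}(B)\,P_{J_{1}}\otimes\cdots\otimes P_{J_{n}}$, where positivity of $M(B)$ gives $c_{\vec{J}}(B)\ge 0$ and $M(\Omega)=I^{\otimes n}$ gives $c_{\vec{J}}(\Omega)=1$, so the $c_{\vec{J}}(\cdot)$ are legitimate conditional probability measures. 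Measuring a generating set of $\mathcal{A}$ on each system resolves exactly the block label $J_{k}$ of that system, and outputting $B$ with probability $c_{\vec{J}}(B)$ reproduces $\text{tr}\left(M(B)\rho\right)=\sum_{\vec{J}}c_{\vec{J}}(B)\,\text{tr}\left((P_{J_{1}}\otimes\cdots\otimes P_{J_{n}})\rho\right)$. Equivalently: your diagonal entries $\langle e_{\vec{i}}|M(B)|e_{\vec{i}}\rangle$ are automatically constant within blocks, so the post-processing depends only on the outcomes of the $\mathcal{A}$-generator measurements. With that amendment your argument establishes the proposition as stated.
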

Therefore to implement a measurement which has local symmetry with respect to the gauge group $G_{\mathcal{A}}$ of a commutative algebra $\mathcal{A}$ one does not need any entanglement or adaptive measurements.

\subsubsection{From Global to Local symmetry}


Having studied the consequences of local symmetry for measurements, we now show how the result of the previous section and in particular theorem \ref{pro-map-s} implies that for states whose support is restricted to the symmetric/anti-symmetric subspace, the global symmetry of a measurement with respect to a gauge group can be promoted to a local symmetry.

\begin{corollary} (\textbf{Symmetry of Measurements})\label{Thm-Meas-LG Equiv}
Let $G_{\mathcal{A}}$ be the gauge group of a von Neumann algebra $\mathcal{A}\subseteq \text{End}(\mathbb{C}^{d})$. Then for any POVM $M:\sigma(\Omega)\rightarrow \text{End}((\mathbb{C}^d)^{\otimes n})$ which has global symmetry with respect to $G_{\mathcal{A}}$  there is a POVM with local symmetry with respect to $G_{\mathcal{A}}$ (i.e. $\tilde{M}:\sigma(\Omega)\rightarrow \mathcal{A}^{\otimes n})$  which has exactly the same statistics for all states whose supports are confined to the symmetric (anti-symmetric)  subspace. In particular, one can choose  $\tilde{M}_{\pm}=\mathcal{L}_{\pm}(M)$  where $\mathcal{L}_{\pm}$ is the superoperator defined in theorem \ref{pro-map-s}.
\end{corollary}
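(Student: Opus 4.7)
The plan is to verify that the candidate POVM explicitly identified in the statement, namely $\tilde{M}_{\pm}(B) \equiv \mathcal{L}_{\pm}(M(B))$ for each $B \in \sigma(\Omega)$, meets all three requirements: (a) it is a genuine POVM on $\sigma(\Omega)$, (b) each $\tilde{M}_{\pm}(B)$ has local symmetry with respect to $G_{\mathcal{A}}$, and (c) its outcome statistics agree with those of $M$ on every state supported in $[(\mathbb{C}^d)^{\otimes n}]_{\pm}$. All three will follow almost directly from the three properties of $\mathcal{L}_{\pm}$ established in Theorem \ref{pro-map-s}; the role of the global symmetry hypothesis on $M$ is precisely to activate the third of those properties.

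First I would check that $\tilde{M}_{\pm}$ is a POVM. Positivity of each $\tilde{M}_{\pm}(B)$ follows because $\mathcal{L}_{\pm}$ is completely positive and hence maps positive operators to positive operators. Normalization $\tilde{M}_{\pm}(\Omega) = I^{\otimes n}$ follows from $M(\Omega) = I^{\otimes n}$ together with the unitality of $\mathcal{L}_{\pm}$. Countable additivity $\tilde{M}_{\pm}(\bigsqcup_i B_i) = \sum_i \tilde{M}_{\pm}(B_i)$ is inherited from that of $M$ by linearity of $\mathcal{L}_{\pm}$.

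Next I would establish the local symmetry of $\tilde{M}_{\pm}(B)$. By property 2 of Theorem \ref{pro-map-s}, $\tilde{M}_{\pm}(B) \in \mathcal{A}^{\otimes n}$. To see that membership in $\mathcal{A}^{\otimes n}$ is exactly local symmetry with respect to $G_{\mathcal{A}}$, observe that the operators $I^{\otimes k}\otimes V\otimes I^{\otimes (n-k-1)}$ for $V\in G_{\mathcal{A}}$ and $0\le k\le n-1$ generate the algebra $\text{Alg}\{G_{\mathcal{A}}\}^{\otimes n}$, so an operator has local symmetry with respect to $G_{\mathcal{A}}$ iff it lies in $\text{Comm}\{\text{Alg}\{G_{\mathcal{A}}\}^{\otimes n}\}$. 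By the commutation theorem for tensor products, Eq.~(\ref{com-tensor-product}), this commutant equals $\text{Comm}\{G_{\mathcal{A}}\}^{\otimes n}$, and by Eq.~(\ref{com-gauge}) the latter is exactly $\mathcal{A}^{\otimes n}$. So $\tilde{M}_{\pm}(B) \in \mathcal{A}^{\otimes n}$ is equivalent to $\tilde{M}_{\pm}(B)$ having local symmetry with respect to $G_{\mathcal{A}}$, as required.

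Finally I would verify the equality of outcome statistics on states supported on $[(\mathbb{C}^d)^{\otimes n}]_{\pm}$. Take such a $\rho$, so $\rho = \Pi_{\pm}\rho\Pi_{\pm}$. Because $M$ has global symmetry with respect to $G_{\mathcal{A}}$, each $M(B)$ commutes with $\textbf{Q}(G_{\mathcal{A}})$, which is exactly the hypothesis needed to invoke property 3 of Theorem \ref{pro-map-s} and conclude $\Pi_{\pm}\mathcal{L}_{\pm}(M(B))\Pi_{\pm} = \Pi_{\pm}M(B)\Pi_{\pm}$. Then cyclicity of the trace gives
\begin{equation}
\text{tr}\bigl(\tilde{M}_{\pm}(B)\,\rho\bigr) = \text{tr}\bigl(\Pi_{\pm}\mathcal{L}_{\pm}(M(B))\Pi_{\pm}\,\rho\bigr) = \text{tr}\bigl(\Pi_{\pm}M(B)\Pi_{\pm}\,\rho\bigr) = \text{tr}\bigl(M(B)\,\rho\bigr),
\end{equation}
for every $B$, which is the required identity of distributions. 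There is no substantive obstacle here: the genuine work was done in establishing Theorem \ref{pro-map-s}, and the main care needed is just the bookkeeping in step three, namely recognizing that "local symmetry with respect to $G_{\mathcal{A}}$" and "membership in $\mathcal{A}^{\otimes n}$" are the same condition via the commutation theorem.
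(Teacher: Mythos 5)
Your proposal is correct and follows essentially the same route as the paper's proof: apply the unital, completely positive map $\mathcal{L}_{\pm}$ to $M$ to get a POVM, invoke property 2 of Theorem \ref{pro-map-s} for membership in $\mathcal{A}^{\otimes n}$ (i.e.\ local symmetry), and use property 3 together with $\rho=\Pi_{\pm}\rho\Pi_{\pm}$ and cyclicity of the trace to match the statistics. The only step the paper spells out that you elide is that the hypothesis of property 3 concerns $\Pi_{\pm}M(B)\Pi_{\pm}$ commuting with $\textbf{Q}(G_{\mathcal{A}})$ rather than $M(B)$ itself, which follows immediately from the global symmetry of $M(B)$ because $\Pi_{\pm}$ commutes with every $V^{\otimes n}$.
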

\begin{proof}
 First, recall that if $N:\sigma(\Omega)\rightarrow\text{End}\left((\mathbb{C}^d)^{\otimes n}\right)$ is a POVM and $\mathcal{E}$ is a unital,  positive quantum operation from $\text{End}((\mathbb{C}^d)^{\otimes n})$ to itself, then $\mathcal{E}(N): \sigma(\Omega)\rightarrow \text{End}((\mathbb{C}^d)^{\otimes n})$ is also a POVM. By theorem \ref{pro-map-s} we know that $\mathcal{L}_{\pm}$ is a unital, completely positive map from $\text{End}\left((\mathbb{C}^d)^{\otimes n}\right)$ to itself. So $\tilde{M}_{\pm}\equiv \mathcal{L}_{\pm}(M)$ where $\tilde{M}_{\pm}:\sigma(\Omega)\rightarrow\text{End}\left((\mathbb{C}^d)^{\otimes n}\right)$ is also a POVM. Furthermore, theorem \ref{pro-map-s} implies that the image of $\mathcal{L}_{\pm}$ has local symmetry with  respect to $G_{\mathcal{A}}$ (i.e. it is in $\mathcal{A}^{\otimes n}$). Finally, by definition, if POVM $M$ has global symmetry with respect to ${G}_{\mathcal{A}}$ then for any $B\in \sigma(\Omega)$, $M(B)$ commutes with $\textbf{Q}(G_{\mathcal{A}})$. Now since all elements of $\textbf{Q}(G_{\mathcal{A}})$ are permutationally invariant they are block diagonal in irreps of the permutation group and in particular they commute with $\Pi_{\pm}$. So if $M(B)$ commutes with $\textbf{Q}(G_{\mathcal{A}})$, then $\Pi_{\pm}M(B)\Pi_{\pm}$ will also commute with $\textbf{Q}(G_{\mathcal{A}})$. Then using  theorem   \ref{pro-map-s} and the definition of $\tilde{M}_{\pm}$ we conclude that for arbitrary event $B\in\sigma(\Omega)$
 \begin{equation}\label{proof-loc-glo}
 \Pi_{\pm} \tilde{M}_{\pm}(B) \Pi_{\pm}=\Pi_{\pm}M(B) \Pi_{\pm}
 \end{equation}
Now consider the probability of event $B\in \sigma(\Omega)$ in the measurement described by POVM $\tilde{M}$ and state $\rho\in \text{End}\left((\mathbb{C}^d)^{\otimes n}\right)$. This probability is  given by $\text{Pr}(B)=\text{tr}( \rho \tilde{M}(B))$. Now if the support of $\rho$ is restricted to the symmetric/anti-symmetric subspace then $\rho=\Pi_{\pm}\rho\Pi_{\pm}$ and so
\begin{align*}
\forall \mu:\ \  \text{Pr}(B)&=  \text{tr}( \rho \tilde{M}(B))= \text{tr}\left(\rho \Pi_{\pm}  \tilde{M}(B) \Pi_{\pm}\right)
\end{align*}
Substituting Eq.(\ref{proof-loc-glo}) into this we conclude that
\begin{align*}
\text{Pr}(B)&= \text{tr}\left(\rho \Pi_{\pm}  \tilde{M}(B) \Pi_{\pm}\right)\\ &=\text{tr}\left(\rho \Pi_{\pm}  {M}(B) \Pi_{\pm}\right)=\text{tr}( \rho M(B))
\end{align*}
But $\text{tr}( \rho M(B))$ is the probability of event $B$ in the measurement described by POVM $M$ performed on state $\rho$. Therefore measurement $\tilde{M}$ simulates measurement $M$.
\end{proof}


Corollary \ref{Thm-Meas-LG Equiv} implies that if the support of state $\rho$  is restricted to the symmetric/anti-symmetric subspace then any measurement with global symmetry with respect to $G_{\mathcal{A}}$ on $\rho$ can be simulated by a measurement on $\mathcal{T}^{\otimes n}_{G_{\mathcal{A}}}(\rho)$. In other words, if one is under the restriction of using measurements which have global symmetry with respect to $G_{\mathcal{A}}$ then by applying  the channel $\mathcal{T}^{\otimes n}_{G_{\mathcal{A}}}$ to a state which is restricted to the symmetric/anti-symmetric subspace one does not lose any information. Note that generally the support of $\mathcal{T}^{\otimes n}_{G_{\mathcal{A}}}(\rho)$  is no longer restricted to the symmetric(anti-symmetric) subspace.

Based on this observation one can put  a strong  condition on the form of measurements which can be  useful, for instance, in finding the optimal measurement in a multi-copy estimation procedure (as we do in the next section). Note that for any given measurement with a global symmetry $G_{\mathcal{A}}$ there are many different other  measurements which will have exactly the same statistics  on all states whose support are restricted to the symmetric/anti-symmetric subspaces. These measurements may require different amounts of entanglement to be implemented. Finding a measurement with local symmetry with respect to $G_{\mathcal{A}}$ in this set of equivalent measurements  has the advantage that one can easily  put an upper bound on the amount of  entanglement required to realize it. In particular, note that the combination of proposition \ref{lem:com} and corollary \ref{Thm-Meas-LG Equiv} implies that if a measurement has global symmetry with respect to $G_{\mathcal{A}}$ the gauge group of a \emph{commutative} algebra $\mathcal{A}$, then among all possible measurements which can simulate this measurements on states with support in symmetric/anti-symmetric  subspace there is one which does not need any entanglement to be realized.

\subsubsection{\text{Example}}

It is useful to consider a concrete example of the simulation of a measurement with global symmetry by one with local symmetry.
To this end, consider a pair of qudits with the total Hilbert space $(\mathbb{C}^{d})^{\otimes 2}$ and consider the unitary group of phase shifts $H_{d}\equiv\{e^{i\phi N}:\phi\in (0,2\pi]\}$ where $N|i\rangle=i |i\rangle$ and $\{|i\rangle: i=0\cdots d-1\}$ is an orthonormal basis for $\mathbb{C}^{d}$. Note that the unitary group $H_{d}$  is indeed a representation of $\text{U}(1)$ on $\mathbb{C}^{d}$.

Now one can easily see that a measurement which has global (local) symmetry with respect to $H_{d}$ has also global (local) symmetry with respect to $ \{e^{i\phi_{0}}e^{i\phi N}:\phi_{0}, \phi\in (0,2\pi]\}$ and vice versa. But in the specific case of $d=2$, the latter group  is a gauge group, as we have seen in section \ref{sec:charac}. In the case of $d=2$ we denote $ \{e^{i\phi_{0}}e^{i\phi N}:\phi_{0}, \phi\in (0,2\pi]\}$  by $G$.



So, in the case of $d=2$ according to  corollary \ref{Thm-Meas-LG Equiv}, we can infer that for states in the symmetric and antisymmetric subspaces, every measurement with global symmetry with respect to $G$ (or equivalently with respect to $H_{2}$) can be simulated with one that has local symmetry with respect to $G$  (or equivalently with respect to $H_{2}$).

The measurements that have local symmetry are those for which all the POVM elements are \emph{locally} diagonal in the eigenspaces of $N$, that is, in the basis $\{|0\rangle,|1\rangle \}$. For a pair of qubits, all such measurements can be realized by a measurement of the basis $\{|00\rangle,|01\rangle,|10\rangle,|11\rangle\}$ followed by a classical post-processing of the outcome.  Note that measurement in basis $\{|00\rangle,|01\rangle,|10\rangle,|11\rangle\}$  can be realized by measuring observable $N$ individually on each qubit. This is expected from  proposition \ref{lem:com} because the algebra of commutants of the gauge group, is the algebra of   diagonal matrices in the basis $\{|0\rangle,|1\rangle\}$ which is a commutative algebra.

On the other hand,  POVM elements  of any measurements that have global symmetry with  respect to $H_{2}$ (or equivalently with respect to $G$) are those which commute with total number operator $N\otimes \mathbb{I} +\mathbb{I} \otimes N$ and so  are block-diagonal relative to the eigenspaces of $N\otimes \mathbb{I} + \mathbb{I} \otimes N$. For example, for any arbitrary $\theta$ the projective measurement in the basis 
$$\{|00\rangle,|11\rangle, \cos\theta |01\rangle+\sin\theta |10\rangle,  \sin\theta |01\rangle-\cos\theta |10\rangle\}$$
has global symmetry with respect to $G$. Note that for all the values of $\theta$ which are not equal to an integer times $\pi/2$ this measurement would be an entangled measurement. 

 Let  $M:\sigma(\Omega)\rightarrow \text{End}\left((\mathbb{C}^{2})^{\otimes 2}\right)$ be the POVM of an arbitrary  measurement on these two qubits which has global symmetry with respect to $G$. Then,  for any arbitrary event $B\in \sigma(\Omega)$, $M(B)$ is block-diagonal relative to the eigenspaces of $N\otimes \mathbb{I} + \mathbb{I} \otimes N$, i.e.
$$\ P_{00}M(B) P_{00}+P_{11}M(B) P_{11}+\left[P_{01}+P_{10}\right]M(B)\left[P_{01}+P_{10}\right]=M(B) $$
where $P_{ij}\equiv|ij\rangle\langle ij|, \ i,j\in\{0,1\}$. Therefore the probability of event $B$ for arbitrary state $\rho$ is equal to
\begin{align*}
\text{tr}\left(M(B) \rho\right)&= \text{tr}\left(P_{00} \rho\right) \text{tr}\left(M(B)P_{00}\right)\\ &+\text{tr}\left(P_{11} \rho\right) \text{tr}(M(B)P_{11})+ \text{tr}\left(\rho \left[P_{01}+P_{10}\right] M(B) \left[P_{01}+P_{10}\right]\right)
\end{align*}
Now if the state $\rho$ is promised to be in the symmetric subspace, i.e. $\Pi_{+}\rho\Pi_{+}=\rho$ then
\begin{align*}
\text{tr}\left(\rho \left[P_{01}+P_{10}\right] M(B) \left[P_{01}+P_{10}\right]\right)&= \text{tr}\left(\Pi_{+}\rho \Pi_{+}  \left[P_{01}+P_{10}\right] M(B) \left[P_{01}+P_{10}\right]\right)\\ &=\text{tr}(\rho  \left[P_{01}+P_{10}\right]) \text{tr}\left(M(B) |\phi^{+}\rangle\langle\phi^{+}|\right)
\end{align*}
where $ |\phi^{+}\rangle\equiv(1/\sqrt{2})(|01\rangle+|10\rangle)$. In other words,
\begin{align*}
\text{tr}(M(B) \rho)=&\text{Pr}(B|00) \text{tr}(P_{00} \rho)+  \text{Pr}(B|11)\text{tr}(P_{11} \rho)+ \text{Pr}(B|01,10)\text{tr}(\rho \left[P_{01}+P_{10}\right])
\end{align*}
where
\begin{align*}
\text{Pr}(B|00) \equiv  \text{tr}\left(M(B)P_{00}\right),\ \ &\text{Pr}(B|11)\equiv\text{tr}\left(M(B)P_{11}\right)\\ \text{and}\ \   &\text{Pr}(B|01,10)\equiv \text{tr}\left(M(B) |\phi^{+}\rangle\langle\phi^{+}|\right)
\end{align*}
and they can be interpreted as the conditional probability of  event $B\in \sigma(\Omega)$ given each of the four outcomes. This means that to simulate this measurement one can measure $N$ individually on each qubit, i.e. project the state of each qubit to the $\{|0\rangle,|1\rangle\}$ basis, and based on the outcomes of these measurements choose an outcome  $\omega\in \Omega$ consistent with these conditional probabilities.




In other words, although the set of  measurements with global symmetry is much larger than the set of measurements with local symmetry,  all the information we can extract using a measurement with global symmetry can also be obtained by a measurement with local symmetry. Note that in this example though implementing the measurement with global symmetry may require entanglement, implementing the measurement with local symmetry does not, nor does it require communication among the subsystems. Also, note that from corollary \ref{Thm-Meas-LG Equiv} we know that this result holds for any arbitrary number of qubits.

It is worth mentioning that  the measurement with local symmetry which we built based on the original measurement is exactly the same measurement as we can get by applying the super-operator $\mathcal{L}_{+}$ defined in theorem \ref{pro-map-s} to the POVM of the original measurement.


 Finally, based on this example we provide another concrete instance that illustrates how the gauge property of the symmetry group is critical for being able to promote global symmetries to local symmetries.  Consider the above example for the case of $d=3$, i.e. for  qu\emph{trits}  rather than qubits. In this case,   $N|i\rangle=i |i\rangle$ where $\{|i\rangle: i=0,\cdots ,2\}$. Then, one can easily see that the group $ \{e^{i\phi_{0}}e^{i\phi N}:\phi_{0}, \phi\in (0,2\pi]\}$ is no longer a gauge group. So, in general  a  measurement  on two qutrits with global symmetry with respect to this group, cannot be necessarily simulated by  a measurement with local symmetry with respect to this group, even under the promise that the state is restricted to the symmetric subspace.


In fact, in this case all the measurements that have local symmetry are those which can be obtained by classical post-processing of a measurement of the product basis $\{ |ij\rangle: i,j=0,1,2 \}$, while those with global symmetry are merely block-diagonal with respect to the eigenspaces of $N\otimes \mathbb{I} + \mathbb{I} \otimes N$.  In particular, a measurement with global symmetry may include the rank-1 projectors onto the vectors 
$|11\rangle + (|02\rangle +|20\rangle)$ and $|11\rangle - (|02\rangle +|20\rangle)$
which both lie in the symmetric subspace.  Such a measurement cannot be simulated by any measurement with local symmetry, which necessarily is unable to detect coherence between $|11\rangle$ and  $|02\rangle +|20\rangle$.

\section{Multi-copy estimation and decision problems}\label{sec:estimation}


The main application of the duality is to multi-copy estimation problems. We begin by setting up a general framework for such problems.

Suppose  Alice randomly chooses a qudit state $\rho$ from the density operators in $\textrm{End}(\mathbb{C}^d)$  according to the probability density function $p$  and then prepares $n$ copies of this state and sends them to Bob through a quantum channel $\mathcal{E}:\text{End}((\mathbb{C}^d)^{\otimes n})\rightarrow \text{End}((\mathbb{C}^d)^{\otimes n})$.  Bob's goal is to estimate some parameter(s) of state $\rho$. (We here adopt the convention that the term ``estimation problem'' includes decision problems as a special case). So upon receiving $n$ systems he performs a measurement and generates some outcome in the outcome space $\Omega$ where  $\Omega$  is a measure space, i.e. a set equipped with a $\sigma$-algebra $\sigma(\Omega)$ of subsets. The elements of the $\sigma$-algebra are subsets of $\Omega$, where  $B\subseteq \Omega$ corresponds to the event that Bob's measurement outcome is an element of $B$. The outcome space $\Omega$  can be continuous (in the case of general estimation problems) or discrete (in the case of decision problems).

\begin{figure} [h]
\begin{center}
\includegraphics[scale=.70]{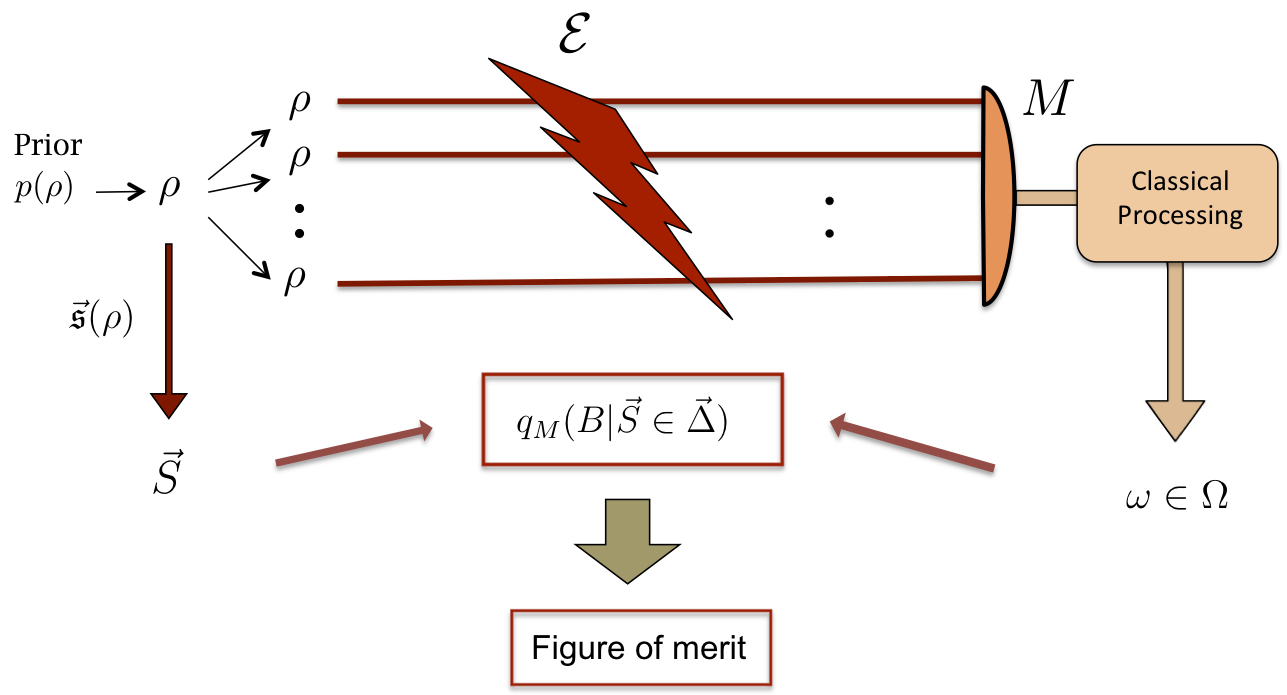}
\caption{Multi-copy estimation problem (see below).}\label{fig:estimation}
\end{center}
\end{figure}

In an arbitrary estimation strategy, Bob measures the $n$ qudits he has received and possibly does some post-processing on the outcome, ultimately generating an output in the set $\Omega$.  The entire strategy, which combines the measurement and the data processing, can be described by a POVM $M:\sigma(\Omega)\rightarrow \text{End}((\mathbb{C}^d)^{\otimes n})$. For simplicity, we will often refer to the estimation strategy as the measurement.


Therefore, the most general figure of merit which evaluates the performance of different strategies in  an estimation problem is a function which assigns real numbers to all POVMs $M:\sigma(\Omega)\rightarrow \text{End}((\mathbb{C}^d)^{\otimes n})$. Equivalently, in the case of the multi-copy  estimation problems  we are considering here, the most general figure of merit can be described as a real functional which acts on the two-variable function
\begin{equation}\label{thesis-figure-merit}
q_M(B|\rho)=\text{tr}\left(M({B})\mathcal{E}(\rho^{\otimes n})\right)
\end{equation}
the conditional probability that, using the strategy described by  POVM $M:\sigma(\Omega)\rightarrow \text{End}((\mathbb{C}^d)^{\otimes n})$, the event $B\in \sigma(\Omega)$ happens given that  Alice has chosen the state $\rho\in  \text{supp}(p)$  and has sent state $\rho^{\otimes n}$ to Bob through the channel $\mathcal{E}$ (here, $\text{supp}(p)$ denotes the support of the distribution $p$).

This describes the most general figure of merit one can define for the multi-copy estimation problems we are considering here. However, in the particular cases where for example the goal is to estimate some parameter of $\rho$, say the expectation value of some observable for state $\rho$, one might use a figure of merit which only depends on the conditional probability of outcomes for different  values of that parameter. Here, we think of the parameter as a random variable defined as a function of the  state Alice chooses each time (The state is random and so any function of the state can be thought of as a random variable).  Let $\mathfrak{s}: \text{supp}(p)\rightarrow \mathbb{R}$ be an arbitrary function from states in $\text{supp}(p)$ to real numbers.   Then this function will map the random state $\rho$ chosen by Alice to a random real variable $S=\mathfrak{s}(\rho)$.
  \ Then if  Bob's goal is to estimate the value of parameter $\mathfrak{s}(\rho)$ for the state $\rho$  which Alice has chosen each time (or to make a decision based on the value of this parameter) a reasonable family of figures of merit to evaluate Bob's performance can be expressed as functionals of
$$q_{M}(B|{S}\in {\Delta}),$$
where ${\Delta}$ is an interval of $\mathbb{R}$.  This is the conditional probability that, using the strategy described by POVM  $M:\sigma(\Omega)\rightarrow\text{End}((\mathbb{C}^d)^{\otimes n})$, event $B$ happens given that the value of the random variable  $S$ is in ${\Delta}$.

On the other hand, one can imagine situations where, for example, the cost for wrong estimation of a parameter ${S}$ not only depends on the  estimated value of ${S}$ and its actual value  but also depends on the value of some other parameter, say ${S}'$, where ${S}'$ is the random variable induced by the function $\mathfrak{s}': \text{supp}(p)\rightarrow \mathbb{R}$ acting on the random state Alice chooses. For instance, one may imagine situations where the cost of wrong estimation of a parameter ${S}$ depends also on the energy of the state, $\text{tr}(\rho H)$, where $H$ is the Hamiltonian. So in this case $\mathfrak{s}'(X)=\text{tr}\left(X H\right)$ defines a relevant parameter to evaluate the performance of the estimation procedure. 

In general, let $$\vec{\mathfrak{s}}(\cdot)=\left(\mathfrak{s}^{(1)}(\cdot),\cdots,\mathfrak{s}^{(l)}(\cdot)\right)$$ be a set of functions where each $\mathfrak{s}^{(i)}(\cdot)$ is a function from $\text{supp}(p)$  to $\mathbb{R}$. Then based on the set of functions $\vec{\mathfrak{s}}(\cdot)=\left(\mathfrak{s}^{(1)}(\cdot),\cdots,\mathfrak{s}^{(l)}(\cdot)\right)$ we can define  a set of random variables $\left({S}^{(1)},\cdots,{S}^{(l)}\right)$ where the random variable $S^{(i)}$ is $\mathfrak{s}^{(i)}(\rho)$ where $\rho$ is the random state Alice has chosen at each round. So a general figure of merit can be expressed as a functional of
$$q_{M}(B|\vec{S} \in \vec{\Delta}),$$
where $\vec{\Delta}$ is an $l$-dimensional interval of $\mathbb{R}^l$.  This is the conditional probability that with Bob's strategy described by POVM  $M:\sigma(\Omega)\rightarrow\text{End}((\mathbb{C}^d)^{\otimes n})$ event $B$ happens  given that the value of the random variables  $\vec{S}$ are in $\vec{\Delta}$.

The other reason to consider $q_{M}(B|\vec{S}\in \vec{\Delta})$ for more than one parameter $S^{(i)}$ is to study the cases where Bob is interested in estimating more than one parameter of the state.

Note that by having a larger number of parameters $l$ we can describe more and more general types of figure of merit.  In general, if $d$ is the dimension of $\mathbb{C}^d$ then the set of all (normalized) density operators can be specified by $d^{2}-1$ real parameters. So having $l=d^{2}-1$ real parameters is sufficient to specify the exact density operator Alice has chosen each time, and so $l=d^{2}-1$ parameters  are sufficient to describe the most general form of figures of merit one can imagine for this problem.
\ However, generally, having a figure of merit which can be defined using less than $d^2-1$ parameters makes it easier to find the optimal estimation procedure.

To summarize, in the multi-copy estimation problem we are considering here, $q_M(B|\rho)$ has the maximal information required to evaluate the figure of merit of the strategy described by the POVM  $M$. In other words,  if for two different strateges described by POVMs $M:\sigma(\Omega)\rightarrow\text{End}((\mathbb{C}^d)^{\otimes n})$ and $M':\sigma(\Omega)\rightarrow\text{End}((\mathbb{C}^d)^{\otimes n})$ it holds that
\begin{equation}
q_{M}\left(B|\rho\right)=q_{M'}\left(B|\rho\right) \label{eq:conditionals1}
\end{equation}
for all $B\in \sigma(\Omega)$ and $\rho\in \text{supp}(p)$ then they will have exactly the same performance in the estimation problem with respect to any figure of merit. On the other hand, $q_{M}(B|\vec{S}\in \vec{\Delta})$ has generally less information, i.e. it can be obtained by a coarse-graining of $q_{M}\left(B|\rho\right)$ but not necessarily vice versa.  However, in many reasonable figures of merit one does not need to specify $q_{M}\left(B|\rho\right)$ to specify the figure of merit of the measurement $M$; it is sufficient to specify $q_{M}(B|\vec{S}\in \vec{\Delta})$. If this is the case, then even if Eq.~\eqref{eq:conditionals1} doesn't hold, as long as the weaker constraint
\begin{equation}
q_{M}\left(B|\vec{S}\in \vec{\Delta}\right)=q_{M'}\left(B|\vec{S}\in \vec{\Delta}\right)\label{eq:conditionals2}
\end{equation}
holds for all $B\in \sigma(\Omega)$ and for all $l$-dimensional intervals $\vec{\Delta}$ that are assigned nonzero probability, the two strategies yield the same performance for the figure of merit of interest (See Fig. \ref{fig:estimation}). Eq.~\eqref{eq:conditionals2} states that learning the outcome of measurement $M$ is \emph{precisely as informative about the parameter $\vec{S}$} as learning the outcome of measurement $M'$.

An example of a common figures of merit, the average cost function, will be provided in Appendix~\ref{sec:costfunctions}.

\subsection{Main result}\label{sec:main}

\textbf{Scenario:} Suppose that  Alice randomly chooses an unknown state $\rho$ from the density operators in $\text{End}(\mathbb{C}^d)$  according to some probability density  $p$ (which we call the \emph{single-copy prior}) and sends $n$ qudits each prepared in the state $\rho$ to Bob through a quantum channel $\mathcal{E}:\text{End}((\mathbb{C}^d)^{\otimes n})\rightarrow \text{End}((\mathbb{C}^d)^{\otimes n})$.
Here, the density $p$ is defined relative to $d\rho$ a reference measure on the space of mixed states which is invariant under unitary transformations.\footnote{For example we can use the measure induced by the Hilbert-Schmidt inner product defined in \cite{Zyc}. }

Suppose that Bob makes measurements on the collection of $n$ systems.

Let parameters $\vec{\mathfrak{s}}(\cdot)=\left(\mathfrak{s}^{(1)}(\cdot),\cdots,\mathfrak{s}^{(l)}(\cdot)\right)$ be an arbitrary set of functions where $\mathfrak{s}^{{(i)}}: \text{supp}(p)\rightarrow \mathbb{R}$,  and  let $\vec{{S}}$ be the random variables defined as $\vec{{S}}\equiv \vec{\mathfrak{s}}(\rho) $ where $\rho$ is the random state Alice chooses.  We refer to $\vec{\mathfrak{s}}$ as the \emph{parameters}.
We say that the prior $p$ is invariant under a subgroup $H$ of $\text{U}(d)$, or equivalently, \emph{has $H$ as a symmetry} if
for all $\rho$ we have
\begin{equation}\label{fun-inv}
\forall V\in H: p(\rho)=p\left(V \rho V^{\dag}\right).
\end{equation}

We say that the parameter $\mathfrak{s}$ is invariant under a subgroup $H$ of $\text{U}(d)$, or equivalently, \emph{has $H$ as a symmetry} if for all $\rho \in \text{supp}(p)$, i.e. all $\rho$ assigned non-zero probability by the prior, we have
\begin{equation}\label{fun-inv2}
\forall V\in H:\vec{\mathfrak{s}}(\rho)=\vec{\mathfrak{s}}\left(V \rho V^{\dag}\right).
\end{equation}

We now present our main results, leaving the proofs to be presented in Sec.~\ref{sec:proofofmain}.  We begin with a version of the result where the assumptions are particularly simple.  These assumptions will be generalized shortly.

\begin{theorem}\label{Thm-main-chann}
Let $\mathcal{A}\subseteq \text{End}(\mathbb{C}^d)$ be a von Neumann algebra, and let $G_{\mathcal{A}}$ be the gauge group associated with it. Assume that:
\begin{enumerate}
\item \label{cond:symmetry} the prior $p$ and the vector of parameters $\vec{\mathfrak{s}}$ have the gauge group $G_{\mathcal{A}}$ as a symmetry;
\item \label{cond:identity} the channel $\mathcal{E}$ is the identity channel; 
\item \label{cond:pure} the prior $p$ has support only on the pure states.
\end{enumerate}
 Then for any given measurement with POVM $M:\sigma(\Omega)\rightarrow \text{End}((\mathbb{C}^{d})^{\otimes n})$, there is another measurement with POVM  $M':\sigma(\Omega)\rightarrow \text{End}((\mathbb{C}^{d})^{\otimes n})$ whose image is entirely confined to $\mathcal{A}^{\otimes n}$ (i.e., $M':\sigma(\Omega)\rightarrow \mathcal{A}^{\otimes n}$), such that $M'$ is as informative about $\vec{S}$ as $M$ is, i.e.,
\begin{equation}
  {q}_{M}\left(B|\vec{S}\in \vec{\Delta}\right)=q_{M'}\left(B|\vec{S}\in \vec{\Delta}\right)
\end{equation}
for all $B\in \sigma(\Omega)$ and all $l$-dimensional intervals $\vec{\Delta}$ which are assigned nonzero probability.
\end{theorem}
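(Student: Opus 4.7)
The plan is to construct the desired POVM $M'$ in two stages: first twirl $M$ under the collective action of the gauge group to make it globally symmetric, and then use the superoperator $\mathcal{L}_+$ from Theorem~\ref{pro-map-s} to convert global symmetry into local symmetry (membership in $\mathcal{A}^{\otimes n}$). The pure-state assumption is what makes the second stage work, because $\rho^{\otimes n}$ then lies entirely in the symmetric subspace.

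Concretely, first define
\[
\bar{M}(B)\equiv\int_{G_{\mathcal{A}}} d\mu(V)\;V^{\otimes n} M(B) (V^\dag)^{\otimes n}
=\mathcal{T}^{\otimes n}_{G_{\mathcal{A}}}(M(B)).
\]
Since $\mathcal{T}^{\otimes n}_{G_{\mathcal{A}}}$ is unital and completely positive, $\bar{M}$ is a POVM, and by construction every $\bar{M}(B)$ commutes with $\mathbf{Q}(G_{\mathcal{A}})$, i.e.\ has global symmetry with respect to the gauge group. Next I would show that $\bar{M}$ is as informative about $\vec{S}$ as $M$. Computing the single-copy conditional,
\[
q_{\bar{M}}(B|\rho)=\mathrm{tr}\!\left(\bar{M}(B)\rho^{\otimes n}\right)
=\int d\mu(V)\,q_M(B|V^\dag \rho V),
\]
so that
\[
q_{\bar{M}}(B|\vec{S}\in\vec{\Delta})\Pr(\vec{S}\in\vec{\Delta})
=\int d\mu(V)\!\!\int\limits_{\vec{\mathfrak{s}}(\rho)\in\vec{\Delta}}\!\! p(\rho)\,q_M(B|V^\dag\rho V)\,d\rho.
\]
The symmetry hypotheses (condition~\ref{cond:symmetry}) say that $p(\rho)=p(V^\dag\rho V)$ and $\vec{\mathfrak{s}}(\rho)=\vec{\mathfrak{s}}(V^\dag\rho V)$ for all $V\in G_{\mathcal{A}}$. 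Changing variables to $\rho'=V^\dag\rho V$ in the inner integral makes the integrand independent of $V$, and normalization of $d\mu$ then collapses the outer integral, giving $q_{\bar{M}}(B|\vec{S}\in\vec{\Delta})=q_M(B|\vec{S}\in\vec{\Delta})$.

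Now I would apply Theorem~\ref{pro-map-s} with the $+$ sign: set $M'(B)\equiv\mathcal{L}_+(\bar{M}(B))$. That theorem states that $\mathcal{L}_+$ is unital and completely positive and that its image lies in the permutationally invariant subalgebra of $\mathcal{A}^{\otimes n}$; hence $M':\sigma(\Omega)\to\mathcal{A}^{\otimes n}$ is a POVM with image in $\mathcal{A}^{\otimes n}$. Moreover, because each $\bar{M}(B)$ commutes with $\mathbf{Q}(G_{\mathcal{A}})$, the theorem guarantees
\[
\Pi_+\,M'(B)\,\Pi_+=\Pi_+\,\bar{M}(B)\,\Pi_+ .
\]
The identity-channel assumption (condition~\ref{cond:identity}) together with the pure-state assumption (condition~\ref{cond:pure}) now lets me finish: for every $\rho\in\mathrm{supp}(p)$ the state reaching Bob is $\rho^{\otimes n}$, which lies in the symmetric subspace, so $\rho^{\otimes n}=\Pi_+\rho^{\otimes n}\Pi_+$. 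Consequently
\[
q_{M'}(B|\rho)=\mathrm{tr}\!\left(M'(B)\rho^{\otimes n}\right)
=\mathrm{tr}\!\left(\bar{M}(B)\rho^{\otimes n}\right)=q_{\bar{M}}(B|\rho),
\]
and therefore $q_{M'}(B|\vec{S}\in\vec{\Delta})=q_{\bar{M}}(B|\vec{S}\in\vec{\Delta})=q_{M}(B|\vec{S}\in\vec{\Delta})$, as required.

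The only part of the argument that is not a direct quotation of earlier results is Step~2, the invariance of the informativeness under twirling; but that reduces to the change-of-variables argument above, which uses exactly the two symmetry hypotheses on $p$ and $\vec{\mathfrak{s}}$. Both other conditions of the theorem enter only in the last step: purity is essential because Theorem~\ref{pro-map-s} only guarantees agreement of $\mathcal{L}_+(\bar{M})$ with $\bar{M}$ after sandwiching by $\Pi_+$, and the identity-channel hypothesis is what ensures that the state Bob measures actually sits in that symmetric subspace. Relaxing either of those is what will motivate the generalizations promised in Sec.~\ref{sec:main}.
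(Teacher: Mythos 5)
Your proof is correct and follows essentially the same route as the paper: twirl $M$ over the collective action of $G_{\mathcal{A}}$ (you reprove inline what the paper isolates as Lemma~\ref{lem:sym:main:thm}), then use Theorem~\ref{pro-map-s} together with purity and the identity channel to pass to $M'=\mathcal{L}_+(\bar{M})$ on the symmetric subspace, exactly as in Sec.~\ref{sec:proofofmain}. One small notational caution: your identification $\int_{G_{\mathcal{A}}} d\mu(V)\,V^{\otimes n} M(B) (V^\dag)^{\otimes n}=\mathcal{T}^{\otimes n}_{G_{\mathcal{A}}}(M(B))$ is not right in the paper's notation, since $\mathcal{T}^{\otimes n}_{G_{\mathcal{A}}}$ there means $n$ independent single-copy twirls rather than the collective twirl $\mathcal{T}_{\mathbf{Q}(G_{\mathcal{A}})}$ that your displayed integral defines and that the rest of your argument correctly uses.
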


\begin{remark}\label{remark-main}
 An instance of the measurement described in theorem~\ref{Thm-main-chann} is $M'\equiv \mathcal{L}_{+}(M)$, where $\mathcal{L}_{+}$ is the unital quantum channel defined in Eq.~\eqref{M-tilde1}.
\end{remark}

One can generalize this theorem in two ways: from the identity channel to a class of nontrivial channels, and from a prior that has support only on pure states to a certain class of priors that have support on mixed states.  We begin by defining the classes in question.

We define a channel $\mathcal{E}$ to be \emph{noiseless on $\mathcal{A}^{\otimes n}$}  if for all states $\rho$ in $\text{End}((\mathbb{C}^d)^{\otimes n})$, $\mathcal{E}(\rho)$ and $\rho$ have the same reduction on the algebra $\mathcal{A}^{\otimes n}$, i.e.,
\begin{equation}
\forall R \in \mathcal{A}^{\otimes n}:\text{tr}(R \mathcal{E}(\rho))  = \text{tr}(R \rho),
\end{equation}
or equivalently, $\mathcal{T}_{G_{\mathcal{A}}}^{\otimes n}\circ \mathcal{E} = \mathcal{T}_{G_{\mathcal{A}} } ^{\otimes n}.$

Let prior density $\tilde{p}$ be one with support confined to the pure states.  Define a prior density $p$ to be a \emph{$\text{G}_{\mathcal{A}}$-distortion} of $\tilde{p}$ via channel $\mathcal{N}$ if it can be realized by sampling a pure state from $\tilde{p}$ and then applying a quantum channel $\mathcal{N}:\text{End}(\mathbb{C}^d)\rightarrow \text{End}(\mathbb{C}^d)$  to the state, where $\mathcal{N}$ is noiseless on $\mathcal{A}$ and is also $G_{\mathcal{A}}$-covariant i.e. $\forall V\in G_{\mathcal{A}}:\ \mathcal{N}(\cdot)=\mathcal{N}(V\cdot V^{\dag})$. (Recall that all these densities are defined relative to a fixed unitary invariant measure.)
We then have the following generalization of theorem  \ref{Thm-main-chann}.

\begin{theorem}\label{Thm-main-generalized} (Generalization of theorem \ref{Thm-main-chann})
 the implication in theorem \ref{Thm-main-chann} still holds if one weakens assumptions \ref{cond:identity} and \ref{cond:pure} to: \\
\ref{cond:identity}$'$. the channel $\mathcal{E}$ is noiseless on $\mathcal{A}^{\otimes n}$; \\
\ref{cond:pure}$'$. the prior $p$ is a $\text{G}_{\mathcal{A}}$-distortion of one that has support only on the pure states.
\end{theorem}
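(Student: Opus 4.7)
The plan is to reduce Theorem \ref{Thm-main-generalized} to Theorem \ref{Thm-main-chann} by absorbing both the channel $\mathcal{E}$ and the distortion $\mathcal{N}$ into the measurement, applying the pure-state result, and then translating back using the two noiseless-on-$\mathcal{A}^{\otimes n}$ properties.

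First, I would unpack the $G_{\mathcal{A}}$-distortion. By hypothesis, sampling $\rho\sim p$ is equivalent to sampling a pure state $|\psi\rangle\langle\psi|\sim\tilde{p}$ and setting $\rho=\mathcal{N}(|\psi\rangle\langle\psi|)$. Define the pulled-back parameter vector $\tilde{\mathfrak{s}}(|\psi\rangle\langle\psi|)\equiv\vec{\mathfrak{s}}(\mathcal{N}(|\psi\rangle\langle\psi|))$ and the random variable $\tilde{\vec{S}}\equiv\tilde{\mathfrak{s}}(|\psi\rangle\langle\psi|)$. The ``covariance'' condition $\mathcal{N}(V\cdot V^{\dag})=\mathcal{N}(\cdot)$ makes $\tilde{\mathfrak{s}}$ automatically $G_{\mathcal{A}}$-invariant, and the same condition allows us to replace $\tilde{p}$ by its $G_{\mathcal{A}}$-twirl over pure states without altering $p$, so without loss of generality $\tilde{p}$ is $G_{\mathcal{A}}$-invariant and still supported on pure states. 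The coupling $\rho=\mathcal{N}(|\psi\rangle\langle\psi|)$ identifies the events $\{\vec{S}\in\vec{\Delta}\}$ and $\{\tilde{\vec{S}}\in\vec{\Delta}\}$, so conditional probabilities match under the change of variables.

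Second, I would push $\mathcal{E}$ and $\mathcal{N}^{\otimes n}$ onto the measurement side. For any POVM $M$, the operator $M''(B)\equiv\mathcal{N}^{\dag\otimes n}(\mathcal{E}^{\dag}(M(B)))$ is a POVM element (both $\mathcal{E}^{\dag}$ and $\mathcal{N}^{\dag\otimes n}$ are unital and completely positive), and by duality
\[
\text{tr}(M(B)\,\mathcal{E}(\rho^{\otimes n}))=\text{tr}(M''(B)\,|\psi\rangle\langle\psi|^{\otimes n}).
\]
Thus the statistics of $M$ in the original scenario coincide with those of the auxiliary measurement $M''$ performed on the pure product state $|\psi\rangle\langle\psi|^{\otimes n}$, under prior $\tilde{p}$ and parameters $\tilde{\mathfrak{s}}$.

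Third, I would apply Theorem \ref{Thm-main-chann} to this auxiliary scenario. All three of its hypotheses are now satisfied: $\tilde{p}$ and $\tilde{\mathfrak{s}}$ have $G_{\mathcal{A}}$ as a symmetry, the relevant channel is the identity, and $\tilde{p}$ is supported on pure states. Hence there exists a POVM $M'\equiv\mathcal{L}_{+}(M'')$ with image in $\mathcal{A}^{\otimes n}$ that is as informative about $\tilde{\vec{S}}$ as $M''$ is.

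Finally, I would translate the conclusion back through the noiseless properties. Since $\mathcal{N}$ is noiseless on $\mathcal{A}$, the extension $\mathcal{N}^{\otimes n}$ is noiseless on $\mathcal{A}^{\otimes n}$ (check on simple tensors and extend by linearity), and $\mathcal{E}$ is noiseless on $\mathcal{A}^{\otimes n}$ by hypothesis. Because $M'(B)\in\mathcal{A}^{\otimes n}$,
\[
\text{tr}(M'(B)\,|\psi\rangle\langle\psi|^{\otimes n})=\text{tr}(M'(B)\,\rho^{\otimes n})=\text{tr}(M'(B)\,\mathcal{E}(\rho^{\otimes n})),
\]
so $M'$ gives the same statistics on the channel output as $M''$ gives on the pure copies. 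Chaining the three equalities yields $q_{M}(B\,|\,\vec{S}\in\vec{\Delta})=q_{M'}(B\,|\,\vec{S}\in\vec{\Delta})$, which is exactly the desired conclusion. The main subtlety is bookkeeping between $\vec{S}$ and $\tilde{\vec{S}}$ under the coupling $\rho=\mathcal{N}(|\psi\rangle\langle\psi|)$; once that is set up cleanly, the argument is essentially a careful juxtaposition of the two noiseless properties with Theorem \ref{Thm-main-chann}.
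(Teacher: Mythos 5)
Your proposal is correct and follows essentially the same route as the paper's proof: twirl the pure-state prior so it is $G_{\mathcal{A}}$-invariant, pull the parameters back through $\mathcal{N}$, dualize $\mathcal{E}$ and $\mathcal{N}^{\otimes n}$ onto the POVM, apply theorem \ref{Thm-main-chann}, and translate back using the fixed-point (noiseless) property of $\mathcal{A}^{\otimes n}$ under $\mathcal{E}^{\dag}$ and $(\mathcal{N}^{\dag})^{\otimes n}$, arriving at the same $M'=\mathcal{L}_{+}\circ(\mathcal{N}^{\dag})^{\otimes n}\circ\mathcal{E}^{\dag}(M)$ as in remark \ref{remark-main'}. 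The only difference is organizational: the paper does the reduction in two stages (first non-identity channels with pure priors, then mixed priors via an auxiliary problem with channel $\mathcal{E}\circ\mathcal{N}^{\otimes n}$), whereas you collapse both into one reduction, and the bookkeeping between $\vec{S}$ and $\tilde{\vec{S}}$ that you flag is exactly the change-of-variables computation the paper carries out explicitly.
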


\begin{remark}\label{remark-main'}
Assume the prior $p$ is a $\text{G}_{\mathcal{A}}$-distortion of a prior over pure states via channel $\mathcal{N}$.  Then,  an instance of the measurement described in theorem~\ref{Thm-main-generalized} is $M'\equiv \mathcal{L}_{+}\circ(\mathcal{N}^{\dag})^{\otimes n}\circ \mathcal{E}^{\dag} (M)$, where $\mathcal{L}_{+}$ is the unital quantum channel defined in Eq.~\eqref{M-tilde1}.
\end{remark}

We now make explicit what our main theorem implies for multi-copy estimation problems.
\begin{corollary} \label{corollary:main}
Assume the figure of merit for a strategy $M$ in the $n$-copy estimation problem can be expressed as a functional of $q_{M}(B|\vec{S}\in \vec{\Delta})$ for some set of parameters $\vec{\mathfrak{s}}$. Then, if the assumptions of the theorem \ref{Thm-main-generalized} (or theorem \ref{Thm-main-chann}) hold for a von Neumann algebra $\mathcal{A}$, it follows that the POVM elements of the optimal measurement can be chosen to be in $\mathcal{A}^{\otimes n}$.
\end{corollary}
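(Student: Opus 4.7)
The plan is to read the corollary off Theorem~\ref{Thm-main-generalized} more or less mechanically. Fix any candidate strategy, i.e.\ any POVM $M : \sigma(\Omega) \to \text{End}((\mathbb{C}^d)^{\otimes n})$. By Theorem~\ref{Thm-main-generalized}, applied with the algebra $\mathcal{A}$ referenced in the hypotheses, there exists a companion POVM $M' : \sigma(\Omega) \to \mathcal{A}^{\otimes n}$ such that
\begin{equation}
q_{M'}(B|\vec{S}\in\vec{\Delta}) = q_{M}(B|\vec{S}\in\vec{\Delta})
\end{equation}
for every $B \in \sigma(\Omega)$ and every $\vec{\Delta}$ of nonzero prior weight. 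A concrete realisation is $M' = \mathcal{L}_{+}\circ(\mathcal{N}^{\dag})^{\otimes n}\circ \mathcal{E}^{\dag}(M)$, as given in Remark~\ref{remark-main'}, which reduces to $\mathcal{L}_{+}(M)$ under the simpler hypotheses of Theorem~\ref{Thm-main-chann}.

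Next I would invoke the assumption on the figure of merit. By hypothesis, the figure of merit $F$ assigns to each strategy a value that depends only on the family $\{q_{M}(B|\vec{S}\in\vec{\Delta})\}_{B,\vec{\Delta}}$; since $M$ and $M'$ agree on this whole family, $F(M) = F(M')$. Passing to the supremum (or infimum, depending on whether $F$ rewards or penalises performance) over all admissible strategies, the attainable optimum is already realised within the subclass of POVMs taking values in $\mathcal{A}^{\otimes n}$. Equivalently, to any optimal $M$ there corresponds an $\mathcal{A}^{\otimes n}$-valued $M'$ that is equally optimal, so the optimum may be chosen inside this subclass, which is precisely the claim.

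The one point that genuinely needs checking is that $M'$ is a bona fide POVM on the same outcome space $(\Omega, \sigma(\Omega))$ as $M$, since this is the only place where the POVM structure could in principle be damaged by the replacement. However, in each of the instances supplied by Theorems~\ref{Thm-main-chann} and~\ref{Thm-main-generalized}, $M'$ is the image of $M$ under a composition of unital, completely positive maps, and such maps send POVMs to POVMs on the same outcome space (as already used in the proof of Corollary~\ref{Thm-Meas-LG Equiv}). Consequently I expect no real obstacle; the entire corollary is a short deduction from Theorem~\ref{Thm-main-generalized}, with the only subtlety being the routine bookkeeping that conditioning intervals $\vec{\Delta}$ of zero prior probability do not enter any sensible figure of merit and therefore may be ignored.
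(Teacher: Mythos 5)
Your proposal is correct and follows exactly the route the paper intends: the corollary is an immediate consequence of Theorem~\ref{Thm-main-generalized} (or Theorem~\ref{Thm-main-chann}), since the guaranteed $\mathcal{A}^{\otimes n}$-valued POVM $M'$ reproduces all the conditionals $q_{M}(B|\vec{S}\in\vec{\Delta})$ on which the figure of merit depends, so it attains the same value of the figure of merit as $M$. The paper treats this deduction as immediate and gives no separate proof, and your additional check that $M'$ is a genuine POVM (being the image of $M$ under unital completely positive maps) is already contained in the statements of the theorems and in the argument used for Corollary~\ref{Thm-Meas-LG Equiv}.
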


Corollary~\ref{corollary:main} implies that the optimal measurement has the gauge group G$_{\mathcal{A}}$ as a local symmetry.
Then, in the special case wherein the algebra $\mathcal{A}$ is commutative, by proposition~\ref{lem:com}, it follows that it can be implemented by measuring a set of observables which generates $\mathcal{A}$ separately on each of the $n$ qudits and then performing a classical processing on the outcomes.

To apply corollary~\ref{corollary:main}, the figure of merit for an estimation strategy $M$ must be a functional of  the conditional $q_{M}(B|\vec{S}\in \vec{\Delta})$.  In appendix~\ref{sec:costfunctions}, we demonstrate in an example how a common figure of merit, the expected cost for an arbitrary cost function, can be written in this form. 


We here describe an alternative way to state assumption \ref{cond:symmetry} of theorem \ref{Thm-main-chann} in the case where the prior $p$ has support only on pure states.

We begin with a definition.  We say that a function $g$ from states in $\text{End}(\mathbb{C}^d)$ to $ \mathbb{R}$ \emph{depends only on the reduction of the state to the algebra $\mathcal{A}$} if it can be expressed as
$$g(\rho)= f\left(\text{tr}(\rho \tilde{A}_{1}),\cdots, \text{tr}(\rho \tilde{A}_{D}) \right)$$
for some function $f:\mathbb{C}^{D}\rightarrow\mathbb{R}$, where $\{\tilde{A}_{1},\cdots,\tilde{A}_{D} \}\subset\mathcal{A}$ is a basis for $\mathcal{A}$.

In terms of this notion, the alternative statement of assumption \ref{cond:symmetry} is:

$1'.$ \emph{The prior $p$ and the vector of parameters $\vec{\mathfrak{s}}$ depend only on the reduction of the state to the algebra $\mathcal{A}$.}



The fact that assumption $1'$ implies assumption $1$ is clear: If $V\in G_{\mathcal{A}}$ then $\text{tr}(\rho V^{\dag} A V)=\text{tr}(\rho A)$ for arbitrary density operator $\rho$ in $\text{End}(\mathbb{C}^d)$ and arbitrary $A\in \mathcal{A}$. Then since according to assumption $1'$, $p$ and $\vec{\mathfrak{s}}$  can be expressed as a function of $\left(\text{tr}(\rho \tilde{A}_{1}),\cdots, \text{tr}(\rho \tilde{A}_{D})\right)$ we conclude that  $p(V\rho V^{\dag})=p(\rho)$ and $\vec{\mathfrak{s}}(V\rho V^{\dag})=\vec{\mathfrak{s}}(\rho)$ for arbitrary $\rho$ and arbitrary $V\in G_{\mathcal{A}}$.

The fact that assumption $1$ implies assumption $1'$ is true because of the following: Consider an arbitrary pair of pure states  $|\psi_{1}\rangle$  and $|\psi_{2}\rangle$ in the support of $p$. If for this pair of states there exists a unitary $V\in G_{\mathcal{A}}$ such that $V|\psi_{1}\rangle=|\psi_{2}\rangle$ then  assumption $1$ implies that 
$$\vec{\mathfrak{s}}(|\psi_{2}\rangle\langle\psi_{2}|)=\vec{\mathfrak{s}}\left(V |\psi_{1}\rangle\langle\psi_{1}| V^{\dag}\right)=\vec{\mathfrak{s}}\left( |\psi_{1}\rangle\langle\psi_{1}|\right)$$ 

On the other hand, if there does not exist a unitary $V\in G_{\mathcal{A}}$ such that $V|\psi_{1}\rangle=|\psi_{2}\rangle$ then $\vec{\mathfrak{s}}(|\psi_{2}\rangle\langle\psi_{2}|)$ could be different from $\vec{\mathfrak{s}}(|\psi_{1}\rangle\langle\psi_{1}|)$. In other words, to specify the value of $\vec{\mathfrak{s}}$ for a particular state $|\psi\rangle$ it is sufficient to  know the orbit of  $G_{\mathcal{A}}$ that $|\psi\rangle$  belongs to. From the results of \cite{Marvian} we know that there exists a unitary  $V\in G_{\mathcal{A}}$ for which $V|\psi_{1}\rangle=|\psi_{2}\rangle$ if and only if the reduction of two states $|\psi_{1}\rangle$  and $|\psi_{2}\rangle$ to the algebra $\mathcal{A}$ is the same, i.e. if $\langle\psi_{1}|\tilde{A_{i}}|\psi_{1}\rangle=\langle\psi_{2}|\tilde{A_{i}}|\psi_{2}\rangle$ for $\{\tilde{A}_{1},\cdots,\tilde{A}_{D} \}$ a basis of $\mathcal{A}$. This implies that by specifying the reduction of a state to the algebra one has enough information to infer the orbit that the state belongs to and so has enough information to find the value of $\vec{\mathfrak{s}}$. A similar argument can be applied for the density $p$. So, in general, if the prior $p$ is nonzero only on pure states, then any function which satisfies assumption $1'$ also satisfies assumption 1 and vice versa.



Note that the restriction to pure states plays an essential role in the equivalence of assumptions $1$ and $1'$ and this equivalence cannot be extended to the case of mixed states, i.e. in general for a parameter $\mathfrak{s}$ which satisfies assumption 1, $\mathfrak{s}(\rho)$ cannot be expressed as a function of $\text{tr}(\rho \tilde{A}_{1}),\dots,\text{tr}(\rho \tilde{A}_{D})$ if $\rho$ is mixed.
For instance, consider the case where $\mathcal{A}$ is the trivial algebra generated by the identity operator, so that $G_{\mathcal{A}}$ is the group of all unitaries on $\mathbb{C}^d$. In this case, the identity operator is a basis for $\mathcal{A}$ and consequently every state $\rho$ has the same reduction to $\mathcal{A}$. This means that the only functions that depend only on the reduction of the state to $\mathcal{A}$ are constant functions. However, there exist non-constant functions $\mathfrak{s}$, such as $\mathfrak{s}(\rho)=\text{tr}(\rho^{2})$, which are invariant under the group of all unitaries and therefore have the symmetry property required to satisfy assumption 1.  So the equivalence of assumption 1 and assumption $1'$ cannot be extended to the case of mixed states.




\subsection{Examples}

\subsubsection{Estimating parameters defined by a single observable}

A very simple example of a multi-copy estimation problem is the one considered by Hayashi \emph{et al.} \cite{HHH}. A pure state is chosen uniformly according to the Haar measure, and $n$ copies of the state are prepared.  The goal is to estimate the expectation value of an observable $A$ for the state.  Hayashi \emph{et al.} have shown that for a squared-error figure of merit, the optimal estimation scheme is to simply measure the observable $A$ separately on each system.  Our generalization of Schur-Weyl duality can be used to provide a very elementary proof of this result.  It can also be used to simplify the solution of estimation problems that are much more complicated, as we shall show.


Casting this in our language, the vector of parameters to be estimated, $\vec{\mathfrak{s}}(\rho)$, has only a single component, $\mathfrak{s}(\rho)=\text{tr}(A\rho)$. The figure of merit considered in Ref.~\cite{HHH} is the expected cost where the cost function is the squared error, i.e.
$$C(s_{est},\mathfrak{s}(\rho))= (s_{est} - \mathfrak{s}(\rho))^2.$$ Finally, the prior they consider is the unitarily-invariant measure over pure states and the channel $\mathcal{E}$ between the source and the estimator is the identity channel.  It follows that the assumptions of theorem~ \ref{Thm-main-chann} are all satisfied for the algebra $\mathcal{A}=\text{Alg}\{A,I\}$. Furthermore, one can show that the squared error for a measurement $M$ is a functional of the conditional $q_{M}(s_{est}\in\Delta_{est}|{S}\in {\Delta})$ in which $S$ is the actual value of the parameter, $s_{est}$ is the estimated value,  $\Delta$ and $\Delta_{est}$ are two arbitrary intervals in  $\mathbb{R}$ (see Appendix~\ref{sec:costfunctions}).  So the assumptions of proposition \ref{corollary:main} are satisfied. Consequently, the optimal measurement can be confined to $\mathcal{A}^{\otimes n}$, but given that $\mathcal{A}$ is commutative, it follows from corollary~\ref{lem:com} that it can be implemented by measuring the observable $A$ separately on each system and performing classical data processing on the outcomes.  So we have shown that the result of Hayashi \emph{et al.} is recovered as a special case of ours.

It is worth noting that for estimation problems involving only a single observable $A$ (or a set of commuting observables, which amounts to the same), there is in fact a very broad class of problems for which the optimal estimation can be achieved by separate measurements of $A$ on each system. Indeed, one can consider the estimation of any parameter that depends only on $A$, i.e. any function of the form $f(\text{tr}(\rho A),\text{tr}(\rho A^2),\text{tr}(\rho^2 A^2),\dots)$.  This includes the estimation of higher order moments of $A$, decisions about the sign of the expectation value of $A$, etcetera.  One can also take the prior $p$ to be arbitrary over pure states as long as it depends only on $A$. Also prior $p$ can be nonzero on mixed states as long as $p$ is a $G_{\mathcal{A}}$-distortion of a prior which is nonzero only on pure states.
Finally, there are many choices for the figure of merit.  We mention only two.  We could take the mutual information between the estimated values of the parameters and their actual values, or we could take the expected cost for an arbitrary cost function that depends only on $A$.  For all of these cases, the figure of merit for an estimation strategy $M$ is a functional of $q_{M}(B|\vec{S}\in \vec{\Delta})$ (see App.~\ref{sec:costfunctions}), so as long as the prior $p$ and the channel $\mathcal{E}$ satisfy assumptions $2'$ and $3'$ of theorem~\ref{Thm-main-generalized},  all the assumptions of corollary \ref{corollary:main} are satisfied, and separate measurements of $A$ suffice. Our result therefore constitutes a very significant generalization of the previously known results.

\subsubsection{Decision problem for a single qubit}
 Suppose we are given $n$ copies of qubit state $\rho$, a density operator in $\textrm{End}(\mathbb{C}^2)$.
For $b\in {0,1}$, define $$|\psi(\theta,b)\rangle\equiv\cos{\frac{\alpha_{b}}{2}} |0\rangle+e^{i\theta} \sin{\frac{\alpha_{b}}{2}}  |1\rangle $$ where $\alpha_{0}$ and $\alpha_{1}$ are distinct angles in the range $[0,\pi)$ and where $\theta \in [0,2\pi)$. Assume the single-copy prior $p(\rho)$ is as follows: the state is drawn from the set $\{ |\psi(\theta,b)\rangle \}$ where $\theta$ is uniformly distributed over $[0,2\pi)$ and $b$ has uniform distribution over $\{0,1\}$.  This prior is illustrated in Fig.~\ref{fig:BlochSpheres}(a).
 The goal is to get information about the value of the bit $b$ using $n$ copies of a state given according to this single-copy prior (this example is a decision problem). For instance, one might be interested to determine the value of the bit $b$ with minimum  probability of error. In general, we assume the goal is to generate an outcome in the outcome set $\Omega$ with $\sigma$-algebra $\sigma(\Omega)$ and the performance of different strategies are evaluated by a figure of merit which can be expressed as a functional acting on $q(B|b=b_{0})$, i.e. the probability of event $B\in\sigma(\Omega)$ while the value of $b$ is $b_{0}\in \{0,1\}$.
 
In this case, the parameter to be estimated is defined by $$\mathfrak{s}(|\psi(\theta,b)\rangle \langle \psi(\theta,b)|) = b.$$
 Adopting the convention that $|0\rangle$ and $|1\rangle$ are eigenstates of the Pauli observable $\sigma_z$, it is clear that the prior $p$ and the parameter to be estimated, $\mathfrak{s}$, are both invariant under unitaries of the form $e^{i\phi'}e^{i\phi \sigma_z}$ where $\phi,\phi' \in [0,2\pi)$, which describe phase shifts or rotations about the axis $\hat{z}$.  As we have seen in the section \ref{sec:charac} this group is a gauge group. 
 The algebra that corresponds to the commutant of this gauge group is $\mathcal{A} = \text{Alg}\{ \sigma_z, I \}$.  Finally, since the figure of merit depends only on $q(B|b=b_{0})$ the assumptions of corollary  \ref{corollary:main} are satisfied. [Note that since $\mathfrak{s}(|\psi(\theta,b)\rangle \langle \psi(\theta,b)|) = b$, $b$ can be thought as the random variable defined by parameter $\mathfrak{s}$ acting on states.]  Therefore, we can infer that to achieve the optimal estimation, it suffices to consider POVMs inside the algebra $\mathcal{A}^{\otimes n}$ and since $\mathcal{A}$ is commutative, it suffices to measure $\sigma_z$ on each system individually.  In other words, all the information we can get from the state $|\psi(\theta,b)\rangle^{\otimes n}$ about the value of $b$ we can also get from the mixed state  $[\cos^{2}{(\alpha_{b}) }|0\rangle\langle 0|+\sin^{2}{(\alpha_{b}) }|1\rangle\langle 1|]^{\otimes n}$.

\begin{figure} [h]
\begin{center}
\includegraphics[scale=0.75]{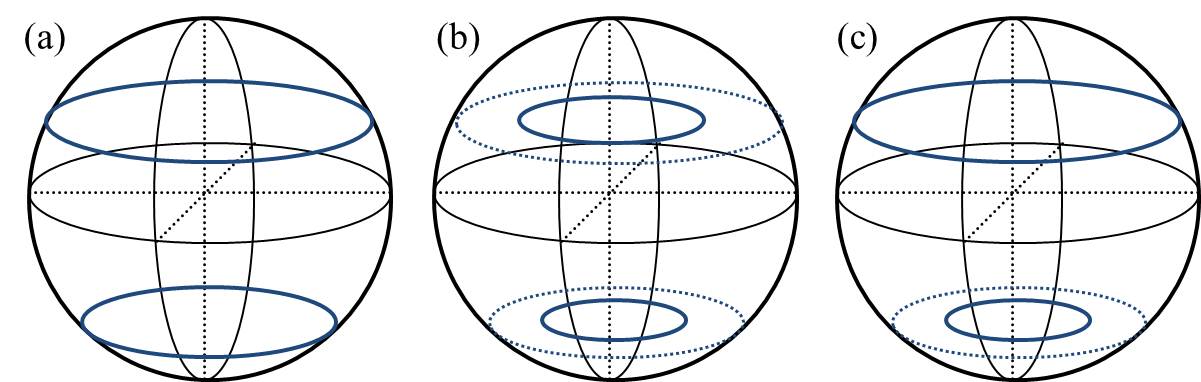}
\caption{The Bloch ball representation of the quantum states of a single qubit for three variations of a decision problem.  The pair of circles in each case indicate the support of the single-copy prior over states and the goal is to decide which circle the state is drawn from, given $n$ copies of the state. (a) A prior with support confined to  pure states. (b) A prior that is a gauge distortion of the first. (c) A prior for which unentangled measurement will not be generally sufficient to achieve optimal estimation.}\label{fig:BlochSpheres}
\end{center}
\end{figure}

Note, however, that if one acquires some information about $\theta$, then this information can be useful for estimating $b$: In the extreme case where we know the exact value of $\theta$, we can perform the Helstrom measurement \cite {Helstrom} for distinguishing the two pure states $|\psi(\theta,0)\rangle^{\otimes n}$ and $|\psi(\theta,1)\rangle^{\otimes n}$. So one estimation strategy is to use some of the qubits to estimate $\theta$ and then use this information to choose an optimal measurement for estimating $b$ using the rest of qubits. But our result shows that by this strategy one cannot get more information than what one gets by ignoring $\theta$ and measuring $\sigma_{z}$ on individual systems.
[Note that this result also implies that to get information about $\theta$ from each system we necessarily disturb its  information about $b$.  This can be interpreted as an example of information-disturbance tradeoff.]


\textit{Generalization to priors whose support is not confined to pure states.}
Theorem \ref{Thm-main-generalized}  implies that measuring $\sigma_z$ on each system is optimal even in the case where the single-copy prior is a $\text{G}_{\mathcal{A}}$-distortion of the one described above.  In this case a $\text{G}_{\mathcal{A}}$-distortion is implemented by 
a channel $\mathcal{N}$ that is covariant under phase shifts and noiseless on $\text{Alg}\{\sigma_z,I\}$.  The only channels having these properties are those corresponding to dephasing about the $\hat{z}$ axis (i.e. $\mathcal{N}(\rho)= (1-r)\rho+r\ \sigma_{z} \rho \sigma_{z}$ for $0<r<1$). For the single-copy prior that is achieved by this distortion, the state is drawn from the set $\{\rho(\theta,b) \equiv  \mathcal{N}(|\psi(\theta,b)\rangle \langle \psi(\theta,b)|)\}$ where $b$ and $\theta$ are distributed as before (for a given $b$, this describes a circle \emph{within} the Bloch ball). This prior is illustrated in Fig.~\ref{fig:BlochSpheres}(b). The parameter to be estimated is $\mathfrak{s}(\rho(\theta,b))=b$.  Note that both the prior and the parameter in this new estimation problem are invariant under the group of phase shifts.  Corollary  \ref{corollary:main} implies that the estimation problem so defined is also one wherein the optimal estimation is achieved by implementing a measurement of $\sigma_z$ on each qubit.

\textit{Example where unentangled measurements are generally not sufficient.}
Now suppose we are given $n$ copies of state  $\{ \rho(\theta,b) \}$  where  
$$\rho(\theta,0)= |\psi(\theta,0)\rangle \langle \psi(\theta,0)|\ \  \text{and}\ \ \rho(\theta,1)= \mathcal{N}(|\psi(\theta,1)\rangle \langle \psi(\theta,1)|)$$ where $\mathcal{N}$ is an arbitrary dephasing channel and where again $\theta$ is uniformly distributed between $(0,2\pi]$ and $b$ has arbitrary distribution. Effectively, we have a U(1)-orbit of pure states (a circle on the Bloch sphere) for $b=0$, and a dephased version of a distinct U(1)-orbit (a circle within the Bloch ball) for $b=1$.  This prior is illustrated in Fig.~\ref{fig:BlochSpheres}(c). Again the goal is to find the value of the bit $b$.

This estimation problem satisfies assumption 1 of theorem \ref{Thm-main-generalized} because the prior and the parameter have the same gauge group symmetry as the other examples considered in this section.  However, assumption 3' is not satisfied, and we can show that the optimal measurement is \emph{not} achieved by performing separate measurements of $\sigma_z$ on each qubit.

To see this, note first that 
because the $b=0$ states are pure while the $b=1$ states are mixed, the  purity of the state contains information about $b$.  Now consider the projective measurement which projects the state to the different irreps of $\mathcal{S}_{n}$ which show up in the representation $\textbf{P}(\mathcal{S}_{n})$  on $(\mathbb{C}^{d})^{\otimes n}$.  It is well known that this von Neumann measurement is highly nonlocal and requires interaction between all $n$ systems  \cite{Keyl-Werner}.  This projective measurement is  one that reveals information about the eigenvalues of the single-copy density operator and hence about its purity, as the following argument demonstrates. 

First, note that if the single-copy state is pure, then the $n$-copy state is in the symmetric subspace of $(\mathbb{C}^{d})^{\otimes n}$ and the outcome of the above projective measurement is fixed. On the other hand, if the single-copy state is mixed, then there is always a nonzero probability that the measurement projects the state to a subspace other than the symmetric subspace. In other words, there is a nonzero probability that the outcome of this measurement achieves an unambiguous discrimination between the mixed state case and the pure state case.  This implies that there is a nonzero probability of determining  the true value of $b$ unambgiuously. However, one can easily see that for the given prior by measuring $\sigma_{z}$ on each qubit it is not possible to unambigiously determine the true value of the bit $b$. Therefore, at least for some figures of merit, entangled measurements have advantage over unentangeled measurements.

 Incidentally, note that since the state of the total $n$ systems is a permutationally-invariant state, i.e. it commutes with $\textbf{P}(\mathcal{S}_{n})$ it is block diagonal in the irreps of $\mathcal{S}_{n}$ that show up in the representation of $\textbf{P}(\mathcal{S}_{n})$.  Therefore by performing the von Neumann measurement which projects into these blocks, the final state (forgetting the outcome of this measurement) will be the same as the initial state and therefore the statistics of any subsequent measurement will not be affected, that is, implementing such a measurement does not compromise the informativeness of any other measurement.

This phenomenon is generic.  In multi-copy decision problems in which the goal is to distinguish between a mixed state and a pure state, entangled measurements can achieve a better performance than unentangled measurements (at least with respect to some figures of merit).


\subsubsection{Decision problem for pair of qubits}
 \label{Ex-left-right} 
 In the previous example we assumed a bit is encoded in the state of one qubit and the goal is to acquire information about that bit using $n$ copies of that qubit state. Now suppose we modify the example in the following way: We assume each system consists of two qubits (rather than one),  left and right, i.e. the Hilbert space of each system is $\mathbb{C}^4\cong\mathcal{H}_{L}\otimes \mathcal{H}_{R}$ where $\mathcal{H}_{L/R}\cong\mathbb{C}^2$. Again, we are  given $n$ copies of state $\rho$ according to the single-copy prior $p(\rho)$ which is defined as follows:  the state is drawn from  the set 
$$\{ (\mathbb{I}\otimes V) |\psi(b) \rangle_{LR} \},$$ 
 where $b$ is uniformly distributed on $b\in \{0,1\}$,  $V$ is distributed according to the Haar measure over $\text{U}(2)$, and  
$$|\psi(0)\rangle_{LR} =|00\rangle_{LR}\ ,\ \ \ |\psi(1)\rangle_{LR} = \frac{|01\rangle_{LR} +|10\rangle_{LR}}{\sqrt{2}}.$$
The goal is again to get information about the bit $b$  and therefore, the parameter to be estimated is defined implicitly by the condition that   
$$\mathfrak{s}\left((\mathbb{I}\otimes V)|\psi(b) \rangle_{LR} \right)= b.$$
 It is then clear that the group of all unitaries acting on the right qubit, i.e. $\{\mathbb{I} \otimes V: V\in U(2)\}$ is a symmetry group of both the prior $p$ and the parameter $\mathfrak{s}$.  Moreover, this group of unitaries is clearly a gauge group, so it is a gauge symmetry of the prior and the parameter.  The algebra associated with this gauge group is the full algebra of operators on the left qubit, i.e. $\mathcal{A}\equiv \text{End}(\mathcal{H}_L) \otimes \mathbb{I}$.

 Again, we can see that for any figure of merit which depends only on $q(B|b=b_{0})$,  the assumptions of corollary  \ref{corollary:main} are satisfied and therefore to achieve the optimal estimation, it suffices to consider measurement operators inside the algebra $\mathcal{A}^{\otimes n}$.  It follows that it suffices to consider measurements that are nontrivial on the left qubits only. In other words, one can essentially ignore the right qubits. 
Note that deciding about the value of $b$ is also equivalent to deciding whether the reduced state of the right qubits is $(V|0\rangle)^{\otimes n}$  or $(\mathbb{I}/2)^{\otimes n}$. It follows that the $n$ right qubits do contain some information about the value of $b$, however, our results  imply that once one has the information contained in the left qubits, the information contained in the right qubits is redundant.


\subsection{Proof  of theorem  \ref{Thm-main-chann} and theorem \ref{Thm-main-generalized} } \label{sec:proofofmain}
To prove theorem \ref{Thm-main-chann} we first prove the following lemma which holds for any arbitrary subgroup of the unitary group.

\begin{lemma} (\textbf{From symmetry of the problem to symmetry of the measurement}) \label{lem:sym:main:thm}
In the scenario described in section~\ref{sec:main}, assume
the prior $p$ and the vector of parameters $\vec{\mathfrak{s}}$ are invariant under a subgroup $H$ of $\text{U}(d)$ which has the (normalized) Haar measure $d\mu$. Then for any measurement described by a POVM $M:\sigma(\Omega)\rightarrow\text{End}((\mathbb{C}^d)^{\otimes n})$, the measurement described by
$$\tilde{M}\equiv \mathcal{T}_{\textbf{Q}(H)}(M)= \int_{H} d\mu(V)\  V^{\otimes n} M {V^{\dag}}^{\otimes n}$$
 is as  informative as $M$ about $\vec{\mathfrak{s}}$,
that is,
\begin{equation}
  {q}_{M}\left(B|\vec{S} \in \vec{\Delta}\right)=q_{\tilde{M}}\left(B|\vec{S} \in \vec{\Delta}\right)
\end{equation}
for all $B\in \sigma(\Omega)$ and all $l$-dimensional intervals $\vec{\Delta} \subseteq \mathbb{R}^l$ which are assigned nonzero probability.
\end{lemma}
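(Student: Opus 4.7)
The plan is to prove the stronger statement that the joint probability of $B\in\sigma(\Omega)$ together with $\vec S\in\vec\Delta$ is the same for $M$ and $\tilde M$; since the marginal $\Pr(\vec S\in\vec\Delta)$ does not depend on the measurement, the advertised equality of conditional probabilities then follows by dividing. So I want to establish
\begin{equation*}
\int d\rho\, p(\rho)\,\mathbf{1}[\vec{\mathfrak s}(\rho)\in\vec\Delta]\,\text{tr}\bigl((M(B)-\tilde M(B))\,\mathcal E(\rho^{\otimes n})\bigr)=0.
\end{equation*}

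First I would write the joint probability explicitly as
\begin{equation*}
\Pr\nolimits_M(B,\vec S\in\vec\Delta)=\int d\rho\, p(\rho)\,\mathbf{1}[\vec{\mathfrak s}(\rho)\in\vec\Delta]\,\text{tr}\bigl(M(B)\,\mathcal E(\rho^{\otimes n})\bigr).
\end{equation*}
For each fixed $V\in H$ I perform the change of variable $\rho\mapsto V\rho V^\dagger$. The natural reference measure on density operators is invariant under unitary conjugation, while $p(\rho)$ and the indicator $\mathbf{1}[\vec{\mathfrak s}(\rho)\in\vec\Delta]$ are untouched by virtue of the assumptions \eqref{fun-inv}--\eqref{fun-inv2}, so only the trace argument is affected, yielding
\begin{equation*}
\Pr\nolimits_M(B,\vec S\in\vec\Delta)=\int d\rho\, p(\rho)\,\mathbf{1}[\vec{\mathfrak s}(\rho)\in\vec\Delta]\,\text{tr}\bigl(M(B)\,\mathcal E((V\rho V^\dagger)^{\otimes n})\bigr).
\end{equation*}
Using the collective covariance $\mathcal E((V\rho V^\dagger)^{\otimes n})=V^{\otimes n}\mathcal E(\rho^{\otimes n})V^{\dagger\otimes n}$ (automatic in Theorem~\ref{Thm-main-chann}, where $\mathcal E$ is the identity), plus cyclicity of the trace, I move the unitaries off the state to obtain the integrand $\text{tr}\bigl(V^{\dagger\otimes n}M(B)V^{\otimes n}\,\mathcal E(\rho^{\otimes n})\bigr)$.

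The closing step is to observe that the resulting identity holds for every $V\in H$, so I may average both sides against the normalized Haar measure $d\mu(V)$. The left-hand side is unchanged; on the right-hand side Fubini and linearity of the trace pull the Haar integral inside, producing $\tilde M(B)$, since unimodularity of the compact group $H$ (equivalently, $d\mu(V^{-1})=d\mu(V)$) identifies $\int d\mu(V)\,V^{\dagger\otimes n}M(B)V^{\otimes n}$ with $\int d\mu(V)\,V^{\otimes n}M(B)V^{\dagger\otimes n}=\tilde M(B)$. This yields $\Pr\nolimits_M(B,\vec S\in\vec\Delta)=\Pr\nolimits_{\tilde M}(B,\vec S\in\vec\Delta)$, from which the lemma follows. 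The one delicate point is the covariance identity for $\mathcal E$ exploited above; it is trivial for Theorem~\ref{Thm-main-chann} (identity channel), and for Theorem~\ref{Thm-main-generalized} one first absorbs both $\mathcal E$ and the covariant distortion $\mathcal N$ into the Heisenberg-picture measurement via their adjoints, as anticipated by Remark~\ref{remark-main'}, thereby reducing to the identity-channel case before invoking the lemma.
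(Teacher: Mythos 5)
Your proposal is correct and follows essentially the same route as the paper: a change of variables $\rho\mapsto V\rho V^{\dag}$ exploiting the invariance of the prior and of $\vec{\mathfrak{s}}$, followed by averaging over the normalized Haar measure and an exchange of integrals, the only cosmetic difference being that you work with the joint probability of $B$ and $\vec{S}\in\vec{\Delta}$ and divide by the measurement-independent marginal, while the paper manipulates the conditionals directly. Your explicit remark that the channel must either be the identity or be absorbed into the POVM via its adjoint before invoking the lemma is a fair point of care; the paper's own proof simply writes $q_{M}(B|\rho)=\mathrm{tr}(\rho^{\otimes n}M(B))$, i.e.\ treats the identity-channel case, which is how the lemma is actually used.
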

\begin{proof}
First note that for any $B\in\sigma(\Omega)$
$$q_{M}(B|\rho)=\text{tr}\left(\rho^{\otimes n} M(B)\right)$$
and
$$q_{\tilde{M}}(B|\rho)=\text{tr}\left(\rho^{\otimes n}  [\int_{H} d\mu(V)\  V^{\otimes n}  M(B) {V^{\dag}}^{\otimes n} ]\right)$$
Therefore, by the cyclic property of the trace,
\begin{equation} \label{proof:lem:cond}
q_{\tilde{M}}(B|\rho)=\int_{H} d\mu(V)\ q_{M}(B|V\rho V^{\dag})
\end{equation}
On the other hand,
\begin{equation}\label{Rel.Con}
{q}_{M}\left(B|\vec{S}\in \vec{\Delta}\right)=\frac{1}{\text{Pr}(\vec{S} \in \vec{\Delta})}\int_{\vec{{S}}\in \vec{\Delta}}  d\rho\ p(\rho) \ \ q_{M}(B|\rho)
\end{equation}
and similarly
\begin{equation}\label{Rel.Con}
{q}_{\tilde{M}}\left(B|\vec{{S}}\in \vec{\Delta}\right)=\frac{1}{\text{Pr}(\vec{S} \in \vec{\Delta})}\int_{\vec{{S}}\in \vec{\Delta}}  d\rho\ p(\rho) \ \ q_{\tilde{M}}(B|\rho)
\end{equation}
where $\text{Pr}(\vec{S} \in \vec{\Delta})$ is defined as
\begin{equation} \label{eq:marginS}
\text{Pr}(\vec{S} \in \vec{\Delta}) \equiv \int_{\vec{\mathfrak{s}}(\rho)\in \vec{\Delta}} \text{d}\rho\; p(\rho).
\end{equation}
But
\begin{align*}
 &\int_{\vec{{S}}\in \vec{\Delta}}  d\rho\ p(\rho) \ \ q_{\tilde{M}}(B|\rho)\\ &=
 \int_{\vec{{S}}\in \vec{\Delta}}  d\rho\ p(\rho)\  \int_{H} d\mu(V)\ q_{M}(B|V\rho V^{\dag})
\\ &=
 \int_{H} d\mu(V) \int_{\vec{\mathfrak{s}}(V\rho V^{\dag})\in \vec{\Delta}}  d\rho\ p(V\rho V^{\dag}) \ q_{M}(B|V\rho V^{\dag})
\\ &=
\int_{H} d\mu(V) \int_{\vec{{S}}\in \vec{\Delta}}  d\rho\ p(\rho) \ \ q_{M}(B|\rho)
\\ &=
\int_{\vec{{S}}\in \vec{\Delta}}  d\rho\ p(\rho) \ \ q_{M}(B|\rho)
\end{align*}
where to get the second line we use Eq.(\ref{proof:lem:cond}), to get   the third line we use the invariance of $p$ and $\vec{\mathfrak{s}}$ under $H$, to get the fourth line we use the fact that the measure $d\rho$ is invariant under unitary transformations  and to get the last line we use the fact that the Haar measure of $H$ is normalized.
This completes the proof.
\end{proof}

\begin{proof}(\textbf{Theorem} \ref{Thm-main-chann})

According to the first condition in theorem~\ref{Thm-main-chann}, the prior $p$ and the parameters $\vec{\mathfrak{s}}$ are invariant under the gauge group $G_{\mathcal{A}}$. So we can use lemma  \ref{lem:sym:main:thm} for the symmetry group $G_{\mathcal{A}}$. This implies that  for any given POVM $M:\sigma(\Omega)\rightarrow\text{End}((\mathbb{C}^d)^{\otimes n})$ and
\begin{equation}\label{proof:col:tw}
\tilde{M}\equiv \mathcal{T}_{\textbf{Q}(G_{\mathcal{A}})}(M) = \int_{G_{\mathcal{A}}} d\mu(V)\  V^{\otimes n} M {V^{\dag}}^{\otimes n}
\end{equation}
  it holds that
 \begin{equation} \label{proof:Glob:Sym}
{q}_{\tilde{M}}\left(B|\vec{S}\in \vec{\Delta}\right)={q}_{M}\left(B|\vec{S}\in \vec{\Delta}\right)
\end{equation}
for all $B\in \sigma(\Omega)$ and all $l$-dimensional intervals $\vec{\Delta}$ which are assigned nonzero probability. Now according to assumption 3 of theorem  \ref{Thm-main-chann}, the prior $p$ is nonzero only for pure states. So for all states in $\{\rho^{\otimes n}:\rho\in \text{supp}(p)\}$, i.e. the states Alice is sending to Bob, the support of the state is restricted to the symmetric subspace of $(\mathbb{C}^d)^{\otimes n}$. Since, by assumption 2, the channel is assumed to be the identity map, Bob receives the same state. Therefore all states that Bob receives are restricted to the symmetric subspace of $(\mathbb{C}^d)^{\otimes n}$.  By virtue of corollary \ref{Thm-Meas-LG Equiv}, this together with the fact that  the measurement $\tilde{M}$ has global symmetry imply
$\forall B\in \sigma(\Omega)$ and $\forall\rho\in\text{supp}(p)$
$$\text{tr}\left(\tilde{M}(B) \rho^{\otimes n}\right)=\text{tr}\left(\mathcal{L}_{+}(\tilde{M}(B)) \rho^{\otimes n}\right)$$

Define $M'\equiv\mathcal{L}_{+}(\tilde{M})$ where $\mathcal{L}_{+}$ is the superoperator defined in Eq.(\ref{M-tilde1}) of theorem \ref{pro-map-s}.   Then the above equality implies that
 \begin{equation}
{q}_{\tilde{M}}\left(B|\vec{{S}}\in \vec{\Delta}\right)={q}_{M'}\left(B|\vec{{S}}\in \vec{\Delta}\right)
\end{equation}
for all $B\in \sigma(\Omega)$ and all $\vec{\Delta}$ which are assigned nonzero probability. This together with  Eq.(\ref{proof:Glob:Sym}) implies  that for arbitrary POVM $M$
 \begin{equation}
{q}_{M}\left(B|\vec{{S}}\in \vec{\Delta}\right)={q}_{M'}\left(B|\vec{{S}}\in \vec{\Delta}\right)
\end{equation}
for all $B\in \sigma(\Omega)$ and all $\vec{\Delta}$ which are assigned nonzero probability.

Finally, using the fact that $\Pi_{+}$ commutes with $V^{\otimes n}$ for arbitrary $V\in \text{U}(d)$ we can easily see that
$$\mathcal{L}_{+}(\tilde{M})=\mathcal{L}_{+}({M}),$$
so that
$$M' = \mathcal{L}_{+}({M}).$$
From theorem \ref{pro-map-s}, we know that the image of $\mathcal{L}_{+}$ is in $\mathcal{A}^{\otimes n}$ and therefore so is $M'(B)$ for arbitrary $B\in \sigma(\Omega)$.
\end{proof}

\begin{proof} (\textbf{Theorem \ref{Thm-main-generalized}})

We first prove the special case of theorem~\ref{Thm-main-generalized} where assumptions
\ref{cond:symmetry}, \ref{cond:identity}' and  \ref{cond:pure} hold. In other words, we first prove the theorem for the case of general channels which satisfy the assumptions of theorem  \ref{Thm-main-generalized} but for the special case where the prior is still nonzero only on pure states.
 Then we extend the result to the case of general priors which satisfy the assumption \ref{cond:pure}'.

\noindent\textbf{(i) Generalization to non-identity channels, pure state priors:}

  The idea is to convert the estimation problem with  channel $\mathcal{E}$ to another estimation problem with the identity channel and then apply the result of theorem \ref{Thm-main-chann} to this new estimation problem.

For any estimation problem described  by the parameters $\vec{\mathfrak{s}}$, prior $p$, and the channel $\mathcal{E}$, we consider the two following scenarios:
\begin{itemize}
\item Scenario (a) in which Alice prepares $n$ copies of the state $\rho$  according to the probability density $p(\rho)$ and sends them through the channel $\mathcal{E}$ and  then Bob  performs a measurement described by  POVM $M: \sigma(\Omega)\rightarrow \text{End}((\mathbb{C}^d)^{\otimes n})$,  and
\item Scenario (b)  in which Alice prepares $n$ copies of the state $\rho$ according to the probability density $p(\rho)$ but then sends them  through the identity channel  and Bob performs the measurement described by  POVM $\mathcal{E}^{\dag}(M)$  on the systems.
\end{itemize}
The definitions of these two scenarios immediately imply
\begin{equation}\label{proof:Eq:cond}
{q}^{(a)}_{M}\left(B|\vec{{S}}\in \vec{\Delta}\right)=q^{(b)}_{\mathcal{E}^{\dag}(M)}\left(B|\vec{{S}}\in \vec{\Delta}\right)
\end{equation}
where the left and right hand sides describe the conditional for the scenarios $(a)$ and $(b)$ respectively.  This is true because in the scenario $(a)$ the probability of event  $B\in \sigma(\Omega)$ given that Alice has chosen state $\rho$ is $\text{tr}\left(M(B)\mathcal{E}(\rho^{\otimes n})\right)$. On the other hand, in the scenario $(b)$,
 the probability of event $B\in\sigma(\Omega)$ given that the state chosen by Alice is $\rho$ is $\text{tr}\left(\mathcal{E}^{\dag}(M(B))\rho^{\otimes n}\right)$. But since
$$\text{tr}\left(M(B)\mathcal{E}(\rho^{\otimes n})\right)=\text{tr}\left(\mathcal{E}^{\dag}(M(B))\rho^{\otimes n}\right)$$
for all $\rho\in \text{supp}(p)$ and $B\in \sigma(\Omega)$, Eq.(\ref{proof:Eq:cond}) follows.

Now in the scenario (b), where  the channel is the identity map,  we can apply  theorem~\ref{Thm-main-chann}. Note that the assumptions of this theorem are satisfied for the gauge group $G_{\mathcal{A}}$. This implies
\begin{equation}\label{proof:b:b:cond}
q^{(b)}_{\mathcal{L}_{+}(\mathcal{E}^{\dag}(M))}\left(B|\vec{{S}}\in \vec{\Delta}\right)
=q^{(b)}_{\mathcal{E}^{\dag}(M)}\left(B|\vec{{S}}\in \vec{\Delta}\right)
\end{equation}
Since the channel $\mathcal{E}$ is noiseless on $\mathcal{A}^{\otimes n}$  (assumption \ref{cond:identity}$'$) then all elements of $\mathcal{A}^{\otimes n}$ are fixed points of $\mathcal{E}^{\dag}$. (The fact that $\mathcal{E}$ is noiseless on $\mathcal{A}^{\otimes n}$ implies that for any operators $R_{1}\in \text{End}((\mathbb{C}^d)^{\otimes n})$ and $R_{2}\in\mathcal{A}^{\otimes n}$ it holds that $\text{tr}\left(R_{2}\mathcal{E}(R_{1})\right)=\text{tr}\left(R_{2}R_{1}\right)$. But this implies that   $\text{tr}\left(\mathcal{E}^{\dag}(R_{2})R_{1}\right)=\text{tr}\left(R_{2}R_{1}\right)$ which proves the claim.)

Then since elements of $\mathcal{A}^{\otimes n}$ are fixed points of $\mathcal{E^{\dag}}$ and since the image of  $\mathcal{L}_{+}$  is in
$\mathcal{A}^{\otimes n}$ we conclude that
$$\mathcal{E}^{\dag}\circ \mathcal{L}_{+}=\mathcal{L}_{+}$$
Putting this into Eq.(\ref{proof:b:b:cond}) we find
\begin{eqnarray*} \label{eq:kk1}
q^{(b)}_{ \mathcal{E}^{\dag}\circ\mathcal{L}_{+}\circ\mathcal{E}^{\dag}(M)}\left(B|\vec{{S}}\in \vec{\Delta}\right)=
q^{(b)}_{\mathcal{E}^{\dag}(M)}\left(B|\vec{{S}}\in \vec{\Delta}\right)
\end{eqnarray*}
Now for the conditionals on each side of this equality, we use Eq.(\ref{proof:Eq:cond}) to find the measurement in the scenario (a) that yields the same conditional.  We infer that
\begin{eqnarray} \label{eq:kk2}
q^{(a)}_{\mathcal{L}_{+}\circ\mathcal{E}^{\dag}(M)}\left(B|\vec{{S}}\in \vec{\Delta}\right)=
q^{(a)}_{M}\left(B|\vec{{S}}\in \vec{\Delta}\right),
\end{eqnarray}
and this holds for arbitrary $\rho\in \text{supp}(p)$ and event $B\in \sigma(\Omega)$ and arbitrary  POVM $M:\sigma(\Omega)\rightarrow \text{End}((\mathbb{C}^d)^{\otimes n})$. This completes the proof of the special case of the theorem where the prior $p$ is nonzero only for pure states. Note that in this particular case one can choose
$$M'\equiv \mathcal{L}_{+}\circ\mathcal{E}^{\dag}(M).$$

\noindent\textbf{(ii) Generalization to mixed state prior:}

According to assumption  1 the prior  $p$ is invariant under $G_{\mathcal{A}}$ and according to assumption 3', it can be realized by first sampling a pure state from $\tilde{p}$ and then applying channel $\mathcal{N}$ to the state where $\mathcal{N}$ is both $G_{\mathcal{A}}$ covariant and noiseless on $\mathcal{A}$.  Then one can easily see that the prior $\tilde{p}$ can always be chosen to be invariant under $G_{\mathcal{A}}$. In other words, for any given prior $\tilde{p}$ which satisfies the above properties there exists a prior $p'$ defined as
\begin{equation} \label{eq:pprime}
p'(\cdot)\equiv\int_{G_{\mathcal{A}}} d\mu(V)\ \tilde{p}(V\cdot V^{\dag})
\end{equation}
which also satisfies these properties, i.e. $p'$ is nonzero only on pure states and furthermore one can realize the prior $p$ by sampling a pure state from $p'$ and then applying the quantum channel $\mathcal{N}$ to the state. In addition to these properties, definition \ref{eq:pprime} guarantees that $p'$ is  also invariant under $G_{\mathcal{A}}$.

Now consider the estimation problem which is specified by the parameters $\vec{\mathfrak{s}}$, the prior $p$ and the channel $\mathcal{E}$ which satisfy all the assumptions of theorem \ref{Thm-main-generalized}.  We call this \emph{estimation problem (a)}.   Now define \emph{estimation problem (b)} via the following modifications of problem (a):
\begin{enumerate}
\item We change the prior $p$ to $p'$ defined in Eq.~\eqref{eq:pprime}.
\item We change the parameters $\vec{\mathfrak{s}}$ to $\vec{\mathfrak{s}}'$ where
\begin{equation} \label{eq:sprime}
\vec{\mathfrak{s}}'(\cdot)\equiv \vec{\mathfrak{s}}\left(\mathcal{N}(\cdot) \right)
\end{equation}
 and so naturally replace the random variables $\vec{S}$ induced by parameters $\vec{\mathfrak{s}}$ to the random variables $\vec{S}'$ induced by parameters $\vec{\mathfrak{s}}'$.
\item We change the channel $\mathcal{E}$ in the problem (a) to the channel
\begin{equation} \label{eq:Eprime}
\mathcal{E}'\equiv\mathcal{E}\circ {\mathcal{N}}^{\otimes n}.
\end{equation}
\end{enumerate}
For any POVM $M:\sigma(\Omega)\rightarrow \text{End}((\mathbb{C}^d)^{\otimes n})$ let
 $${q}^{(a)}_{M}\left(B|\vec{S}\in \vec{\Delta}\right)$$
be the conditional that in problem (a) an event $B\in \sigma(\Omega)$ happens given  $\vec{{S}}\in \vec{\Delta}$ and similarly
 $${q}^{(b)}_{M}\left(B|\vec{S}'\in \vec{\Delta}\right)$$
 be the conditional that in problem $(b)$ an event $B\in \sigma(\Omega)$ happens given  $\vec{{S}}'\in \vec{\Delta}$.

Now one can easily see that
by the manner in which they are defined, the parameters $\mathfrak{s}'$, prior $p'$ and channel $\mathcal{E}'$ of problem (b) satisfy  all the assumptions of the theorem.

On the other hand, since $p'$ is nonzero only for pure states then in the case of problem (b) we can use the result of part (i) of this proof, Eq.~\eqref{eq:kk2}, which implies that for any POVM $M: \sigma(\Omega)\rightarrow \text{End}((\mathbb{C}^d)^{\otimes n})$
\begin{equation} \label{proof:Eq:cond:BB}
{q}^{(b)}_{M}\left(B|\vec{S}'\in \vec{\Delta}\right)=q^{(b)}_{\mathcal{L}_{+}\circ\mathcal{E}'^{\dag}(M)}\left(B|\vec{S}'\in \vec{\Delta}\right)
\end{equation}
for all $B\in \sigma(\Omega)$ and for all $l$-dimensional intervals $\vec{\Delta}$ which are assigned nonzero probability.

Then it can be shown that for any POVM $M: \sigma(\Omega)\rightarrow \text{End}((\mathbb{C}^d)^{\otimes n})$  it holds that
\begin{equation} \label{proof:Eq:cond:AB}
{q}^{(a)}_{M}\left(B|\vec{{S}}\in \vec{\Delta}\right)=q^{(b)}_{M}\left(B|\vec{{S}}'\in \vec{\Delta}\right)
\end{equation}
for all $B\in \sigma(\Omega)$ and for all $l$-dimensional intervals $\vec{\Delta}$ which are assigned nonzero probability. We present the proof of this equality at the end. Now this equality allows us to transform the conditionals for problem (a) to the conditionals for the problem (b). Applying Eq.~\eqref{proof:Eq:cond:AB} to both sides of Eq.(\ref{proof:Eq:cond:BB}), we get
\begin{equation}
{q}^{(a)}_{M}\left(B|\vec{{S}}\in \vec{\Delta}\right)=q^{(a)}_{\mathcal{L}_{+}\circ\mathcal{E}'^{\dag}(M)}\left(B|\vec{{S}}\in \vec{\Delta}\right)
\end{equation}
Recall that the problem (a) is the original problem in the statement of theorem. So, defining
$${M}'\equiv \mathcal{L}_{+}\circ\mathcal{E}'^{\dag}(M)=\mathcal{L}_{+}\circ{\mathcal{N}^{\dag}}^{\otimes n}\circ\mathcal{E}^{\dag}(M)  $$
we conclude that in the original problem  for arbitrary POVM $M$, for arbitrary $B\in\sigma(\Omega)$ and for arbitrary $\vec{\Delta}$, it holds that
\begin{equation}
{q}_{M}\left(B|\vec{{S}}\in \vec{\Delta}\right)=q_{M'}\left(B|\vec{{S}}\in \vec{\Delta}\right)
\end{equation}
where for all $B\in\sigma(\Omega)$,  $M'(B)$ is in $\mathcal{A}^{\otimes n}$ as it is claimed in the theorem.

So it remains only to prove that Eq.(\ref{proof:Eq:cond:AB}) holds.
Let $\vec{\Delta} \subseteq \mathbb{R}^l$, and define
probability measures
\begin{align*}
&\text{Pr}^{(a)}\left(\vec{{S}}\in \vec{\Delta}\right)\equiv\int_{\vec{\mathfrak{s}}(\rho)\in \vec{\Delta}} d\rho\  p(\rho)
 \ \ \ \ \  \text{and},\\
&\text{Pr}^{(b)}\left(\vec{{S'}}\in \vec{\Delta}\right)\equiv\int_{\vec{\mathfrak{s}}'(\rho)\in \vec{\Delta}} d\rho\  p'(\rho).
\end{align*}

Note that
\begin{align*}
&q^{(a)}_{M}(B|\vec{{S}}\in \vec{\Delta})\equiv
\frac{\int_{\vec{{S}}\in \vec{\Delta}} d\rho  p(\rho) \  q^{(a)}_{M}(B|\rho)}{\text{Pr}^{(a)}\left(\vec{{S}}\in \vec{\Delta}\right)}\ \ \ \ \  \text{and},  \\
&q^{(b)}_{M}(B|\vec{{S}'}\in \vec{\Delta})\equiv
\frac{\int_{\vec{{S}'}\in \vec{\Delta}} d\rho  p'(\rho) \ q^{(b)}_{M}(B|\rho)}{\text{Pr}^{(b)}\left(\vec{{S}'}\in \vec{\Delta}\right)}
\end{align*}

Now using the definition $\mathfrak{s}'(\cdot) \equiv \mathfrak{s}(\mathcal{N}(\cdot))$ from Eq.~\eqref{eq:sprime},  we get

\begin{align}  \label{proof:prob:eq}
\nonumber \text{Pr}^{(b)}\left(\vec{{S'}}\in \vec{\Delta}\right)   &=\int_{\vec{\mathfrak{s}}(\mathcal{N}(\rho))\in \vec{\Delta}} d\rho\  p'(\rho)\\ \nonumber &=\int_{\vec{\mathfrak{s}}(\rho)\in \vec{\Delta}} d\rho\  p(\rho)\\ &=\text{Pr}^{(a)}\left(\vec{{S}}\in \vec{\Delta}\right) 
\end{align}
where to get the second line we have used the fact that by sampling a pure state from $p'$ and applying the channel $\mathcal{N}$ to it realizes the prior $p$.
Using exactly the same argument for
\begin{align*}
q^{(b)}_{M}(B|\rho)&=\text{tr}\left(\mathcal{E}'(\rho^{\otimes n}) M(B)\right)\ \ \ \ \ \ \ \ \ \  \text{and}\\
q^{(a)}_{M}(B|\rho)&=\text{tr}\left(\mathcal{E}(\rho^{\otimes n}) M(B)\right)
\end{align*}
and the definition $\mathcal{E}' \equiv \mathcal{E} \circ \mathcal{N}^{\otimes n}$, Eq.~\eqref{eq:Eprime}, we can prove that
\begin{align} \label{eq:llfinal}
\int_{\vec{{S}}\in \vec{\Delta}} d\rho \  p(\rho) \ q^{(a)}_{M}(B|\rho) = \int_{\vec{{S}'}\in \vec{\Delta}} d\rho\  p'(\rho) \  q^{(b)}_{M}(B|\rho)
\end{align}
Eqs.~\eqref{eq:llfinal} and  \eqref{proof:prob:eq}
together imply Eq.(\ref{proof:Eq:cond:AB}). This completes the proof.
\end{proof}

\section{Single-copy estimation problems for bipartite systems}\label{sec:bipartite}

Previously in this paper, the distinction between global and local symmetries was relative to the partitioning of the total system into $n$ copies of the system of interest. However, one can also consider estimation problems where the estimator gets only a single copy of the system of interest, and the distinction between global and local symmetries is relative to the partitioning of the system of interest into its components.  This case can be significantly different because the components of the system of interest need not correspond to copies of a single state.  Indeed, they could even be entangled.

In particular, we consider the case where the system has only \emph{two} components.  This case allows us to  obtain particularly strong constraints on the optimal measurement because the permutation group on two systems has only irreducible representations over the symmetric and antisymmetric subspaces and our duality only permits an inference from global symmetry to local symmetry within the symmetric and antisymmetric subspaces (as shown by the counterexample from Appendix \ref{app-count-ex}).

\subsection{General framework}
We begin with some notation.   The canonical representation of the permutation group on the pair is $\textbf{P}(\mathcal{S}_2) \equiv \{ \mathbb{I}_{d\times d}, \text{Swap} \}$, where $\mathbb{I}_{d\times d}$ is the identity operator on $(\mathbb{C}^d)^{\otimes 2}$ and $\text{Swap}$ is the unitary which exchanges the state of the two systems, i.e.  $\text{Swap}(|\psi\rangle|\phi\rangle) = |\phi\rangle |\psi\rangle$.  Under $\textbf{P}(\mathcal{S}_2)$, the space $(\mathbb{C}^{d})^{\otimes 2}$ decomposes as
\begin{equation}
(\mathbb{C}^{d})^{\otimes 2}\cong  [(\mathbb{C}^{d})^{\otimes 2}]_{+}\oplus [(\mathbb{C}^{d})^{\otimes 2}]_{-}
\end{equation}
Also, for any subgroup $H\subseteq \text{U}(d)$,  the collective representation of $H$ on the pair of systems is denoted by $\textbf{Q}(H)\equiv \{V^{\otimes 2}: V\in H\}$.

We are now in a position to state our result.

\textbf{Scenario:} Suppose that  Alice randomly chooses an unknown state $\rho$ from the density operators in $\text{End}\left((\mathbb{C}^d)^{\otimes 2}\right)$  according to some probability density  $p$ and sends a single system  in the state $\rho$ to Bob. Here, the density $p$ is defined relative to $d\rho$ a reference measure on the space of mixed states which is invariant under unitary transformations. Let $\vec{\mathfrak{s}}(\cdot)=\left(\mathfrak{s}^{(1)}(\cdot),\cdots,\mathfrak{s}^{(l)}(\cdot)\right)$ be an arbitrary set of functions where $\mathfrak{s}^{{(i)}}:\text{supp}(p)\rightarrow \mathbb{R}$,  and  let $\vec{{S}}$ be the random variables defined as $\vec{{S}}\equiv \vec{\mathfrak{s}}(\rho) $ where $\rho$ is the random state Alice chooses.

Recalling our earlier definitions, Eqs.~\eqref{fun-inv} and \eqref{fun-inv2}, of what it means for a prior $p$ and a vector of parameters $\vec{\mathfrak{s}}$ to have a symmetry, we can state our result as follows:

\begin{theorem}\label{Thm:bipartite}
Let $\mathcal{A}\subseteq \text{End}(\mathbb{C}^d)$ be a von Neumann algebra  with the gauge group $G_{\mathcal{A}}$. Assume that the prior $p$ and the vector of parameters $\vec{\mathfrak{s}}$
\begin{enumerate}
\item \label{cond:collgauge} have $\textbf{Q}(G_{\mathcal{A}})$ as a symmetry;
\item \label{cond:swap} have $\textbf{P}(\mathcal{S}_2)$ as a symmetry.
 \end{enumerate}
Then for any given measurement with POVM $M:\sigma(\Omega)\rightarrow \text{End}((\mathbb{C}^{d})^{\otimes 2})$ , there is another measurement whose POVM  is of the form
$$M'\equiv\Pi_{+}M_{+}\Pi_{+}+\Pi_{-}M_{-}\Pi_{-}$$
where
$M_{\pm}: \sigma(\Omega)\rightarrow\mathcal{A}^{\otimes 2}$ are POVMs,
such that $M'$ is as informative about $\vec{S}$ as $M$ is, i.e.,
\begin{equation}
  {q}_{M}\left(B|\vec{S}\in \vec{\Delta} \right)=q_{M'}\left(B|\vec{S}\in \vec{\Delta}\right)
\end{equation}
for all $B\in \sigma(\Omega)$ and all $l$-dimensional intervals $\vec{\Delta}$ which are assigned nonzero probability.
\end{theorem}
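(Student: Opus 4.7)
The plan is a two-step reduction. First, twirl the given POVM so that it respects both symmetries without any loss of informativeness about $\vec{S}$. Second, use the duality within the symmetric and antisymmetric subspaces (Theorem \ref{pro-map-s}) to push the resulting POVM elements into $\mathcal{A}^{\otimes 2}$.

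For the first step, I would apply Lemma \ref{lem:sym:main:thm} in its single-copy version, where the Hilbert space is the bipartite $(\mathbb{C}^d)^{\otimes 2}$ and the twirling group is the group generated by $\textbf{Q}(G_{\mathcal{A}})$ and $\textbf{P}(\mathcal{S}_2)$ (a subgroup of $\text{U}(d^2)$). The proof of the lemma is a change-of-variables argument using only invariance of the prior and parameters under the twirling group together with the normalization of the Haar measure, and it applies verbatim once one reads $\rho$ (a generic state on $(\mathbb{C}^d)^{\otimes 2}$) in place of $\rho^{\otimes n}$. The output is
\begin{equation*}
\tilde M(B)\equiv\tfrac{1}{2}\!\sum_{s\in\mathcal{S}_2}\!\int_{G_{\mathcal{A}}}\!d\mu(V)\;\textbf{P}(s)\,V^{\otimes 2} M(B)\,(V^{\dag})^{\otimes 2}\textbf{P}(s)^{\dag},
\end{equation*}
a POVM that is equally informative about $\vec{S}$ as $M$ and, by construction, commutes with both $\textbf{Q}(G_{\mathcal{A}})$ and the swap. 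Swap-invariance forces the block decomposition $\tilde M(B)=\Pi_+\tilde M(B)\Pi_+ + \Pi_-\tilde M(B)\Pi_-$.

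For the second step, I would invoke Theorem \ref{pro-map-s} with $n=2$. Since each $\Pi_\pm\tilde M(B)\Pi_\pm$ commutes with $\textbf{Q}(G_{\mathcal{A}})$, the theorem yields POVMs $M_\pm(B)\equiv\mathcal{L}_\pm(\tilde M(B))$ whose ranges lie in the permutationally invariant subalgebra of $\mathcal{A}^{\otimes 2}$ (in particular in $\mathcal{A}^{\otimes 2}$) and that satisfy $\Pi_\pm M_\pm(B)\Pi_\pm = \Pi_\pm\tilde M(B)\Pi_\pm$; that these are POVMs follows from complete positivity and unitality of $\mathcal{L}_\pm$. Defining $M'(B)\equiv\Pi_+ M_+(B)\Pi_+ + \Pi_- M_-(B)\Pi_-$ then recovers $M'(B)=\tilde M(B)$ as operators, so $M'$ has the claimed form and is equally informative about $\vec{S}$ as $M$.

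The main potential obstacle is confirming that Lemma \ref{lem:sym:main:thm} really transfers from the $n$-copy product-state setting to the single-copy bipartite setting with a possibly entangled state on $(\mathbb{C}^d)^{\otimes 2}$. However, the change of variable $\rho\mapsto V\rho V^{\dag}$ underlying the proof is Haar-invariant and uses only trace cyclicity; nothing in the argument exploits product structure of the state. Thus the extension is a purely notational adaptation, after which Theorem \ref{pro-map-s} does all the remaining work of promoting the global symmetry of $\tilde M$ to membership in $\mathcal{A}^{\otimes 2}$ within each permutation block.
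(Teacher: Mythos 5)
Your proposal is correct and follows essentially the same route as the paper's own proof: twirl $M$ over the group generated by $\textbf{Q}(G_{\mathcal{A}})$ and $\textbf{P}(\mathcal{S}_2)$ via the single-copy ($n=1$, Hilbert space $\mathbb{C}^d\otimes\mathbb{C}^d$) instance of Lemma \ref{lem:sym:main:thm}, use swap invariance to obtain the $\Pi_{\pm}$ block structure, and then apply Theorem \ref{pro-map-s} (the paper phrases this step through Corollary \ref{Thm-Meas-LG Equiv}, likewise taking $M_{\pm}=\mathcal{L}_{\pm}(\tilde M)$) to place $M_{\pm}$ in $\mathcal{A}^{\otimes 2}$. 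No gaps.
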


The proof is provided at the end of this section.  Note that unlike theorems~\ref{Thm-main-chann} and \ref{Thm-main-generalized}, the prior is not presumed to have support only on the pure states nor to be a gauge distortion of one that does.

\begin{remark}
The measurement $M'$ described in the above theorem can be implemented as follows: first perform the measurement which projects onto the symmetric/anti-symmetric subspace (the projective measurement described by projectors $\{\Pi_{+},\Pi_{-}\}$) and then, depending on the outcome of this measurement, perform either measurement $M_{+}$ or $M_{-}$ where both have local symmetry with respect to $G_{\mathcal{A}}$. The outcome of measurement $M'$ is the outcome of whichever of these measurements was performed.
\end{remark}

\subsection{Example}

Suppose that the prior over the pair of systems has support only on product states $\rho_{1}\otimes\rho_{2}$ where $\rho_{1},\rho_{2}\in \text{End}(\mathbb{C}^{d})$ and that it corresponds to choosing $\rho_{1}$ and $\rho_{2}$ independently according to a prior $p_{0}$, so that the joint prior has the form $p(\rho_{1}\otimes \rho_{2})=p_{{0}}(\rho_{1})p_{0}(\rho_{2})$.
Assume further that $p_0(\rho)$ only depends on the eigenvalues of $\rho$, so that $p_0(\cdot)=p_0\left(V(\cdot) V^{\dag}\right)$ for arbitrary $V\in \text{U}(d)$, i.e., $p_0$ has $\text{U}(d)$ as a symmetry.  It follows that the prior $p$ on the pair has $\textbf{Q}(\text{U}(d))$ as a symmetry, and consequently it also has $\textbf{Q}(H)$ as a symmetry for any subgroup $H$ of $\text{U}(d)$. Moreover, the prior $p$ is invariant under permutations, i.e. it has $\textbf{P}(\mathcal{S}_2)$ as a symmetry.

The goal is to estimate the parameter $\mathfrak{s}(\rho_{1}\otimes\rho_{2})= |\text{tr}\left(A \rho_{1}\right) -\text{tr}\left(A \rho_{2}\right)|$ for some observable $A$.  Let $\mathcal{A}$ denote the algebra generated by $\{\mathbb{I}_{d},A\}$ and let $G_{\mathcal{A}}$ denote the associated gauge group. It is clear that $\mathfrak{s}$ has $\textbf{Q}(G_{\mathcal{A}})$ as a symmetry.  Furthermore, $\mathfrak{s}$ is invariant under a swap of the pair of systems and therefore has $\textbf{P}(\mathcal{S}_2)$ as a symmetry as well.
The parameter $\mathfrak{s}$ therefore satisfies the assumptions of the above theorem for the gauge group $G_{\mathcal{A}}$.  Furthermore, because $G_{\mathcal{A}}$ is a subgroup of $\text{U}(d)$, the prior $p$ satisfies the assumptions of the above theorem as well.

So, for any figure of merit that can be defined as a functional acting on  ${q}_{M}\left(B|{S}={s}_{0}\right)$, the optimal estimation strategy corresponds to a POVM $M'$ of the form described in the theorem.  In our example, such a measurement has a particularly simple form.
First, note that because the two POVMs $M_{+}$ and $M_{-}$ have local symmetry with respect to $G_{\mathcal{A}}$ and because $\mathcal{A}$ is commutative, using proposition \ref{lem:com}, we can conclude that $M_{+}$ and $M_{-}$ can both be realized by measuring a Hermitian generator of $\mathcal{A}$ (e.g. the operator $A$) individually on each system and performing a classical processing of the outcome.  This means that in the case of this example, the POVM $M'$ described in the theorem can be realized by (i) performing the measurement which projects the state into the symmetric and antisymmetric subspaces, (ii) measuring the observable  $A$ individually on each system and (iii)  generating the outcome by a classical processing of the outcomes of these measurements.  So for all such $M'$s, the measurements are fixed and the part which is different is just the classical processing.

The same result holds for any other parameter which is invariant with respect to the exchange of the pair of systems and can be expressed in terms of an operator $A$,  such as $\mathfrak{s}(\rho_{1}\otimes\rho_{2})=\text{tr}\left(A \rho_{1}\right) +\text{tr}\left(A \rho_{2}\right)$ or more complicated parameters such as $\mathfrak{s}(\rho_{1}\otimes\rho_{2})=\text{tr}\left(A \rho^{k}_{1} \rho^{k}_{2}\right)+\text{tr}\left(A \rho^{k}_{2} \rho^{k}_{1}\right)$ for some integer $k$.

\subsection{Proof of theorem~\ref{Thm:bipartite}}

\begin{proof}(Theorem \ref{Thm:bipartite})
We need to apply lemma  \ref{lem:sym:main:thm} in its special case where $n=1$ and the Hilbert space of a single copy (which was denoted by $\mathbb{C}^{d}$ in the statement of lemma)   is   $\mathbb{C}^{d}\otimes \mathbb{C}^{d}$. The symmetry of the problem, denoted by $H \subseteq \text{U}(d^{2})$,  is the group generated by $\textbf{Q}(G_{\mathcal{A}})$ and $\textbf{P}(\mathcal{S}_2)$ together.
Then lemma \ref{lem:sym:main:thm} implies that  for any POVM $M: \sigma(\Omega)\rightarrow\text{End}(\mathbb{C}^{d}\otimes\mathbb{C}^{d})$ there is a POVM
$$\tilde{M}\equiv \mathcal{T}_{H}(M)= \int_{H} d\mu(V)\  V M V^{\dag}$$
 such that
$${q}_{\tilde{M}}\left(B|\vec{S}\in \vec{\Delta} \right)={q}_{M}\left(B|\vec{S}\in \vec{\Delta}\right)$$
for all $B\in \sigma(\Omega)$ and all $l$-dimensional intervals $\vec{\Delta}$ which are assigned nonzero probability. Now the above definition implies that  $\tilde{M}$ is invariant under permutation, i.e. $\tilde{M}=\text{Swap}[\tilde{M}] \text{Swap}$. This implies that
$$\tilde{M}=\Pi_{+}\tilde{M}\Pi_{+}+\Pi_{-}\tilde{M}\Pi_{-}.$$
$\tilde{M}$ also has global symmetry with respect to the gauge group $G_{\mathcal{A}}$, i.e. it commutes with $\textbf{Q}(G_{\mathcal{A}})$. Now corollary  \ref{Thm-Meas-LG Equiv} implies that for states whose supports are restricted to the symmetric/anti-symmetric subspaces a measurement with global symmetry with respect to gauge group $G_{A}$ can be simulated by a measurement whose POVM has local symmetry (and so its POVM elements are in $\mathcal{A}\otimes\mathcal{A}$). Therefore there exists POVMs $M_{+}$ and $M_{-}$ where
$M_{\pm}: \sigma(\Omega)\rightarrow\mathcal{A}\otimes \mathcal{A}$ such that
$$\Pi_{+}M_{+}\Pi_{+}=\Pi_{+}\tilde{M}\Pi_{+}\ \  \text{and}\ \ \Pi_{-}M_{-}\Pi_{-}=\Pi_{-}\tilde{M}\Pi_{-}$$
An example of $M_{\pm}$ is $\mathcal{L}_{\pm}(\tilde{M})$. Also since $\Pi_{\pm}\mathcal{L}_{\pm}(\tilde{M})\Pi_{\pm}=\Pi_{\pm}\mathcal{L}_{\pm}({M})\Pi_{\pm}$, it follows that $\mathcal{L}_{\pm}({M})$ is also an example of $M_{\pm}$. This completes the proof.
\end{proof}



\begin{acknowledgments}
We acknowledge helpful discussions with Giulio Chiribella.  Perimeter Institute is supported by the Government of Canada through Industry Canada and by the Province of Ontario through the Ministry of Research and Innovation. I. M. is supported by a Mike and Ophelia Lazaridis fellowship and NSERC.
\end{acknowledgments}

\appendix

\section{Proofs of lemma \ref{Lem-Per} and theorem  \ref{pro-map-s}} \label{Sec-Proof}

Throughout these proofs we use the superoperator  $\mathcal{T}_{\mathcal{S}_{n}}:\text{End}({(\mathbb{C}^d)^{\otimes n}})\rightarrow \text{End}({(\mathbb{C}^d)^{\otimes n}})$
\begin{equation}\label{symmetrizer}
\mathcal{T}_{\mathcal{S}_{n}}(\cdot)\equiv \frac{1}{n!}\sum_{s\in \mathcal{S}_{n}} \textbf{P}(s) (\cdot) \textbf{P}^{\dag}(s)
\end{equation}
which maps any operator in $\text{End}({(\mathbb{C}^d)^{\otimes n}})$ to its symmetrized version (under permutation).

\begin{proof}(\textbf{lemma \ref{Lem-Per}})
First note that $\text{Alg}\{\textbf{Q}(G)\}\subseteq \text{Alg}\{G^{\times n}\}$ and furthermore all elements of $\text{Alg}\{\textbf{Q}(G)\}$ are permutationally invariant. So $\text{Alg}\{\textbf{Q}(G)\}$ is included in the permutationally invariant subalgebra of  $\text{Alg}\{G^{\times n}\}$. In the following, we prove the converse inclusion.

We prove this by induction.  First we prove that for arbitrary $V_{0}\in G$, the subspace spanned by $\mathcal{T}_{\mathcal{S}_n}(V_{0}\otimes I^{\otimes (n-1)})$ is in $\text{Alg}\{\textbf{Q}(G)\}$. Then by induction we prove it is true for $\mathcal{T}_{\mathcal{S}_n}(V_{0}\otimes\cdots\otimes V_{n})$ for arbitrary $V_{i}\in G:i=1,\cdots ,n$ which proves the claim.

For arbitrary unitary $V_{0}\in G$, clearly $V_{0}+V_{0}^{\dag}$ and $i(V_{0}-V_{0}^{\dag})$ are both Hermitian operators which commute with $G'$ (the centralizer of $G$). Therefore, all  operators of the form  $V_{0}(\theta,\phi)\equiv \exp{[i\theta(V_{0}+V_{0}^{\dag})+\phi(V_{0}-V_{0}^{\dag}) ]}$,  for arbitrary real numbers $\theta$ and $\phi$ are unitary and commute with $G'$.  By virtue of being a gauge group, $G$ includes all unitaries which commute with $G'$, and it therefore follows that $V_{0}(\theta,\phi)\in G$. We can easily see that
\begin{equation}
\frac{1}{2}(\frac{\partial}{\partial \phi}-i\frac{\partial}{\partial \theta})_{|_{\theta=\phi=0}} V_{0}(\theta,\phi)=V_{0}
\end{equation}
 This implies that
 \begin{equation}
\frac{1}{2} (\frac{\partial}{\partial \phi}-i\frac{\partial}{\partial \theta})_{|_{\theta=\phi=0}} V_{0}^{\otimes n}(\theta,\phi)=\sum_{k} V_{0}^{(k)}
 \end{equation}
 where $V_{0}^{(k)}\equiv \mathbb{I}^{\otimes (k-1)}\otimes V_{0}\otimes \mathbb{I}^{\otimes(n-k)} $.
 This means that for arbitrary $V_{0}\in G$ 
 \begin{equation}\label{proof-inc-lem}
 \mathcal{T}_{\mathcal{S}_n}(V_{0}\otimes \mathbb{I}^{\otimes (n-1)})\in\text{Alg}\{\textbf{Q}(G)\}.
\end{equation}
Next we assume that
 $$\mathcal{T}_{\mathcal{S}_n}(V_{0}\otimes\cdots\otimes V_{k-1}\otimes \mathbb{I}^{\otimes (n-k)})$$
is in $\text{Alg}\{\textbf{Q}(G)\}$ for arbitrary $V_{i}\in{G}:i=0,\cdots, k-1$.  This together with Eq.(\ref{proof-inc-lem}) imply that for arbitrary $V_{k}\in G$
$$\mathcal{T}_{\mathcal{S}_n}(V_{0}\otimes\cdots V_{k-1}\otimes \mathbb{I}^{\otimes (n-k)})\mathcal{T}_{\mathcal{S}_n}(V_{k}\otimes \mathbb{I}^{\otimes (n-1)})$$
is in $\text{Alg}\{\textbf{Q}(G)\}$. Expanding this, one can easily see that it can be written as
\begin{align*}
c_{1}\mathcal{T}_{\mathcal{S}_n}(V_{0}\otimes\cdots V_{k-1}\otimes V_{k}&\otimes \mathbb{I}^{\otimes (n-k-1)})+c_{2}\mathcal{T}_{\mathcal{S}_n}(U_{0}\otimes\cdots U_{k-1}{\otimes}\mathbb{I}^{\otimes (n-k)})
\end{align*}
for some nonzero coefficients  $c_{1},c_{2}$ and unitaries  $U_{i}\in{G}:i=0,\cdots, k-1$. Now since the sum and the second term each are in the span of $\text{Alg}\{\textbf{Q}(G)\}$ then we conclude that the first term is also in $\text{Alg}\{\textbf{Q}(G)\}$.  Note that  $k$ and $V_{i}\in G: i=0\cdots k$ are arbitrary. So by induction we have the lemma.
\end{proof}

\begin{proof} \textbf{(theorem  \ref{pro-map-s})}

Suppose for operator $M\in\text{End}\left((\mathbb{C}^{d})^{\otimes n}\right)$ it holds that $\Pi_{\pm} M\Pi_{\pm}$ commutes with $\textbf{Q}(G_{\mathcal{A}})$, i.e. 
\begin{equation}
\forall V\in G_{\mathcal{A}}: \Pi_{\pm}M\Pi_{\pm} \textbf{Q}(V)= \textbf{Q}(V) \Pi_{\pm}M \Pi_{\pm} 
\end{equation}
 Since $V^{\otimes n}$ commutes with $\Pi_{\pm}$ this implies
 \begin{equation} \label{commute}
\Pi_{\pm} M\Pi_{\pm} \textbf{Q}(V)  \Pi_{\pm}=\Pi_{\pm}  \textbf{Q}(V)  \Pi_{\pm} M \Pi_{\pm}
\end{equation}
This holds for arbitrary $V\in G_{\mathcal{A}}$. So we can conclude that for any operator $X$ in $\text{Alg}\{\textbf{Q}(G_\mathcal{A})\}$ we have
\begin{equation} \label{commute4}
\Pi_{\pm} M\Pi_{\pm} X  \Pi_{\pm}=\Pi_{\pm}  X  \Pi_{\pm} M \Pi_{\pm}
\end{equation}
According to  lemma \ref{Lem-Per},  $\text{Alg}\{\textbf{Q}(G_\mathcal{A})\}$ is equal to the span of the permutationally invariant subspace of  ${G^{\times n}_\mathcal{A}}$. Consider $V_{1}\otimes\cdots \otimes V_{n} $ an arbitrary element of  ${G^{\times n}_\mathcal{A}}$. 
Since $\mathcal{T}_{\mathcal{S}_n}(V_{1}\otimes\cdots\otimes V_{n})$ is in the permutationally invariant subspace of the span of ${G^{\times n}_\mathcal{A}}$,  it satisfies  Eq.(\ref{commute4}) and so

\begin{align} \label{commute8}
\Pi_{\pm} M\Pi_{\pm} &\left[\mathcal{T}_{\mathcal{S}_n}(V_{1}\otimes\cdots\otimes V_{n})\right]\Pi_{\pm}=
\Pi_{\pm}  \left[\mathcal{T}_{\mathcal{S}_n}(V_{1}\otimes\cdots\otimes V_{n}) \right]\Pi_{\pm} M \Pi_{\pm}
\end{align}

For arbitrary permutation $s\in {\mathcal{S}_{n}}$, $\textbf{P}(s)\Pi_{\pm}=\Pi_{\pm}\textbf{P}(s)=\eta \Pi_{\pm} $  for some $\eta\in \{\pm 1\}$. Therefore Eq.(\ref{commute8}) implies 
\begin{align*}
\Pi_{\pm} M\Pi_{\pm}  &\left[V_{1}\otimes\cdots \otimes V_{n} \right]  \Pi_{\pm}=\Pi_{\pm} \left[V_{1}\otimes\cdots \otimes V_{n} \right]\Pi_{\pm} M\Pi_{\pm}
\end{align*}
We multiply by $ [V_{1}^{\dag}\otimes\cdots \otimes V^{\dag}_{n}]\Pi_{\pm}$  on the right on both sides  of the above equality to obtain
\begin{align*}
 \Pi&_{\pm} M \Pi_{\pm}\left[(V_{1}\otimes\cdots \otimes V_{n})\Pi_{\pm} (V_{1}^{\dag}\otimes\cdots\otimes V^{\dag}_{n})\right]\Pi_{\pm}\\  &\ \ \ \ = \Pi_{\pm} \left[(V_{1}\otimes\cdots\otimes V_{n})\Pi_{\pm} M\Pi_{\pm} (V_{1}^{\dag}\otimes\cdots\otimes V^{\dag}_{n})\right]\Pi_{\pm}
\end{align*}
Now suppose on both sides we integrate over all elements of ${G^{\times n}_\mathcal{A}}$ using the Haar measure. Then the above equality implies
\begin{equation} \label{commut3}
\Pi_{\pm}  M\Pi_{\pm}[\mathcal{T}_{G_{\mathcal{A}}}^{\otimes n} (\Pi_{\pm})]\Pi_{\pm}=\Pi_{\pm}\mathcal{T}_{G_{\mathcal{A}}}^{\otimes n}  (\Pi_{\pm} M\Pi_{\pm} )\Pi_{\pm}
\end{equation}
Now we demonstrate how one can write $\Pi_{\pm}  M\Pi_{\pm}$ as $\Pi_{\pm}\mathcal{T}_{G_{\mathcal{A}}}^{\otimes n}  (\Pi_{\pm} M\Pi_{\pm} )\Pi_{\pm}$ times the inverse of $\Pi_{\pm}[\mathcal{T}_{G_{\mathcal{A}}}^{\otimes n} (\Pi_{\pm})]\Pi_{\pm}$.

Consider $\mathcal{T}_{G_{\mathcal{A}}}^{\otimes n} (\Pi_{\pm})$ and $\mathcal{T}_{G_{\mathcal{A}}}^{\otimes n}  (\Pi_{\pm} M\Pi_{\pm} )$ on the left and right hand sides of the above equality. First of all, since $\Pi_{\pm}$ and $\Pi_{\pm} M\Pi_{\pm} $ are both permutationally invariant then both  $\mathcal{T}_{G_{\mathcal{A}}}^{\otimes n} (\Pi_{\pm})$ and $\mathcal{T}_{G_{\mathcal{A}}}^{\otimes n}  (\Pi_{\pm} M\Pi_{\pm} )$ are permutationally invariant. Furthermore, since these two operators also commute with $G_\mathcal{A}^{\times n}$  then corollary \ref{lem-twr} implies that they are both in $\text{Alg}\{\textbf{Q}(G_\mathcal{A}')\}$. Second, since $\Pi_{\pm}$ commutes  with $\textbf{Q}(G_\mathcal{A}')$ in the case of $\mathcal{T}_{G_{\mathcal{A}}}^{\otimes n}  (\Pi_{\pm})$  we have another symmetry: $\mathcal{T}_{G_{\mathcal{A}}}^{\otimes n}  (\Pi_{\pm})$      commutes with $\textbf{Q}(G_\mathcal{A}')$. Considering this fact together with the fact that  $\mathcal{T}_{G_{\mathcal{A}}}^{\otimes n}  (\Pi_{\pm})$  is in $\text{Alg}\{\textbf{Q}(G_\mathcal{A}')\}$ we conclude that  $\mathcal{T}_{G_{\mathcal{A}}}^{\otimes n}  (\Pi_{\pm})$ should have the following form
\begin{equation}\label{decom-sym}
\mathcal{T}_{G_{\mathcal{A}}}^{\otimes n}  (\Pi_{\pm})=\bigoplus_{\mu} p_{\mu,\pm}\  P_{\mu}
\end{equation}
where $\mu$ labels all the irreps of $G_{\mathcal{A}}'$ which shows up in the representation $\textbf{Q}(G_\mathcal{A}')$  and $P_{\mu}$ is the projector to these irreps  and by virtue of $\mathcal{T}_{G_{\mathcal{A}}}^{\otimes n} $ being a completely positive map, all $p_{\mu,\pm}$'s are non-negative. Let $\Gamma_{\pm}$ be the set of all irreps of $G'_{\mathcal{A}}$ for which $p_{\mu},{\pm}$ is nonzero. So we can write Eq.(\ref{commut3}) as
\begin{equation}\label{commut5}
\Pi_{\pm}  M\Pi_{\pm}\left(\bigoplus_{\mu\in\Gamma_{\pm}} p_{\mu,\pm}\ P_{\mu}\right)\Pi_{\pm}=\Pi_{\pm}\left[\mathcal{T}_{G_{\mathcal{A}}}^{\otimes n}  (\Pi_{\pm} M\Pi_{\pm} )\right]\Pi_{\pm}
\end{equation}
Now consider the inverse of $\mathcal{T}_{G_{\mathcal{A}}}^{\otimes n}  (\Pi_{\pm})=\bigoplus_{\mu\in \Gamma_{\pm}} (p_{\mu,\pm}\  P_{\mu})$ on its support,  i.e., the operator
$$\bigoplus_{\mu\in \Gamma_{\pm}} p^{-1}_{\mu,\pm}\ P_{\mu}$$
By multiplying both sides of Eq.(\ref{commut5}) on the right with this operator and using the facts that
\begin{enumerate}
\item  $\Pi_{\pm}$ commutes with $\textbf{Q}(G_\mathcal{A}')$ and so it commutes with all $P_{\mu}$'s,
\item \begin{equation}\label{app:comm}
\Pi_{\pm} \left(\bigoplus_{\mu\in\Gamma_{\pm}} P_{\mu}\right)=\left(\bigoplus_{\mu\in\Gamma_{\pm}} P_{\mu}\right)\Pi_{\pm}=\Pi_{\pm}
\end{equation}
which is true because all $P_{\mu}$'s commute with $\Pi_{\pm}$ and   Eq.(\ref{decom-sym}) implies that the support of $\Pi_{\pm}$ is a subspace of the support of $\bigoplus_{\mu\in \Gamma_{\pm}} P_{\mu}$     and
\item  $\forall\mu:\ P_{\mu}\mathcal{T}_{G_{\mathcal{A}}}^{\otimes n}  (\Pi_{\pm} M\Pi_{\pm} )=\mathcal{T}_{G_{\mathcal{A}}}^{\otimes n}  (\Pi_{\pm} M\Pi_{\pm} )P_{\mu}$,
which is true because $\mathcal{T}_{G_{\mathcal{A}}}^{\otimes n}  (\Pi_{\pm} M\Pi_{\pm} )$ is  in the  span of $\textbf{Q}(G_\mathcal{A}')$
\end{enumerate}
 we get
\begin{equation}
\Pi_{\pm} M\Pi_{\pm}=\Pi_{\pm}\left( \bigoplus_{\mu\in\Gamma_{\pm}}  p^{-1}_{\mu,\pm}\ P_{\mu}\left[\mathcal{T}_{G_{\mathcal{A}}}^{\otimes n}  (\Pi_{\pm} M\Pi_{\pm} )\right]P_{\mu}\right) \Pi_{\pm}
\end{equation}
Therefore, defining $\Phi_{\pm}$ as
\begin{equation}
\Phi_{\pm}{(\cdot)}\equiv  \bigoplus_{\mu\in \Gamma_{\pm}}  p^{-1}_{\mu,\pm} \ P_{\mu}[\mathcal{T}_{G_{\mathcal{A}}}^{\otimes n}  (\Pi_{\pm} (\cdot)\Pi_{\pm} )]P_{\mu}
\end{equation}
we infer that
\begin{equation}\label{sup:phi+}
\Pi_{\pm} M\Pi_{\pm}=\Pi_{\pm}\Phi_{\pm}({M}) \Pi_{\pm}
\end{equation}
Because all $P_{\mu}$'s and $\mathcal{T}_{G_{\mathcal{A}}}^{\otimes n}  (\Pi_{\pm} (\cdot)\Pi_{\pm} )$ are in $\text{Alg}\{\textbf{Q}(G_{\mathcal{A}}')\}$, the image of $\Phi_{\pm}$ is as well. Note that since $G'_{\mathcal{A}}\subset \mathcal{A}$ this means that  the image of $\Phi_{\pm}$ is in the permutationally invariant subalgebra of $\mathcal{A}^{\otimes n}$.
Now defining $\mathcal{L}_{\pm}$ in terms of $\Phi_{\pm}$ via
$$\mathcal{L}_{\pm}{(\cdot)}\equiv \Phi_{\pm}(\cdot) + \left[\mathbb{I}^{\otimes n} - \Phi_{\pm}(\mathbb{I}^{\otimes n})\right]\mathrm{tr}(\cdot)/d^n\,$$
 we can infer the same properties for $\mathcal{L}_{\pm}$.  First note that
$$ \Phi_{\pm}(\mathbb{I}^{\otimes n})=\bigoplus_{\mu\in \Gamma_{\pm}} P_{\mu}$$
which together with Eq.(\ref{app:comm})  implies that
$\Pi_{\pm}[\mathbb{I}^{\otimes n} - \Phi_{\pm}(\mathbb{I}^{\otimes n})]\Pi_{\pm} =0$. This together with Eq.(\ref{sup:phi+}) and definition of $\mathcal{L}_{\pm}$  implies
\begin{equation}
\Pi_{\pm} M\Pi_{\pm}=\Pi_{\pm}\mathcal{L}_{\pm}({M}) \Pi_{\pm}\,,
\end{equation}
which is the third claim of theorem~\ref{pro-map-s}. Furthermore since the image of $\Phi_{\pm}$  is in the permutationally invariant subalgebra of $\mathcal{A}^{\otimes n}$ and since $\mathcal{A}$, being a von-Neumann algebra, includes identity, it follows that $\mathbb{I}^{\otimes n} - \Phi_{\pm}(\mathbb{I}^{\otimes n})$ is in the permutationally invariant subalgebra of $\mathcal{A}^{\otimes n}$.  This implies that the image of $\mathcal{L}_{\pm}(\cdot)$ is in this subalgebra, which is the second claim of theorem~\ref{pro-map-s}.

Furthermore, noting that  $\mathcal{T}_{G_{\mathcal{A}}}^{\otimes n}$  is completely positive  and the $p^{-1}_{\mu,\pm}$'s are all positive numbers we can conclude that $\Phi_{\pm}$ as a combination of completely positive maps is completely positive. This together with the fact that $\mathbb{I}^{\otimes n} - \Phi_{\pm}(\mathbb{I}^{\otimes n})$ is a projector (and so a positive operator) implies that $\mathcal{L}_{\pm}$ is completely positive. Finally,   it is straightforward to verify that $\mathcal{L}_{\pm}(\mathbb{I}^{\otimes n}) =\mathbb{I}^{\otimes n}$, so that it is unital which proves  the first claim of theorem~\ref{pro-map-s}.
\end{proof}

\section{Global symmetry with respect to non-gauge groups} \label{app-count-ex0}

We demonstrate here that a group that does not have the gauge property does not yield a dual reductive pair in the manner specified by theorems~\ref{cor-pairs}. That is, we present an example for a non-gauge group  $H\subseteq \text{U}(d)$ for which  the commutant of the algebra spanned by $\textbf{Q}(H)$ in $\text{End}((\mathbb{C}^d)^{\otimes n})$ is larger than the algebra  spanned by $\langle (H')^{\times n},  \textbf{P}(\mathcal{S}_{n})\rangle$. (Recall that for any group $H\subseteq \text{U}(d)$ it always holds that $\text{Alg}\{ (H')^{\times n},  \textbf{P}(\mathcal{S}_{n})\}\subseteq \text{Comm}\{\textbf{Q}(H)\}$).

As a simple example, consider $d=3$, $n=2$ where the group $H$ is the $j=1$ irreducible representation of $\text{SU}(2)$ which is a subgroup of $\text{U}(3)$. This group is not a gauge group: Schur's lemma implies that $H'=\{e^{i\theta}\mathbb{I}\}$ where $\theta \in (0,2\pi]$ and $\mathbb{I}$ is identity on $\mathbb{C}^{3}$ and so $H''=\text{U}(3)\neq H$.

Since $H'=\{e^{i\theta}\mathbb{I}\}$ then
\begin{align*}
\text{Alg}\{ (H')^{\times 2},  \textbf{P}(\mathcal{S}_{2})\}&=\text{Alg}\{  \textbf{P}(\mathcal{S}_{2})\}\\&=\{c_{+}\Pi_{+}+c_{-}\Pi_{-}:c_{\pm}\in \mathbb{C}\}
\end{align*}
where $\Pi_{+}$ and $\Pi_{-}$ are respectively the projectors to the symmetric and anti-symmetric subspace of $(\mathbb{C}^{3})^{\otimes 2}$. On the other hand, one can easily see that $\text{Comm}\{\textbf{Q}(H)\}$, the algebra of operators  commuting with $\textbf{Q}(H)$, is
$$\{c_{0}P_{j=0}+c_{1}P_{j=1}+c_{2}P_{j=2}, c_{0,1,2} \in \mathbb{C}\}$$
where $P_{j}$ is the projector to the subspace of $(\mathbb{C}^{3})^{\otimes 2}$ with total angular momentum $j$. Therefore the algebra of operators  commuting with $\textbf{Q}(H)$ is larger than  $\text{Alg}\{ (H')^{\times 2},  \textbf{P}(\mathcal{S}_{2})\}$\footnote{One can show that $P_{j=1}=\Pi_{-}$, in other words in this space any anti-symmetric state has the total angular momentum $j=1$ and any state with total angular momentum $j=1$ is anti-symmetric. This  implies that $P_{j=0}+P_{j=2}=\Pi_{+}$.}.

\section{Lack of duality outside the symmetric and antisymmetric subspaces}
\label{app-count-ex}

Here, we show that the restriction to the symmetric and anti-symmetric subspaces plays an essential role in theorem \ref{Thm-sym}  and the other results of section \ref{sec:restric}. Recall that theorem \ref{Thm-sym}   implies that for symmetric and anti-symmetric subspaces of $(\mathbb{C}^{d})^{\otimes n}$, denoted by $\left[(\mathbb{C}^{d})^{\otimes n} \right]_{\pm}$, and for any gauge group $G\subseteq \text{U}(d)$ it holds that
\begin{align*}
\text{Alg}&\{\Pi_{\pm}\textbf{Q}(G')\Pi_{\pm}\}=\text{Comm}\{\Pi_{\pm}\textbf{Q}(G)\Pi_{\pm}\}
\end{align*}
Let  $\lambda$ labels different irreps of the permutation group and  $\Pi_{\lambda}$ be the projector to the  subspace $\left[(\mathbb{C}^{d})^{\otimes n}\right]_{\lambda}$ of $(\mathbb{C}^{d})^{\otimes n}$ in which the representation $\textbf{P}(\mathcal{S}_{n})$ acts like the irrep  $\lambda$ of $\mathcal{S}_{n}$. The goal is to see whether in theorem \ref{Thm-sym}, or equivalently in the above equation, one can substitute the projection to the symmetric (anti-symmetric) subspace by the projection to an arbitrary irrep $\lambda$  of $\mathcal{S}_{n}$.

Clearly for any other irrep $\lambda$ of $\mathcal{S}_{n}$ it holds that
\begin{align*}
\text{Alg}&\{\Pi_{\lambda}\textbf{Q}(G')\Pi_{\lambda}\}\subseteq \text{Comm}\{\Pi_{\lambda}\textbf{Q}(G)\Pi_{\lambda}\}
\end{align*}
where by $\text{Comm}\{\Pi_{\lambda}\textbf{Q}(G)\Pi_{\lambda}\}$ we mean the commutant of $\Pi_{\lambda}\textbf{Q}(G)\Pi_{\lambda}$ in $\text{End}(\left[(\mathbb{C}^{d})^{\otimes n} \right]_{\lambda})$.

However, for subspaces other than symmetric and anti-symmetric subspaces, the elements of $\text{Comm}\{\Pi_{\lambda}\textbf{Q}(G)\Pi_{\lambda}\}$ are not necessarily permutationally invariant while $\text{Alg}\{\Pi_{\lambda}\textbf{Q}(G')\Pi_{\lambda}\}$ is permutationally invariant. So to generalize theorem \ref{Thm-sym} to other representations of the permutation group one should make an extra assumption to guarantee that  the elements of both sides are  permutationally  invariant.  Then one may expect the following to be true:
 a natural generalization of theorem \ref{Thm-sym} will be 
\begin{align*}
\text{Alg}&\{\Pi_{\lambda}\textbf{Q}(G')\Pi_{\lambda}\}= \text{Comm}\{\Pi_{\pm}\textbf{Q}(G)\Pi_{\pm}\}\ \cap \ \text{Comm}\{\Pi_{\lambda}\textbf{P}(\mathcal{S}_{n})\Pi_{\lambda}\}
\end{align*}
where $\Pi_\lambda$ is the projector to the subspace of $(\mathbb{C}^{d})^{\otimes n}$ which carries irrep $\lambda$ of $\mathcal{S}_n$. From theorem \ref{Thm-sym} we know that for the special case of 1-d representations of $\mathcal{S}_n$, i.e. for symmetric and anti-symmetric subspaces, the above equality hold.   Here, we  build an explicit counter-example to this equality for other irreps of $\mathcal{S}_n$.

 First, notice that the   action of $\textbf{Q}(G)$, $\textbf{Q}(G')$ and $\textbf{P}(\mathcal{S}_{n})$ on $(\mathbb{C}^{d})^{\otimes n}$ all commute with each other. This implies that for irrep $\lambda$ of $\mathcal{S}_{n}$ the subspace $\left[( {\mathbb{C}^{d} })^{\otimes n}\right]_{\lambda}$ can be decomposed as
\begin{equation} \label{count-ex}
\left[({\mathbb{C}^{d}})^{\otimes n} \right]_{\lambda}\cong\left(\bigoplus_{\mu,\nu} \mathcal{M}_{\mu}\otimes\mathcal{N}_{\nu}\otimes  \mathbb{C}^{m_{\mu,\nu}} \right)\otimes\mathcal{K}_{\lambda}
\end{equation}
where  $\mu$ labels irreps of $G$ and $\nu$ labels irreps of $G'$  and  furthermore $\textbf{Q}(G)$, $\textbf{Q}(G')$ and  $\textbf{P}(\mathcal{S}_{n})$ act nontrivially only on  $\mathcal{M}_{\mu}$, $\mathcal{N}_{\nu}$ and $\mathcal{K}_{\lambda}$ respectively. Note that any permutationally invariant operator should be proportional to identity on the subsystem $\mathcal{K}_{\lambda}$.

Now to build the counterexample we find two gauge groups  $G$ and $G'$ for which there is no one-to-one relation between the irreps of $G$ and $G'$ which show up in   $\left[({\mathbb{C}^{d}})^{\otimes n} \right]_{\lambda}$. In other words, we find an  example in which there is some irrep $\mu$ of $G$ for which $m_{\mu,\nu}$ is nonzero for more than one $\nu$ (irrep of $G'$).  This in turn will imply that there exist  permutationally invariant operators  $\Pi_{\lambda}M\Pi_{\lambda}$ which commute with  $\Pi_{\lambda}\textbf{Q}(G)\Pi_{\lambda}$ but are not block diagonal between  $\mathcal{N}_{\nu_{1}}$ and $\mathcal{N}_{\nu_{2}}$ for two different irrep $\nu_{1}$ and $\nu_{2}$ of $G'$. This implies that   $\Pi_{\lambda}M\Pi_{\lambda}$ cannot be in $\text{Alg}\{\Pi_{\lambda}\textbf{Q}(G')\Pi_{\lambda}\}$.



Note that from theorem  \ref{Thm-sym}  we know that in the specific case where irrep $\lambda$ is a 1-d representations of $\mathcal{S}_{n}$ the conjecture holds.  So to find a counter-example we need to look at $n>2$ where the permutation group can have irreps  other than the symmetric and anti-symmetric. In the following, we present a counter-example in the case of $n=3$. In this case the permutation group $\mathcal{S}_{3}$ has a two dimensional irrep denoted by $\lambda_{2}$.

\subsubsection{{Counter-example}}

Consider the Hilbert space $\mathbb{C}^{4} \cong \mathcal{H}_{L}\otimes \mathcal{H}_{R}$  where $\mathcal{H}_{L}$ and $\mathcal{H}_{R}$ are both isomorphic to $\mathbb{C}^{2}$ . Suppose $G=\{V\otimes I: V\in U(2)\}$, i.e. the group of all unitaries  which act trivially on $\mathcal{H}_{R}$. Clearly $G'$ is the set of all unitaries acting trivially on  $\mathcal{H}_{L}$ and so $G=(G')'$.  Note that both $G$ and $G'$ are isomorphic to $\text{U}(2)$. So in decomposition \ref{count-ex} we can label irreps of   $G$ and $G'$ with irreps of $\text{U}(2)$.

Using decomposition  $(\mathbb{C}^4)^{\otimes 3}\cong\mathcal{H}_{L}^{\otimes 3}\otimes \mathcal{H}_{R}^{\otimes 3} $ we can think of the collective representation of $G$ and $G'$  on $(\mathbb{C}^4)^{\otimes 3}$ as
$$V\otimes \mathbb{I}_{R}\in G\rightarrow \textbf{Q}(V\otimes \mathbb{I}_{R})=\textbf{Q}_{L}(V)\otimes \mathbb{I}^{\otimes 3}_{R}$$
and
$$ \mathbb{I}_{L} \otimes V \in G'\rightarrow  \textbf{Q}(\mathbb{I}_{L}\otimes V)= \mathbb{I}^{\otimes 3}_{L} \otimes \textbf{Q}_{R}(V)$$
respectively where $V\rightarrow \textbf{Q}_{L/R}(V)=V^{\otimes 3}$  can be thought as the collective representation of $\text{U}(2)$ on  $\mathcal{H}_{L/R}^{\otimes 3}$, $\mathbb{I}_{L/R}$ is the identity operator on $\mathcal{H}_{L/R}$ and so $\mathbb{I}^{\otimes 3}_{L/R}$ is the identity operator on $\mathcal{H}^{\otimes 3}_{L/R}$ .

Similarly we can think of the canonical representation of  $\mathcal{S}_{3}$ on $(\mathbb{C}^4)^{\otimes 3}$ as
$$\textbf{P}(s\in \mathcal{S}_{3})=\textbf{P}_{L}(s)\otimes \textbf{P}_{R}(s)$$
where $\textbf{P}_{L}(\mathcal{S}_{3})$ and $\textbf{P}_{R}(\mathcal{S}_{3})$ are the canonical representation of $\mathcal{S}_{3}$ on  $\mathcal{H}_{L}^{\otimes 3}$ and $\mathcal{H}_{R}^{\otimes 3}$ respectively.

Now according to Schur-Weyl duality there is a one to one relation between the irreps of $\text{U}(2)$ which show up in representation $\textbf{Q}_{L/R}(\text{U}(2))$ on   $(\mathcal{H}_{L/R})^{\otimes 3}$ and irreps of $\mathcal{S}_{3}$ which show up in representation $\textbf{P}_{L/R}(\mathcal{S}_{3})$ on   $(\mathcal{H}_{L/R})^{\otimes 3}$. Note that under the action of $\mathcal{S}_{3}$, $\mathcal{H}_{L/R}^{\otimes 3}$ decomposes as
$$\mathcal{H}_{L/R}^{\otimes 3}\cong   \left[\mathcal{H}_{L/R}^{\otimes 3}\right]_{{+}} \oplus \left[\mathcal{H}_{L/R}^{\otimes 3}\right]_{{\lambda_{2}}}$$
(the anti-symmetric irrep of $\mathcal{S}_{3}$ does not exist in this representation.)
Now Schur-Weyl duality implies that in the representation $\textbf{Q}_{L/R}(\text{U}(2))$  of $\text{U}(2)$ only one irrep of $\text{U}(2)$  shows up in the subspace    $\left[\mathcal{H}_{L/R}^{\otimes 3}\right]_{{+}}$ and a different one will show up in  $\left[\mathcal{H}_{L/R}^{\otimes 3}\right]_{{\lambda_{2}}}$.

This implies that there is a one-to-one relation between irreps of $\text{U}(2)$ which show up in representation $\textbf{Q}_{L}(\text{U}(2))\otimes I_{R}$ in the total Hilbert space $(\mathbb{C}^{4})^{\otimes 3}$  and irreps of $\mathcal{S}_{3}$ which show up in the representation $\textbf{P}_{L}(\mathcal{S}_{3})\otimes I_{R}$ in the total Hilbert space $(\mathbb{C}^{4})^{\otimes 3}$ (though $(\textbf{P}_{L}(\mathcal{S}_{3})\otimes I_{R}) \times (\textbf{Q}_{L}(\text{U}(2))\otimes I_{R}) $ is no longer multiplicity-free).  In other words, in representation $\textbf{Q}_{L}(\text{U}(2))\otimes I_{R}$ only one irrep of $\text{U}(2)$ shows up in the subspace

$$\left[\mathcal{H}_{L}^{\otimes 3}\right]_{{+}} \otimes \left( \left[\mathcal{H}_{R}^{\otimes 3}\right]_{{+}} \oplus \left[\mathcal{H}_{R}^{\otimes 3}\right]_{{\lambda_{2}}}  \right) $$
and a different one shows up in
$$\left[\mathcal{H}_{L}^{\otimes 3}\right]_{{\lambda_{2}}} \otimes \left( \left[\mathcal{H}_{R}^{\otimes 3}\right]_{{+}} \oplus \left[\mathcal{H}_{R}^{\otimes 3}\right]_{{\lambda_{2}}}  \right) $$

Similarly, under $I_{L}\otimes \textbf{Q}_{R}(\text{U}(2))$ only one irrep of $\text{U}(2)$ shows up in the subspace

$$\left( \left[\mathcal{H}_{L}^{\otimes 3}\right]_{{+}} \oplus \left[\mathcal{H}_{L}^{\otimes 3}\right]_{{\lambda_{2}}}  \right) \otimes \left[\mathcal{H}_{R}^{\otimes 3}\right]_{{+}} $$
and a different one shows up in
$$\left( \left[\mathcal{H}_{L}^{\otimes 3}\right]_{{+}} \oplus \left[\mathcal{H}_{L}^{\otimes 3}\right]_{{\lambda_{2}}}  \right) \otimes \left[\mathcal{H}_{R}^{\otimes 3}\right]_{{\lambda_{2}}} $$


Now we find which parts of these subspaces of  $(\mathbb{C}^{4})^{\otimes 3}$ form $\left[(\mathbb{C}^{4})^{\otimes 3}\right]_{\lambda_{2}}$ and we show that in this subspace there is no one-to-one relation between irreps of $\text{U}(2)$ which show up in the representation  $\textbf{Q}_{L}(\text{U}(2))\otimes I_{R}$ and irreps of $\text{U}(2)$  which show up in the representation  $I_{L}\otimes \textbf{Q}_{R}(\text{U}(2))$.

To see this consider the total Hilbert space
\begin{align*}
(\mathbb{C}^{4})^{\otimes 3} \cong &\ \\
&\left(\left[\mathcal{H}_{L}^{\otimes 3}\right]_{{+}} \otimes \left[\mathcal{H}_{R}^{\otimes 3}\right]_{{+}} \right) \\
  \oplus&\left(\left[\mathcal{H}_{L}^{\otimes 3}\right]_{{\lambda_{2}}} \otimes \left[\mathcal{H}_{R}^{\otimes 3}\right]_{{+}}\right) \oplus \left(\left[\mathcal{H}_{L}^{\otimes 3}\right]_{{+}} \otimes  \left[\mathcal{H}_{R}^{\otimes 3}\right]_{{\lambda_{2}}}\right)\\
 \oplus&\left(\left[\mathcal{H}_{L}^{\otimes 3}\right]_{{\lambda_{2}}} \otimes \left[\mathcal{H}_{R}^{\otimes 3}\right]_{{\lambda_{2}}}\right)
\end{align*}
and $\textbf{P}(\mathcal{S}_{n})$ the canonical representation of $\mathcal{S}_{3}$ on it. Then,  $\textbf{P}(s\in\mathcal{S}_{n})=\textbf{P}_{L}(s)\otimes \textbf{P}_{R}(s)$ implies that: i) the subspace in the first line is in the symmetric subspace of  $(\mathbb{C}^{4})^{\otimes 3}$, i.e. in $\left[(\mathbb{C}^{4})^{\otimes 3}\right]_{+}$ (and so we do not care about it), ii) the subspace in the second line is in $\left[(\mathbb{C}^{4})^{\otimes 3}\right]_{\lambda_{2}}$ and iii) a nonzero subspace of the subspace in the third line is also  in $\left[(\mathbb{C}^{4})^{\otimes 3}\right]_{\lambda_{2}}$. To see this  we note that the action of $\mathcal{S}_{3}$ on  $ \left[\mathcal{H}_{L}^{\otimes 3}\right]_{{\lambda_{2}}} \otimes \left[\mathcal{H}_{R}^{\otimes 3}\right]_{{\lambda_{2}}} $ is non-commutative and since the only irrep of $\mathcal{S}_{3}$ in which the representation of $\mathcal{S}_{3}$ is non-commutative is $\lambda_{2}$, therefore by decomposing the action of   $\textbf{P}(\mathcal{S}_{n})$  on $ \left[\mathcal{H}_{L}^{\otimes 3}\right]_{{\lambda_{2}}} \otimes \left[\mathcal{H}_{R}^{\otimes 3}\right]_{{\lambda_{2}}} $ to irreps one should find a $\lambda_{2}$ irrep.

This implies that in $\left[(\mathbb{C}^{4})^{\otimes 3}\right]_{\lambda_{2}}$, $\left[\mathcal{H}_{L}^{\otimes 3}\right]_{{\lambda_{2}}}$ couples to both $\left[\mathcal{H}_{R}^{\otimes 3}\right]_{{+}}$ and $\left[\mathcal{H}_{R}^{\otimes 3}\right]_{{\lambda_{2}}}$.  This in turn will imply that there is no one-to-one relation between the irreps of $\text{U}(2)$ which show up in representations $\textbf{Q}_{L}(\text{U}(2))\otimes I_{R}$ and $I_{L}\otimes \textbf{Q}_{R}(\text{U}(2))$ in the subspace $\left[(\mathbb{C}^{4})^{\otimes 3}\right]_{\lambda_{2}}$.

Therefore, this example is a counter-example to the above conjecture.

\section{Cost function} \label{sec:costfunctions}
Here, we present the average cost function as an example  of a common figure of merit and we show that it can be accommodated within the framework we introduced in section \ref{sec:estimation}.


Suppose that ${\mathfrak{s}}(\rho)$ is a parameter to be estimated.  As described earlier, any estimation scheme, consisting of a choice of measurement and a post-processing of its outcome, can be described by a POVM $M:\sigma(\Omega)\rightarrow\text{End}((\mathbb{C}^d)^{\otimes n})$.  In this case, the outcome space $\Omega$ must correspond to the range of ${\mathfrak{s}}$. In the following we use the differential notation $M(\textrm{d}{S_{\text{est}}})$ to show POVM so that for any interval $\Delta\subseteq \mathbb{R}$
\begin{equation}
M({\Delta})=\int_{{\Delta}} M(\textrm{d}{S_{\text{est}}})
\end{equation}
Therefore, using the strategy $M$ the conditional probability of outcomes for state $\rho$ will be 
\begin{equation}
q_M(\textrm{d}{S_{\text{est}}}|\rho) = \textrm{tr}(M(\textrm{d}{S_{\text{est}}})\rho).
\end{equation}
Now suppose that the performance of the estimation scheme will be judged by a cost function (we follow Ref.~\cite{Chiribella}).  The most basic case would be a function of the form $C({S}_{\text{est}}, {\mathfrak{s}}(\rho))$, which represents the cost of estimating ${S}_{\text{est}}$ when the true value of the parameters is ${\mathfrak{s}}(\rho)$.   

The average cost of the estimation strategy $M$ for the state $\rho$ is
\begin{equation} \label{eq:jj1}
\overline{C}_{M}(\rho) \equiv \int  C({S}_{\text{est}}, {\mathfrak{s}}(\rho))\ \  q_{M}( dS_{\text{est}}|\rho) \; 
\end{equation}
and the expected cost of the estimation strategy $M$ given the prior density $p$ is
\begin{equation}\label{eq:jj2}
\langle C \rangle_{M} \equiv \int d\rho\; p(\rho)\ \overline{C}_{M}(\rho).
\end{equation}
Therefore, 
\begin{align*}
\langle C \rangle_{M}&= 
\int d\rho\ p(\rho) \int  C({S}_{\text{est}}, {\mathfrak{s}}(\rho))\ \ q_{M}( dS_{\text{est}}|\rho)=\int\int p(S)\ C({S}_{\text{est}}, S)\ \  q_{M}(dS_{\text{est}}|dS)
\end{align*}
where $S$ is the random variable defined by  the function $\mathfrak{s}$ acting on the random state $\rho$ and $p(S)$ is the density of random variable $S$ relative to $dS$.
So this figure of merit is clearly a functional of $q_{M}(dS_{\text{est}}|dS)$ and hence the condition of corollary~\ref{corollary:main} applies.  It follows that if the problem has gauge symmetry $G_{\mathcal{A}}$ and satisfies the assumptions of theorem~\ref{Thm-main-generalized} (or theorem~\ref{Thm-main-chann}), then the optimal estimation can be achieved with POVMs restricted to $\mathcal{A}^{\otimes n}$.


\begin{thebibliography}{000}

\bibitem{Goodman-Wallach}  R. Goodman and N. R. Wallach. \emph{Representations and Invariants of the Classical Groups} Cambridge University Press, (1998).
\bibitem{Aram} A. Harrow, \emph{Applications of coherent classical communication and the Schur transform to quantum information theory}, PhD thesis, MIT, (2005), Arxiv preperint arXiv:quant-ph/0512255.
\bibitem{HHH} A. Hayashi, M. Horibe, T. Hashimoto, Phys. Rev. A \textbf{73}, 062322 (2006).
\bibitem{Holevo} A. Holevo, \emph{Probabilistic and Statistical Aspects of Quantum Theory}, Scuola Normale Superiore (Monographs),  (2011).
\bibitem{Chiribella} G. Chiribella, \textit{Optimal estimation of quantum signals in the presence of symmetry}, PhD thesis, University of Pavia, Pavia, Italy (2006).
\bibitem{Marvian} I. Marvian, and R. W. Spekkens,  New J. Phys. 15, 033001 (2013), arXiv:quant-ph/1105.1816.
\bibitem{Agr-Marv-Spek-2011} I. Marvian and R. W. Spekkens, Under preparation.
\bibitem{Noiseless1} P. Zanardi, and M. Rasetti, Phys. Rev. Lett. {79}, 3306 (1997);
\bibitem{Noiseless1'} P. Zanardi, Phys. Rev. A {63}, 012301 (2000).
\bibitem{Noiseless2} E. Knill, R. Laflamme, and L. Viola, Phys. Rev. Lett. {84}, 2525 (2000)
\bibitem{Noiseless2'} J. Kempe, D. Bacon, D. A. Lidar, and K. B. Whaley, Phys. Rev. A 63, 042307 (2001).
\bibitem{BRS03} S. D. Bartlett, T. Rudolph and R. W. Spekkens, Phys. Rev. Lett.\textbf{91}, 027901 (2003).
\bibitem{Zyc}  K. Zyczkowski, and H. J. Sommers, J. Phys. A 34,  7111 (2001), quant-ph/0012101.

\bibitem{Helstrom} C. W. Helstrom, \emph{Quantum detection and estimation theory}, Academic press (1976).
\bibitem{Keyl-Werner}  M. Keyl and R. F. Werner, Phys. Rev. A 64, 052311 (2001).

\end{thebibliography}
\end{document}